\newcommand{\setappendix}{Appendix~\thesection:~~}
\newcommand{\setsection}{\thesection~~}
\titleformat{\section}{\bfseries\LARGE}{%
	\ifnum\pdfstrcmp{\@currenvir}{appendices}=0
	\setappendix
	\else
	\setsection
\fi}{0em}{}
\newcommand{\<}{\langle}
\renewcommand{\>}{\rangle}
\newcommand{\sG}{\mathcal{G}}
\newcommand{\sB}{\mathcal{B}}
\newcommand{\sH}{\mathcal{H}}
\newcommand{\sR}{\mathcal{R}}
\newcommand{\sX}{\mathcal{X}}
\newcommand{\sfx}{\mathsf{x}}
\newcommand{\sfc}{\mathsf{c}}
\newcommand{\sfh}{\mathsf{h}}
\newcommand{\sfT}{\mathsf{T}}
\newtheorem{theorem}{Theorem}[section]
\newtheorem{lemma}[theorem]{\textbf{Lemma}}
\newtheorem{remark}[theorem]{\textbf{Remark}}
\newtheorem{proposition}[theorem]{\textbf{Proposition}}
\newtheorem{definition}[theorem]{\textbf{Definition}}
\DeclareMathAlphabet{\varmathbb}{U}{bbold}{m}{n}
\begin{document}
\title{Adaptive path interpolation method for sparse systems:\\ Application to a censored block model}
\author{Jean Barbier$^{\star*}$, Chun Lam Chan$^{\dagger}$ and Nicolas Macris$^{\dagger}$}
\date{}
\maketitle
{\let\thefootnote\relax\footnote{
\hspace{-18.5pt}
$\star$ Quantitative Life Sciences, International Center for Theoretical Physics, Trieste, Italy.\\
$*$ Statistical Physics Laboratory, \'Ecole Normale Sup\'erieure, Paris, France.\\
$\dagger$ Communication Theory Laboratory, \'Ecole Polytechnique F\'ed\'erale de Lausanne, Switzerland.
}}\setcounter{footnote}{0}
\begin{abstract}
Recently a new adaptive path interpolation method has been developed as a simple and versatile scheme to calculate exactly the asymptotic 
mutual information  of Bayesian inference problems defined on {\it dense} factor graphs. These include random linear and generalized estimation, sparse superposition codes, 
or low-rank matrix and tensor estimation. For all these systems, the adaptive interpolation method directly proves that the replica symmetric prediction is exact, in a simple and unified manner. 
When the underlying factor graph of the inference problem is {\it sparse} the replica prediction is considerably 
more complicated, and rigorous results are often lacking or obtained by rather complicated methods. In this 
work we show how to extend the adaptive path interpolation method to sparse systems. We concentrate on a Censored Block Model, where hidden 
variables are measured through a binary erasure channel, for which we fully prove the replica prediction. 
\end{abstract}
{
	\singlespacing
	\hypersetup{linkcolor=black}
	\tableofcontents
}
\setcounter{tocdepth}{2}
\section{Introduction}
Much progress has been achieved recently in Bayesian inference of high dimensional problems. It has been possible to develop 
rigorous methods in order to derive exact ``single letter'' variational formulas for the mutual information 
in the asymptotic limit of the number of variables tending to infinity, when the prior and all hyperparameters of the problem are assumed to be known (this is referred to as the Bayes-optimal setting). 
Such formulas have often been first conjectured on the basis of the replica and cavity methods of statistical mechanics of disordered spin systems 
and are also known as ``replica symmetric'' formulas \cite{MezardParisi87b,MezardMontanari09}. Examples where {\it full} proofs have been achieved are 
random linear estimation and compressed sensing \cite{barbier_allerton_RLE,barbier_ieee_replicaCS,ReevesP16}, learning for single layer networks
\cite{barbier2017phase}, generalized estimation in multi-layer settings \cite{2018arXiv180509785G,2018arXiv180605451A}, or low-rank matrix and tensor estimation \cite{XXT,2017arXiv170200473M,2017arXiv170108010L,2016arXiv161103888L,2017arXiv170910368B}. Invariably, the Guerra-Toninelli interpolation method \cite{guerra2002thermodynamic} has been used to derive one-sided bounds (we note that \cite{ReevesP16} is exceptional and does not seem 
to rely directly on the same interpolation). For the converse bounds typically other ideas have usually been 
necessary, such as 
spatial coupling \cite{XXT,barbier_allerton_RLE,barbier_ieee_replicaCS} or the rigorous version of the cavity method  \cite{2017arXiv170200473M,2017arXiv170108010L,2016arXiv161103888L}. Recently two of us introduced 
a new interpolation scheme, called {\it adaptive path interpolation method}, that allows to derive the replica symmetric formulas in a more straightforward and unified manner 
\cite{BarbierM17a}, \cite{BarbierMacrisCW2019}. The new method is quite generic once the mean field solution has been identified and 
is directly applicable when the concentration 
of the ``overlap'' can be proved. Roughly speaking in Bayesian inference the ``overlap'' is the inner product between the random vector to be inferred and the ground truth vector. 
For models on dense graphs this concentration follows from variants of Ghirlanda-Guerra identities \cite{GG} adapted to Bayesian inference combined with the so-called Nishimori identities.

The successes of the adaptive interpolation method have so far been limited to inference models with a dense underlying factor graph. 
It is therefore desirable to see to what extent the method can be developed when the factor graph is instead sparse. 
Typical examples of such systems are Low-Density Parity-Check codes, Low-Density Generator-Matrix (LDGM) codes, or the Stochastic and Censored Block Models. 
It is fair to say that the replica symmetric formulas for the mutual information is much more complicated in such models. 
Indeed, besides the measurements (or channel outputs): (i) the sparse graph is also random; (ii) the single letter variational problem involves a functional 
over a set of probability distributions (instead of scalars as in the dense graph case). Existing rigorous 
derivations of the replica formulas have so far been achieved using a combination of the interpolation method 
(first developed by \cite{FranzLeone} for sparse models) and spatial coupling \cite{MacrisGiurgiuUrbanke2016} or the rigorous 
version of the cavity method \cite{aizenman2003extended,Coja-Oghlan:2017}. 

In this work we 
consider a simple version of the Censored Block Model \cite{6949658,abbe2013conditional,DBLP:journals/corr/ChinRV15,7282642}, for which we 
fully develop the adaptive interpolation method. We believe that this constitutes a first step towards an 
analysis of more complicated models via this method. A summary of the present analysis was presented in \cite{2018arXiv180605121B}.

In the Censored Block Model one has a set of $n$ hidden binary variables. One observes $\alpha n$ products of random $K$-tuples, where 
$\alpha > 0$ is called the fraction of measurements, through a noisy channel. The goal is to reconstruct an estimate of the hidden variables from the noisy observations. 
There are other interpretations of this model. For example, it can be interpreted as a Low-Density Generator-Matrix code ensemble, with design communication rate $1/\alpha$, on a 
factor graph with degree $K$ factor nodes and variable nodes with Poisson degrees (when $n\to +\infty$). Another possible interpretation is, 
as a model of statistical mechanics, namely an Ising model on a sparse random graph with $K$-spin interactions. 
The Censored Block Model has been discussed when the measurement channel is a Binary Symmetric Channel in \cite{6949658} and the 
replica formula proven in this case \cite{Coja-Oghlan:2017}. Here, we consider a simpler situation where the measurement channel is the Binary Erasure 
Channel (BEC) for which the adaptive interpolation method can be completely developed. As we will see this method requires concentration results for a whole set of suitable ``overlaps'' 
and requires new ideas in the case of sparse graphs.
Here, we solve this issue for the BEC, and it is currently the only aspect of the method that is missing for extending our analysis to other channels.

The paper is organized as follows. In Section \ref{sec:2} we give a precise formulation of the model and state the main result of this paper (Theorem \ref{RSconj}). 
In Section \ref{prelimtool} we review 
two important tools used throughout our analysis, namely the Nishimori identities and the Griffiths-Kelly-Sherman inequalities. The adaptive interpolation method for the sparse graph models
is formulated in Section \ref{interp} and the core of the proof of Theorem \ref{RSconj} is also developed. This section contains the main new technical ideas of this paper. Overlap concentration 
is proved in Section \ref{sec:overlap} and \ref{sec:thermal-and-quenched}. A series of more technical results are found in Section \ref{sec:proof} and in the appendices.

%
%

%
%
%
%
%
\section{Setting and main result}
\label{sec:2}
\subsection{Censored Block Model}
\label{sec:partI}

We shall denote binary variables by $\sigma_i\in \{-1, +1\}$, $i=1,\ldots, n$ and vectors of such variables by $\underline{\sigma} = (\sigma_1, \ldots, \sigma_n) \in \{ -1, +1\}^{n}$. 
Subsets $S\subset \{1, \ldots, n\}$ with at least two elements are always denoted by capital letters. For the product of binary variables 
in a subset $S$ we use the shorthand notation $\sigma_S \equiv \prod_{i\in S} \sigma_i$. If there is a possible confusion between small and capital 
letter subscripts we occasionally use more specific notations. Below, the integer $K \geq 2$ and the fraction $\alpha \in \mathbb{R}_+$ are fixed independent of $n$.

In the Censored Block Model considered in this paper $n$ {\it hidden} binary variables $\underline{\sigma}^0 = (\sigma_1^0, \ldots, \sigma_n^0)$ are i.i.d. uniform, i.e. drawn 
independently according to a ${\rm Ber}(1/2)$ prior $P_0(\sigma_i^0) = \frac12 \delta_{\sigma_i^0, +1} + \frac12 \delta_{\sigma_i^0, -1}$. A {\it noiseless measurement} consists
in a product $\sigma_{a_1}^0 \sigma_{a_2}^0 \ldots \sigma_{a_K}^0$ of a $K$-tuple
of variables drawn uniformly at random. The $K$-tuple is identified 
with a subset $A \equiv \{a_1, \ldots, a_K\}\subset \{1, \ldots, n\}$ and we set
$\sigma_A^0 \equiv \sigma_{a_1}^0 \sigma_{a_2}^0 \ldots \sigma_{a_K}^0$. Of course $\sigma_A^0 = \pm 1$.
The {\it true observations} $J_A\in \mathbb{R}$ are noisy versions of these products 
obtained through a binary input memoryless channel described by some transition probability
$Q(J_A\vert \sigma_A^0)$.
For large $n$ the total number of observations $m$ asymptotically follows a 
Poisson distribution with mean $\alpha n$, i.e., $m \sim \mathrm{Poi}(\alpha n)$. 
We shall also index the observations as $A= 1,\ldots, m$. 

Let us now describe the Bayesian setting used here to determine the information theoretic limits for reconstructing the hidden variables.
From the Bayes rule we have that the posterior given the observations is
\begin{align*}
P(\underline{\sigma}|\underline{J})
	= \frac{\prod_{i=1}^n P_0(\sigma_i) \prod_{A=1}^m Q(J_A | \sigma_A)}
	{\sum_{\underline{\sigma}\in \{-1, +1\}^n} \prod_{i=1}^n P_0(\sigma_i)\prod_{A=1}^m Q(J_A | \sigma_A)}.
\end{align*}
Dividing both the numerator and denominator by $\prod_{A=1}^m Q(J_A | \sigma_A=+1)$, the posterior $P(\underline{\sigma}|\underline{J})$ can be rewritten as
\begin{align}\label{eq:LDGM:Gibbs}
P(\underline{\sigma}|\underline{\tilde{J}}) = \frac{1}{\cal Z}  \exp \sum_{A=1}^{m} \tilde{J}_A (\sigma_A - 1) , 
\end{align}
where
\begin{align*}
\tilde{J}_A 
	& \equiv \frac{1}{2} \ln \frac{Q(J_A |  +1)}{Q(J_A | -1)}, \\
{\cal Z} 
	& \equiv \sum_{\underline{\sigma}\in \{-1, +1\}^n}\exp\sum_{A=1}^{m} \tilde{J}_A (\sigma_A - 1)
.
\end{align*}
We will use the language and notations of statistical mechanics. The normalization ${\cal Z}$ shall be called the partition function. The bipartite factor graph $\sG$ underlying \eqref{eq:LDGM:Gibbs} contains variable nodes $i = 1, \ldots, n$ and constraint (or factor) nodes $A = 1, \ldots, m$. 
Each variable node $i$ ``carries'' the binary variable $\sigma_i$ and each constraint node $A$ ``carries'' the half-log-likelihood ratio $\tilde{J}_A$ and uniformly 
connects to $K$ variable nodes $a_1, \ldots, a_K$. 
As said before, we identify $A\equiv\{a_1, \ldots, a_K\}$. Distribution \eqref{eq:LDGM:Gibbs} can be interpreted as the Gibbs distribution of a {\it random} spin system (or spin glass). The expectation of a quantity 
$A(\underline \sigma)$
with respect to the posterior \eqref{eq:LDGM:Gibbs} will 
be denoted by a Gibbs bracket
$$
\<A(\underline \sigma)\> \equiv \sum_{\sigma\in \{-1, +1\}^n} A(\underline \sigma) P(\underline{\sigma}|\underline{\tilde J})\,.
$$
The posteroior distribution as well as the expectations $\<A(\sigma)\>$  are random because of the randomness in: $i)$ the factor graph $\mathcal{G}$ ensemble; $ii)$ the observations $\underline{J}$ given the hidden vector
$\underline \sigma^0$; and $iii)$ the hidden vector $\underline \sigma^0$. 

It is equivalent to work in terms of observations $\underline{J}$ or associated 
half-log-likelihood ratios $\underline{\tilde J}$. The latter
are (formally) distributed according to
\begin{align}\label{induced}
 \prod_{i=1}^nP_0({\sigma}^0_i) \prod_{A=1}^m \sfc(\tilde J_A | \sigma_A^0) d\tilde J_A  \equiv \prod_{i=1}^nP_0({\sigma}^0_i) \prod_{A=1}^m Q(J_A | \sigma_A^0) dJ_A.
\end{align}
Most of the time it will be more convenient for us  to refer directly to half-log-likelihood ratios.
The graph, the observations and the hidden vector are called {\it quenched} random variables (r.v.) because {\it given instance} of the problem their realization is {\it fixed}.
In contrast the r.v. $\underline{\sigma}$ is sampled from the posterior \eqref{eq:LDGM:Gibbs}, and hence is often called an annealed variable. 
Expectations with respect 
to the quenched variables are denoted $\mathbb{E}_{\mathcal{G}}$ and 
$\mathbb{E}_{\underline \sigma^0}\mathbb{E}_{\underline{\tilde{J}}|\underline \sigma^0}$. To alleviate notations we shall often simply use $\mathbb{E}$ when the expectation 
is taken with respoect to {\it all} quenched r.v in the ensuing expression. The bracket $\< - \>$ is reserved for expectations with respect to the posterior \eqref{eq:LDGM:Gibbs}. 


%

Let $H(\underline\sigma | \underline{\tilde J})\equiv -\sum_{{\underline\sigma}\in \{-1, +1\}^n}
P(\underline\sigma |  \underline{\tilde J}) \ln P(\underline\sigma |  \underline{\tilde J})$ be the conditional entropy
of the hidden variables given fixed observations.
It is easy to see that the average conditional entropy (per variable) is given by the average {\it free entropy} (the r.h.s of the formula)
\begin{align}\label{cond-ent-free-energy}
\frac{1}{n}  \mathbb{E}_{\mathcal G}\mathbb{E}_{\underline \sigma^0}\mathbb{E} _{\underline{\tilde{J}}|\underline \sigma^0}
H(\underline{\sigma} | \underline{\tilde{J}}) = \frac{1}{n}  \mathbb{E}_{\mathcal G}\mathbb{E}_{\underline \sigma^0}\mathbb{E} _{\underline{\tilde{J}}|\underline \sigma^0} \ln {\cal Z}.
\end{align} 
We refer readers to 
\cite{KudM:2009} for details. The singularities, as a function of the measurement fraction $\alpha$, 
of this limiting quantity when $n\to +\infty$ give us the information theoretic thresholds, or the location of {\it static phase transitions} in physics language. 



\subsection{The replica symmetric formula for the average conditional entropy}

The cavity method \cite{MezardParisi87b} predicts that the asymptotic average conditional entropy per variable is accessible from the following ``replica symmetric'' functional. This functional is an ``average form''
of the Bethe free entropy expression. Details of the relationship between the replica symmetric functional and Bethe free entropy can be found 
in \cite[Appendix VII]{KumYMP:2014}.

\begin{definition}[The replica symmetric free entropy functional]
Let $V$ be a r.v. with distribution $\sfx$, and $V_i$, $i=1,\ldots, K$ i.i.d. copies of $V$. 
Let\footnote{Equation \eqref{BPfp} corresponds to one of the two {\it density evolution} fixed point equations associated 
with the belief propagation algorithm, see \cite{MezardMontanari09} for the links between this algorithm and the replica symmetric functional.} 
\begin{align}
 U & = \tanh^{-1}\Big(\tanh \tilde{J} \prod_{i=1}^{K-1} \tanh V_i\Big), \label{BPfp}
\end{align}
and $U_B$, $B=1,\ldots,l$ i.i.d. copies of $U$ where $l\sim\mathrm{Poi}(\alpha K)$ is a Poisson distributed integer. 
Let $\sigma^0\sim P_0$ and $\prod_{a=1}^K \sigma_a$ be the product of $K$ independent copies. Let $\tilde J\sim \sfc(\tilde J | \prod_{a=1}^K \sigma_a)$ (see equation \eqref{induced}). 
The replica symmetric free entropy functional is defined to be
\begin{align}
{h}_{\mathrm{RS}} ( \sfx )
	\equiv  \mathbb{E}_l \mathbb{E}_{\sigma_1,\cdots, \sigma_K}
	\mathbb{E}_{\tilde J |\prod_{a=1}^K \sigma_a} \mathbb{E}_{\underline{U}} \mathbb{E}_{\underline{V}}
	\Big [& \ln \Big ( \prod_{B=1}^{l} (1 + \tanh U_B ) + \prod_{B=1}^{l} (1 - \tanh U_B ) \Big ) 
	\nonumber \\ 
	&- \alpha(K-1) \ln \Big ( 1 + \tanh \tilde{J} \prod_{i=1}^{K}\tanh V_i \Big ) 
	 - \alpha \ln (1 + \tanh \tilde{J})\Big].
\label{eq:RSFreeEntropy1}
\end{align}
\end{definition}

\begin{remark}
 For uniform $P_0$ we can replace the product $\prod_{a=1}^K \sigma_a$ by a single binary variable $\sigma_0\sim P_0$.
\end{remark}



While a substantial part of our analysis holds for
general (symmetric) memoryless channels, our main result is fully proved for the BEC. This channel has 
transition probability
$$
Q(J_A\vert \sigma_A^0) = (1-q) \delta_{J_A, \sigma_A^0} + q\delta_{J_A, 0}\,,
$$
and from  \eqref{induced} we get in this case 
$$\sfc(\tilde J_A | \sigma_A^0) = (1-q)\delta_{ \sigma_A^0\tilde J_A, +\infty}
+ q \delta_{\tilde J_A, 0}\,.
$$

The set of distributions with point masses at $\{0, +\infty\}$ plays a special role and will be called $\mathcal{B}$. We adopt the 
notation (from coding theory) $\Delta_0$ and $\Delta_{\infty}$ for the two point masses at $0$ and $+\infty$. Any distribution 
$\mathsf{x}\in \mathcal{B}$ is of the from $\mathsf{x} = x\Delta_0 + (1-x) \Delta_{\infty}$, with $x\in [0, 1]$. In this case the replica symmetric free entropy functional becomes 
\eqref{eq:RSFreeEntropy1} becomes a function of $x\in [0,1]$.
A numerical illustration is found in Appendix~\ref{app:plot-BEC}.

Our main result is the proof, through the use of the adaptive interpolation method for sparse graphs, of the following theorem:

\begin{theorem}[The replica symmetric formula is exact for the BEC channel]\label{RSconj}
For a Censored Block Model with observations obtained through a Binary Erasure Channel as described above we have 
\begin{align}
 \lim_{n \rightarrow \infty} \frac{1}{n}  \mathbb{E}_{\mathcal G}\mathbb{E}_{\underline \sigma^0}\mathbb{E} _{\underline{\tilde{J}}|\underline \sigma^0}H(\underline{\sigma} | \underline{\tilde{J}}) = \sup_\mathsf{x \in \sB} \  h_{\mathrm{RS}}(\sfx).
\end{align}
\label{thm:RSconjecture-BEC}
\end{theorem}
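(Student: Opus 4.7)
The plan is to build a $t$-indexed family $t\in[0,1]$ of Bayesian inference problems on the same hidden vector $\underline{\sigma}^0$, in which each $K$-tuple observation is kept with Poisson intensity $\alpha(1-t)n$ while new single-variable side observations are added with Poisson intensity controlled by an interpolating distribution $\mathsf{x}_t\in\mathcal{B}$. Concretely, for each variable $i$ we plant independent BEC-type side observations (half-log-likelihood ratios equal to $+\infty$ with probability $1-x_t$ and $0$ with probability $x_t$) with total Poisson intensity $\alpha K \int_0^t (\text{something in } \mathsf{x}_s)\,ds$ chosen so that at $t=1$ the original problem is fully decoupled into independent single-variable problems whose free entropy is explicit. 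Since $P_0$ is uniform and $\mathcal{B}$ is one-dimensional, this amounts to choosing a scalar path $x:[0,1]\to[0,1]$; the adaptivity will consist in choosing $x(t)$ based on the interpolating system itself.

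\textbf{Sum rule via Nishimori.} Let $f_n(t)=\frac{1}{n}\mathbb{E}\ln\mathcal{Z}(t)$ be the interpolating free entropy. The fundamental identity is $f_n(0)=f_n(1)-\int_0^1 f_n'(t)\,dt$. I would compute $f_n'(t)$ by differentiating the two Poisson intensities: the first term produces an expectation of $\ln\cosh$ over a $K$-body Gibbs bracket, the second produces an expectation of $\ln(1+\tanh U\,\sigma_i)$ over a single-variable bracket. After applying the Nishimori identity (so that replicas of $\underline{\sigma}$ can be identified in law with $\underline{\sigma}^0$), both terms collapse into functionals of the cavity-field distribution induced by the Gibbs measure on the interpolating graph. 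The hoped-for form of the sum rule is
\begin{align*}
f_n(0)=h_{\mathrm{RS}}(\mathsf{x}_1)+\int_0^1 \mathcal{R}_n(t,x(t),x'(t))\,dt+o_n(1),
\end{align*}
where $\mathcal{R}_n$ is a ``remainder'' that vanishes exactly when the empirical cavity-message distribution concentrates on a density-evolution fixed point along the trajectory.

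\textbf{Adaptive choice and matching upper/lower bounds.} For the lower bound $f_n(0)\ge \sup_{\mathsf{x}\in\mathcal{B}}h_{\mathrm{RS}}(\mathsf{x})-o_n(1)$, I would choose $x(t)$ as the solution of the ODE obtained by setting the derivative of $h_{\mathrm{RS}}$ along the trajectory equal to the contribution coming from the Nishimori-simplified bracket, so that $\mathcal{R}_n$ becomes a convex/covariance-type expression that is nonpositive (or at least vanishing in the limit) by concentration. Varying the initial condition $x(0)$ lets $\mathsf{x}_1$ range over all of $\mathcal{B}$, yielding the sup. For the upper bound, one can use a non-adaptive (Guerra-type) interpolation: thanks to the GKS inequalities recalled in Section~\ref{prelimtool}, the Gibbs bracket is monotone under adding ferromagnetic observations, which forces a one-sided inequality on $\mathcal{R}_n$; combined with the Nishimori identity, this gives $f_n(0)\le h_{\mathrm{RS}}(\mathsf{x})+o_n(1)$ for every $\mathsf{x}\in\mathcal{B}$, hence an upper bound by the infimum — and the sup/inf match on the BEC because of the one-dimensional structure of $\mathcal{B}$.

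\textbf{Main obstacle: overlap concentration on sparse graphs.} The serious hurdle — and what makes the sparse case genuinely different from the dense one — is proving that the remainder $\mathcal{R}_n$ vanishes. In the dense setting this reduces to concentration of a single scalar overlap; here one must control a whole family of multi-overlaps $\langle \sigma_S\rangle\langle \sigma_S\rangle'-\langle\sigma_S\sigma_S'\rangle$ indexed by subsets $S$, or equivalently concentration of the empirical distribution of cavity messages. I anticipate splitting this into \emph{thermal} fluctuations (fluctuations of $\sigma_S$ under the Gibbs measure at fixed quenched data — controlled by a Nishimori-corrected Ghirlanda-Guerra argument, perturbing with extra BEC side observations) and \emph{quenched} fluctuations (fluctuations of $\langle \sigma_S\rangle$ over the graph, observations, and hidden vector — controlled by a martingale/Efron–Stein bound on $\ln\mathcal{Z}$). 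The BEC specifically enters here because half-log-likelihoods are $0$ or $\infty$, so $\langle \sigma_S\rangle\in\{-1,0,+1\}$ after conditioning on the erasure pattern, which turns overlap concentration into a combinatorial concentration for sizes of ``peeling'' components — this is the key reason the full method goes through for the BEC but not (yet) for general channels, and it is where I would expect to spend most of the technical effort.
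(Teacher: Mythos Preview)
Your overall architecture is close in spirit to the paper, but the roles of the two bounds are inverted, and this is not a cosmetic sign issue.

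In the paper's sum rule the remainder has the form
\[
\sR_{t,s}=\sum_{p\ge 1}\frac{\mathbb{E}[(\tanh\tilde J)^{2p}]}{2p(2p-1)}\,\mathbb{E}\big\langle Q_{2p}^{K}-K(q_{2p}^{(t)})^{K-1}(Q_{2p}-q_{2p}^{(t)})-(q_{2p}^{(t)})^{K}\big\rangle.
\]
Each summand is, after thermal concentration, of the shape $x^{K}-Ky^{K-1}(x-y)-y^{K}\ge 0$ by convexity of $x\mapsto x^{K}$ on $\mathbb{R}_{+}$. So the non-adaptive (Guerra-type) argument yields the \emph{lower} bound $\liminf_n f_n\ge \sup_{\sfx\in\sB}h_{\mathrm{RS}}(\sfx)$, not an upper bound. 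Conversely, the adaptive choice of the path (moment matching $q_1^{(t)}=\mathbb{E}\langle Q_1\rangle_{t,0}$, solved sequentially in $t$) is what kills the remainder up to concentration errors, producing $f_n=\tilde h(\underline{\hat\sfx}_n)+o_n(1)$; one then bounds $\tilde h(\underline{\hat\sfx}_n)\le\sup_{\sfx}h_{\mathrm{RS}}(\sfx)$ trivially. That is the \emph{upper} bound. You have these two swapped.

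The specific step that fails is your upper-bound claim that GKS monotonicity forces $f_n(0)\le h_{\mathrm{RS}}(\sfx)$ for \emph{every} $\sfx\in\sB$, ``hence an upper bound by the infimum --- and the sup/inf match on the BEC.'' This cannot work: $h_{\mathrm{RS}}$ is not constant on $\sB$ (it has a nontrivial maximizer, even a first-order transition in $q$), so $\inf_{\sfx}h_{\mathrm{RS}}(\sfx)\neq\sup_{\sfx}h_{\mathrm{RS}}(\sfx)$. A bound by the infimum would therefore never meet the lower bound. The one-dimensionality of $\sB$ is used only to turn the moment-matching equation into a well-posed scalar recursion and to reduce all multi-overlaps $\langle Q_p^k\rangle$ to $\langle Q_1^k\rangle$ (via $\langle\sigma_A\rangle\in\{0,1\}$); it does not collapse the variational problem. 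Your diagnosis of the concentration obstacle (thermal vs.\ quenched, Efron--Stein for the latter, the BEC-specific reduction $\langle\sigma_A\rangle\in\{0,1\}$) is on target, but note that the thermal part in the paper relies on GKS and a derivative-of-free-entropy identity rather than a Ghirlanda--Guerra scheme.
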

This theorem is a direct consequence of two main Propositions \ref{prop:lowerbound} and \ref{prop:upperbound} proved in Sec. \ref{interp}. 
\section{Two preliminary tools}\label{prelimtool}
In this section we review standard material which is needed in our analysis. For more details the reader can consult  \cite{Macris2007,Mac:2007,richardsonurbanke2008}
\subsection{Nishimori identities}

\subsubsection{A consequence of Bayes rule} Consider the quantity $\prod_{S\in \mathcal{C}} \sigma_S^0 \prod_{S\in \mathcal{C}} \langle \sigma_S \rangle$ for a given graph and any collection $\mathcal{C}$ of subsets $S\subset \{1, \ldots, n\}$. The same subset can occur many times in a collection. From Bayes formula, for a given factor graph $\mathcal{G}$, 
\begin{align}
\mathbb{E}_{\underline \sigma^0}\mathbb{E}_{\underline{\tilde J} | \underline \sigma^0} \Big [ \prod_{S\in \mathcal{C}} \sigma_S^0 \prod_{S\in \mathcal{C}} \langle \sigma_S \rangle \Big ] 
& 
= \mathbb{E}_{\underline {\tilde J}}\mathbb{E}_{\underline\sigma^0 | \underline{\tilde J}}  \Big [\prod_{S\in \mathcal{C}} \sigma_S^0 \prod_{S\in \mathcal{C}} \langle \sigma_S \rangle \Big ]
\nonumber \\ &
= \mathbb{E}_{\underline {\tilde J}}\Big[\langle \prod_{S\in \mathcal{C}} \sigma_S \rangle \prod_{S\in \mathcal{C}} \langle \sigma_S \rangle \Big ]
\nonumber \\ &
= \mathbb{E}_{\underline \sigma^0} \mathbb{E}_{\underline{\tilde J}\vert \underline \sigma^0} 
\Big[\langle \prod_{S\in \mathcal{C}} \sigma_S \rangle \prod_{S\in \mathcal{C}} \langle \sigma_S \rangle \Big ].
\label{eq:tower}
\end{align}
The equality between the l.h.s and the last line on the r.h.s is a trivial but very important consequence of Bayes rule. This formula has been abusively called a ``Nishimori identity'' in the literature.
The ``true'' Nishimori identity is obtained when two extra features are present, namely a ``gauge invariance'' of the posterior and channel symmetry.

\subsubsection{Nishimori identities for symmetric channels} 
For symmetric channels this identity can be further specialized and yields the so-called Nishimori identities. 
This is specially important for us since the BEC is a symmetric channel. By definition, 
symmetric channels are those satisfying $Q(J_A | \sigma_A^0) = Q(- J_A | -\sigma_A^0)$ or equivalently 
$\sfc(\tilde J_A | \sigma_A^0) = \sfc(- \tilde J_A | - \sigma_A^0)$.

Given $\underline \sigma^0$ the Gibbs distribution \eqref{eq:LDGM:Gibbs} is {\it invariant} under the {\it gauge transformation} $\sigma_i \to \sigma_i^0\sigma_i$, $\tilde J_A \to \sigma_A^0 \tilde J_A$. Let us denote 
by $\underline{\sigma}^0\star\underline{\tilde J}$ the ``component-wise'' product $(\sigma_A^0 \tilde J_A)_{A=1}^m$. 
Now we perform a gauge transformation on both sides of \eqref{eq:tower}. For the left hand side we have
\begin{align}\label{intermediate}
\mathbb{E}_{\underline{\tilde J} | \underline \sigma^0} \Big [ \prod_{S\in \mathcal{C}} \sigma_S^0 \prod_{S\in \mathcal{C}} \langle \sigma_S \rangle \Big ]
=
\mathbb{E}_{\underline{\sigma}^0\star\underline{\tilde J} | \underline \sigma^0} \Big [ \prod_{S\in \mathcal{C}} \langle \sigma_S \rangle \Big ].
\end{align}
Moreover, from $\sfc(\tilde J_A | \sigma_A^0) = \sfc(- \tilde J_A | - \sigma_A^0)$ one can see that for a {\it symmetric channel} 
$\sfc(\sigma_A^0 \tilde J_A | \sigma_A^0) = \sfc(\tilde J_A | 1)$, and therefore in \eqref{intermediate} 
we can replace $\mathbb{E}_{\underline{\sigma}^0\star\underline{\tilde J} | \underline \sigma^0}$ by
$\mathbb{E}_{\underline{\tilde J} | \underline 1}$. We get
\begin{align}\label{lhs}
\mathbb{E}_{\underline{\tilde J} | \underline \sigma^0} \Big [ \prod_{S\in \mathcal{C}} \sigma_S^0 \prod_{S\in \mathcal{C}} \langle \sigma_S \rangle \Big ]
=
\mathbb{E}_{\underline{\tilde J} | \underline 1} \Big [\prod_{S\in \mathcal{C}} \langle \sigma_S \rangle \Big ].
\end{align}
The same steps show that the right hand side of \eqref{eq:tower} also satisfies
\begin{align}\label{rhs}
\mathbb{E}_{\underline {\tilde J} | \underline\sigma^0}\Big [ \langle \prod_{S\in \mathcal{C}} \sigma_S \big \rangle \prod_{S\in \mathcal{C}} \langle \sigma_S \rangle \Big ]
=
\mathbb{E}_{\underline {\tilde J} | \underline 1}\Big [ \big \langle \prod_{S\in \mathcal{C}} \sigma_S \big \rangle \prod_{S\in \mathcal{C}} \langle \sigma_S \rangle \Big ].
\end{align}
From \eqref{lhs}, \eqref{rhs}, \eqref{eq:tower} we get the final Nishimori identity
\begin{align}
\mathbb{E}_{\underline {\tilde J}|\underline 1}
\Big [\prod_{S\in \mathcal{C}} \langle \sigma_S \rangle \Big ] 
= \mathbb{E}_{\underline {\tilde J}|\underline 1}\Big [ \big \langle \prod_{S\in \mathcal{C}} \sigma_S \big \rangle \prod_{S\in \mathcal{C}} \langle \sigma_S \rangle \Big ].
\label{eq:Nishimori}
\end{align}

\subsubsection{A special Nishimori identity for symmetric distributions}

An important role is played by the space $\mathcal{X}$ of {\it symmetric distributions} which we define as follows. Take a transition probability (a ``channel'') satisfying 
 $q(J | \sigma^0) = q(-J | -\sigma^0)$, $\sigma^0\in \{-1, +1\}$, $J\in \mathbb{R}$. The associated half-log-likelihood variable is 
$h = \frac{1}{2}\ln \frac{q(J | +1)}{q(J | -1)}$. The space $\mathcal{X}$ is the space of symmetric distributions over the half-log-likelihood variable is formally defined 
by $\mathsf{x}(dh) = q(J| +1) dJ$.  It is easy to deduce from $q(J | \sigma^0) = q(-J | -\sigma^0)$
 that a symmetric distribution $\mathsf{x}\in \mathcal{X}$ satisfies $\mathsf{x}(-dh) = e^{-2h} \mathsf{x}(dh)$.
 We note that $\mathcal{B}\subset \mathcal{X}$ (recall $\mathcal{B}$ is the set of convex combinations of point masses at $0$ and $+\infty$).

There is an important special case of the Nishimori identity \eqref{eq:Nishimori}. Namely the one satisfied by the system constituted by a single uniform hidden variable $\sigma^0\sim P_0$, 
observed through a noisy ``channel'' $q(J\vert \sigma^0)$.  The Gibbs distribution is simply in this case
$e^{h \sigma}/(2\cosh h)$ where $h$ is the half-log-likelihood of the ``channel''. 
Since $\langle \sigma \rangle = \tanh h$, an 
application of \eqref{eq:Nishimori} (where the singleton set is taken 
$k$ times) yields
\begin{align}
\int (\tanh h)^{2k-1} \,\sfx(dh) = \int (\tanh h)^{2k} \,\sfx(dh), \qquad k\in \mathbb{N}^*.
\label{eq:channelSymmetry}
\end{align}
In Appendix~\ref{proof:equi_channel_symmetry} we show in an independent and direct way that any $\mathsf{x}\in \mathcal{X}$ satisfies
\eqref{eq:channelSymmetry}.

\subsubsection{Conditional entropy for symmetric channels} 
Since the Gibbs distribution is invariant under a gauge transformation, the partition function 
${\cal Z}$ also is, and therefore, for a given graph $\mathcal{G}$ and hidden vector $\underline \sigma^0$, we have 
\begin{align}
\mathbb{E}_{\underline{\tilde J} | \underline\sigma^0} \ln {\cal Z}
& 
= \mathbb{E}_{\underline{\sigma}^0\star\underline{\tilde J} | \underline\sigma^0} \ln {\cal Z} .
\end{align}
For symmetric channels the r.h.s. equals $\mathbb{E}_{\underline{\tilde J} | \underline 1} \ln {\cal Z}$ and thus the average conditional entropy \eqref{cond-ent-free-energy} becomes 
\begin{align}\label{cond-ent-free-energy-sym}
\frac{1}{n} \mathbb{E}_{\mathcal{G}}\mathbb{E}_{\underline \sigma^0}\mathbb{E}_{\underline{\tilde J} | \underline \sigma^0}
H(\underline{\sigma} | \underline{\tilde{J}}) 
=  \frac{1}{n} \mathbb{E}_{\mathcal{G}} \mathbb{E}_{\underline{\tilde J} | \underline 1}\ln {\cal Z}.
\end{align} 

\subsubsection{Summary} 
When one is dealing with symmetric measurement channels, in order to compute the 
average conditional entropy, or {\it certain} averages,  one may assume that $\sigma_i^0 =1$, $i=1, \ldots, n$ and 
that the quenched variables $\underline{\tilde J}$ have distribution $\sfc(\tilde J_A | 1)$, $A=1, \ldots, m$. From now on this is 
understood unless explicitly specified otherwise. 


\subsection{Griffiths-Kelly-Sherman inequalities for the BEC}

The BEC is a symmetric channel so as shown before, without loss of generality for analysis purposes, we assume $\sigma_i^0 =1$, $i=1, \ldots, n$ and that
$\underline{\tilde J}$ have distribution $\sfc(\tilde J_A | 1)$, $A=1, \ldots, m$.
Since $\sfc(\tilde J_A | 1) = (1-q) \Delta_{\infty} + q\Delta_0$ the Gibbs 
distribution \eqref{eq:LDGM:Gibbs} has non-negative coupling constants $\tilde J_A$, $A=1,\ldots,m$. Therefore 
the Gibbs distribution satisfies the Griffiths-Kelly-Sherman (GKS) inequalities \cite{20017,Macris2007,Mac:2007}: For any subsets of variable indices $S, T \subset \{1 \ldots n\}$ we have
\begin{align}
& \< \sigma_S \> \geq 0, \label{eq:GKS1} \\ &
\< \sigma_S \sigma_T \> - \< \sigma_S \> \< \sigma_T \> \geq 0.\label{eq:GKS2}
\end{align}
These two inequalities play an important role in the
proof of Theorem~\ref{thm:RSconjecture-BEC}.
\section{The adaptive path interpolation method}\label{interp}
For $t=1,\dots, T$ let $V^{(t)}_i$ be i.i.d. r.v. distributed according to $\sfx^{(t)}\in \mathcal{X}$.
Consider the r.v.
\begin{align}
U^{(t)} = \tanh^{-1} \Big( \tanh \tilde{J} \prod_{i=1}^{K-1} V_i^{(t)} \Big ) \label{eq:BP-update}
\end{align}
and independent copies denoted $U^{(t)}_B$ where $B$ is a subscript which runs over $l^{(t)} \sim \mathrm{Poi}(\frac{K}{RT})$ of these copies. 
Later on, we call $\tilde \sfx^{(t)}$ the distribution of $U^{(t)}$ (induced by $\sfx^{(t)}$ and $\sfc$).

Let also define two extra random variables, $H$ with distribution $\epsilon \Delta_{\infty} + (1-\epsilon) \Delta_{0} \in \mathcal{B}$, and $\tilde{H}$ with 
distribution $\delta n^{-\theta} \Delta_{\infty} + (1-\delta  n^{-\theta}) \Delta_0 \in \mathcal{B}$, where $\epsilon, \delta \in (0,1)$ and $\theta \in (0, 1)$ (eventually we will have to take $\theta\in (0, 1/5]$ in the final estimates). 

Let us set $\underline \sfx = (\sfx^{(1)}, \dots, \sfx^{(T)})$.
We define the {\it generalized free entropy functional}:
\begin{align}
\tilde{h}_{\epsilon,\delta}( \underline{\sfx} ) 
	& \equiv\mathbb{E} \Big[\ln \Big( \prod_{t=1}^{T} \prod_{B=1}^{l^{(t)}} (1 + \tanh U_B^{(t)} ) + e^{-2(H+\tilde{H})} \prod_{t=1}^{T} \prod_{B=1}^{l^{(t)}} (1 - \tanh U_B^{(t)} ) \Big ) \nonumber \\
	& \hspace{2cm} - \frac{\alpha(K-1)}{T} \sum_{t=1}^{T} \ln \Big (1 + \tanh \tilde{J} \prod_{i=1}^{K} \tanh V_i^{(t)} \Big ) 
	-\alpha \ln (1 + \tanh \tilde{J}) \Big].
\label{eq:RSFreeEntropy_coupled1}
\end{align}
One can easily check that if $\sfx^{(t)} = \sfx$ for all $t$, then $\tilde{h}_{\epsilon = 0, \delta = 0}(\underline{\sfx}) = h_{\mathrm{RS}}(\sfx)$. 
More is true as the following lemma shows:
\begin{lemma}\label{equivalence}
Let $\sX^T = \mathcal{X}\times \mathcal{X}\times \ldots \times \mathcal{X}$. We have for $\underline{\sfx} \in \mathcal{X}^T$
\begin{align}
\sup_{\underline{\sfx} \in \sX^T} \tilde{h}_{\epsilon = 0, \delta = 0}(\underline{\sfx}) = \sup_{\sfx \in \sX} h_{\mathrm{RS}}(\sfx).
\label{eq:h_RS_coupled:equivalence}
\end{align}
\end{lemma}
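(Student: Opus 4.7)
The proof naturally splits into two inequalities. For the lower bound, I would restrict to diagonal sequences $\sfx^{(t)}\equiv \sfx$: Poisson additivity gives $\sum_t l^{(t)} \sim \mathrm{Poi}(\alpha K)$, and exchangeability of the argument of the logarithm in \eqref{eq:RSFreeEntropy_coupled1} in the variables $U_B^{(t)}$'s means that the merged multiset is statistically the same as $\mathrm{Poi}(\alpha K)$ i.i.d.\ copies from $\tilde{\sfx}$. The $T$ identical summands of the first negative term cancel the $1/T$ prefactor, and the last term does not depend on $\underline{\sfx}$, so $\tilde h_{0,0}(\sfx, \ldots, \sfx) = h_{\mathrm{RS}}(\sfx)$, and taking suprema yields $\geq$.

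For the reverse inequality, I would prove the stronger pointwise bound
\[
\tilde h_{0,0}(\underline{\sfx}) \;\leq\; \frac{1}{T}\sum_{t=1}^{T} h_{\mathrm{RS}}(\sfx^{(t)}) \;\leq\; \sup_{\sfx\in\sX} h_{\mathrm{RS}}(\sfx).
\]
The first negative term of $\tilde h_{0,0}(\underline{\sfx})$ agrees \emph{exactly} with the first negative term of the average $\tfrac{1}{T}\sum_t h_{\mathrm{RS}}(\sfx^{(t)})$ (both equal $\tfrac{\alpha(K-1)}{T}\sum_t \mathbb{E}[\ln(1+\tanh\tilde{J}\prod_i\tanh V_i^{(t)})]$), and the last term cancels as it does not depend on $\underline{\sfx}$. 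The inequality therefore reduces to a log-term inequality; by Poisson merging this becomes $\ell(\tilde\sfx^{\mathrm{mix}}) \leq \tfrac{1}{T}\sum_t \ell(\tilde\sfx^{(t)})$, where $\tilde\sfx^{\mathrm{mix}} := \tfrac{1}{T}\sum_t\tilde\sfx^{(t)}$ and
\[
\ell(\mu) \,:=\, \mathbb{E}_{L\sim\mathrm{Poi}(\alpha K)}\,\mathbb{E}_{U_1,\ldots,U_L\,\text{iid}\,\sim\,\mu}\Big[\ln\Big(\prod_{i=1}^L(1+\tanh U_i)+\prod_{i=1}^L(1-\tanh U_i)\Big)\Big].
\]
What remains is to show that $\ell$ is convex on the space $\sX$ of symmetric distributions.

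To establish convexity of $\ell$, I would use the identity $\prod_i(1+\tanh U_i)+\prod_i(1-\tanh U_i) = 2\cosh(\sum_i U_i)/\prod_i\cosh U_i$, which rewrites $\ell(\mu) = \ln 2 + \mathbb{E}[\ln\cosh\xi_\mu] - \alpha K\,\mathbb{E}_\mu[\ln\cosh U]$ with $\xi_\mu$ a compound Poisson variable of Lévy measure $\alpha K\,\mu$. The subtracted term is linear in $\mu$, so it suffices to show convexity of $\mu\mapsto\mathbb{E}[\ln\cosh\xi_\mu]$. Equivalently, expanding $\ln\big(\prod_i(1+x_i)+\prod_i(1-x_i)\big) = \ln 2 + \ln\big(1 + \sum_{|S|\geq 2,\,\mathrm{even}}\prod_{i\in S}x_i\big)$ in powers of $x_i = \tanh U_i$, collecting like moments, and pairing consecutive odd and even powers using the Nishimori identity \eqref{eq:channelSymmetry} (which forces $\mathbb{E}_\mu[(\tanh U)^{2k-1}] = \mathbb{E}_\mu[(\tanh U)^{2k}]$), the second derivative with respect to a mixing parameter $\lambda\in[0,1]$ should assemble into a manifestly non-negative series of squared differences — precisely by the mechanism that in the toy computation for the first negative term yields pairs of the form $(a_1^{(k)}-a_2^{(k)})^2[\tfrac{1}{2k-1}-\tfrac{1}{2k}] \geq 0$ (with $a_t^{(k)} = \mathbb{E}_{\mu_t}[(\tanh U)^k]$).

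The main obstacle I expect is the combinatorial control of this moment expansion uniformly in $L$ (before taking the Poisson average), together with justifying the interchange of an infinite series and the Poisson expectation; the latter is tractable because $|\tanh U|\leq 1$ and $\mathbb{E}[L]$ is finite. The core positivity inputs are the channel-symmetry identity \eqref{eq:channelSymmetry}, which guarantees non-negativity of all the moments $\mathbb{E}_\mu[(\tanh U)^k]$ for $\mu\in\sX$, and the GKS inequalities \eqref{eq:GKS1}--\eqref{eq:GKS2}; together they convert the alternating-sign structure produced by the logarithm into a positive quadratic form in the differences $\tilde\sfx^{(t)} - \tilde\sfx^{\mathrm{mix}}$.
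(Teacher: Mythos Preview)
Your reduction is clean and correct up to the point where everything hinges on the convexity of
\[
\ell(\mu)\;=\;\mathbb{E}_{L\sim\mathrm{Poi}(\alpha K)}\,\mathbb{E}_{U_1,\ldots,U_L\,\mathrm{iid}\,\sim\,\mu}\Bigl[\ln\Bigl(\textstyle\prod_i(1+\tanh U_i)+\prod_i(1-\tanh U_i)\Bigr)\Bigr]
\]
over $\sX$. This is where the proposal stops being a proof. The Nishimori pairing trick you invoke works beautifully for $\boxast$-type functionals, because there $(\tanh a)$-moments \emph{factorize}: $m_{2p}(\sfa\boxast\sfb)=m_{2p}(\sfa)\,m_{2p}(\sfb)$, and the alternating series collapses termwise into squares. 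The functional $\ell$ is built from $\circledast$, for which moments of $\tanh(\sum_i U_i)$ do \emph{not} factor. Concretely, the second variation of $\ell$ at $\mu_\lambda=\lambda\mu_1+(1-\lambda)\mu_2$ equals (up to positive constants) $H\bigl(\eta^{\circledast 2}\circledast\sfz\bigr)$ with $\eta=\mu_1-\mu_2$ and $\sfz=\Lambda^{\circledast}(\mu_\lambda)$; expanding this in tanh-moments gives $-\sum_p\tfrac{1}{2p(2p-1)}\,m_{2p}(\eta^{\circledast 2})\,(1-m_{2p}(\sfz))$, and $m_{2p}(\eta^{\circledast 2})$ is \emph{not} a square --- it can be negative (take $\mu_1=\Delta_0$, $\mu_2=\Delta_\infty$, where $m_{2p}(\eta^{\circledast 2})=-1$). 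So the termwise positivity you hope for simply does not occur. Whether the full sum is nonetheless nonnegative is a genuinely nontrivial information-combining statement that your sketch does not address. The appeal to GKS is misplaced: those are correlation inequalities for the Gibbs measure of the interpolating model and have no bearing on the convexity of an entropy functional on $\sX$.

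The paper sidesteps this difficulty entirely by a variational argument: it computes the directional derivative $d_t\tilde h_{0,0}(\underline\sfx)[\eta^{(t)}]$ and obtains (via the duality $H(\mu\circledast\sfz)+H(\mu\boxast\sfz)=H(\mu)$) an expression involving only $\boxast$, namely $H\bigl((\sfx^{(t)}-\sfT(\sfc,\underline\sfx))\boxast(\sfc\boxast\sfx^{(t)\boxast(K-2)}\boxast\eta^{(t)})\bigr)$. Because $\boxast$-moments factorize, choosing $\eta^{(t^*)}=\sfx^{(t^*)}-\sfT(\sfc,\underline\sfx)$ makes this a (sign-definite) sum of squares, so any non-stationary $\underline\sfx$ admits a feasible direction of strict increase and cannot be a maximizer; and the stationarity equation $\sfx^{(t)}=\sfT(\sfc,\underline\sfx)$ forces all components equal. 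In short, the paper exploits that the \emph{first} variation lands in the $\boxast$-world where Nishimori pairing gives squares, whereas your approach requires controlling the \emph{second} variation in the $\circledast$-world, which does not.
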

\begin{remark}
 We prove this lemma in Sec.~\ref{proof:h_RS_coupled:equivalence}. For distributions in $\underline \sfx\in \mathcal{B}^T$ the supremum carries over $(x^{(1)}, \cdots, x^{(T)})\in[0, 1]^T$ and the proof only requires real analysis.  
\end{remark}

%


\subsection{The $(t,s)$--interpolating model}\label{subsec:tsmodel}
Consider the construction of an interpolating factor graph ensemble $\sG_{t,s}$ involving discrete and a continuous interpolation parameters, $t \in \{1, 2, \dots, T \}$ and $s \in [0,1]$. 
This is the sparse graph counterpart of the interpolating ensemble initialy developed for dense graphs in \cite{BarbierM17a} (and the simplified in \cite{BarbierMacrisCW2019}).
%
\begin{algorithm}
\caption{Construction of $\sG_{t,s}$}
\begin{algorithmic}
\label{algo:G(t,s)}
	\For {$i = 1, \dots, n$}
	\For {$t' = 1, \dots, t-1$}
		\State {\_draw a random number $e_i^{(t')} \sim \mathrm{Poi}\big( \frac{\alpha K}{T} \big)$}
		\For {$B = 1, \dots, e_i^{(t')}$}
			\State{\_connect variable node $i$ with a half edge and assign a weight $U_{B \rightarrow i}^{(t')} \sim \tilde{\sfx}^{(t')}$ to this half-edge}
		\EndFor
	\EndFor
	\State {\_draw a random number $e_{i,s}^{(t)} \sim \mathrm{Poi}\big ( \frac{\alpha Ks}{T} \big )$}
	\For {$C = 1, \dots, e_{i,s}^{(t)}$}
		\State{\_connect variable node $i$ with a half edge and assign a weight $U_{C \rightarrow i}^{(t)} \sim \tilde{\sfx}^{(t)}$ to this half-edge}
	\EndFor
\EndFor\\
%
%
%
\State {\_draw a random number $m_s^{(t)} \sim \mathrm{Poi}\big ( \frac{\alpha n(T-t+1-s)}{T} \big )$}
\For {$A = 1, \dots, m_s^{(t)}$}
		\State {\_assign to factor node $A$ a r.v $\tilde{J}_{A} \sim \sfc$}
		\State {\_uniformly and randomly connect factor node $A$ to $K$ variable nodes (this subset of variable nodes is also denoted $A$ by a slight abuse of notation)}
	\EndFor
\end{algorithmic}
\end{algorithm}

The interpolating graph is designed such that $\sG_{t,1}$ is statistically equivalent to $\sG_{t+1,0}$; 
in addition, $\sG_{t,s}$ maintains the degree distribution of variable nodes invariant: For any $(t,s)$ the degree of each variable node is an independent ${\rm Poi}(K/R)$ random variable. The Hamiltonian associated with $\sG_{t,s}$ is 
%
\begin{align}
\sH_{t,s}(\underline{\sigma}, \underline{\tilde{J}}, \underline{U}, \underline{m}, \underline{e}) 
	= &
	-  \sum_{i=1}^{n}\Big\{\sum_{t' =1}^{t-1} \sum_{B=1}^{e_{i}^{(t')}} U_{B \rightarrow i}^{(t')} + \sum_{C=1}^{e_{i,s}^{(t)}} U_{C \rightarrow i}^{(t)}\Big\} ( \sigma_i - 1 )
	- \sum_{A=1}^{m_s^{(t)}} \tilde{J}_{A} ( \sigma_{A} - 1 ).
\label{eq:Hamiltonian_ts}
\end{align}
We further consider a generalized version of \eqref{eq:Hamiltonian_ts} by adding two kinds of perturbations that can be interpreted as small additional observations from side-channels for each node $i=1,\cdots, n$. 
These perturbations are then removed at the end of the analysis. let $H_i$ and $\tilde{H}_i$ be half-log-likelihood variables, where $H_i$ and $\tilde{H}_i$ have the same distribution as $H$ and $\tilde{H}$ 
defined at the beginning of this section. Our final {\it interpolating Hamiltonian} is
\begin{align}
\sH_{t,s; \epsilon, \delta}(\underline{\sigma}, \underline{\tilde{J}}, \underline{U}, \underline{m}, \underline{e}, \underline{H}, \underline{\tilde{H}}) 
	& \equiv \sH_{t,s}(\underline{\sigma}, \underline{\tilde{J}}, \underline{U}, \underline{m}, \underline{e}) - \sum_{i=1}^{n} (H_i + \tilde{H}_i) ( \sigma_i - 1 ).
	\label{eq:Hamiltonian_tse}
\end{align}
The associated interpolating partition function, Gibbs expectation and free entropy are:
\begin{align}
{\cal Z}_{t,s; \epsilon,\delta}
	& \equiv \sum_{\underline{\sigma}\in\{-1,+1\}^n} e^{-{\sH_{t,s; \epsilon,\delta}(\underline{\sigma}, \underline{\tilde{J}}, \underline{U}, \underline{m}, \underline{e}, \underline{H}, \underline{\tilde{H}})}},
	\label{eq:partitionFunction_tse} \\
\< A(\underline{\sigma}) \>_{t,s; \epsilon, \delta} 
	& \equiv \frac{1}{{\cal Z}_{t,s; \epsilon,\delta}}\sum_{\underline{\sigma}\in\{-1,+1\}^n} A(\underline{\sigma})\, 
	e^{-{\sH_{t,s; \epsilon,\delta}(\underline{\sigma}, \underline{\tilde{J}}, \underline{U}, \underline{m}, \underline{e}, \underline{H}, \underline{\tilde{H}})}}, 
	\label{eq:GibbsExp_tse} \\
h_{t,s; \epsilon, \delta} & \equiv \frac{1}{n} \mathbb{E}\ln {\cal Z}_{t,s; \epsilon,\delta}, \label{eq:htse} \\
H_{t,s; \epsilon,\delta} & \equiv \frac{1}{n} \mathbb{E}_{\underline{\tilde{H}}} \ln \mathcal{Z}_{t,s; \epsilon,\delta}\,. \label{eq:Htse}
\end{align}
Recall our notation: The expectation $\mathbb{E}$ here carries over {\it all} quenched variables entering in the interpolating system, thus $\underline{\tilde{J}}, \underline{U}, \underline{m}, \underline{e}$, $\underline{H}$ and $\underline{\tilde{H}}$.
Note that Nishimori's identity \eqref{eq:Nishimori} and GKS inequalities \eqref{eq:GKS1}, \eqref{eq:GKS2} still apply to the Gibbs 
expectation $\< - \>_{t,s; \epsilon,\delta}$. 

One may check that, at the initial point of the interpolating path, $t=1$, $s=0$ the free entropy $h_{1,0;\epsilon=0,\delta=0}$ is equal to the averaged conditional entropy of the original model
(see formula \eqref{h10ish} below), and at the end-point $t=T$, $s=1$ the free entropy $h_{T,1;\epsilon,\delta}$ is given by a part
of the generalized entropy functional \eqref{eq:RSFreeEntropy_coupled1} (see formula \eqref{eq:h_T1}).

The connection between the unperturbed and perturbed free entropies is given by (see Sec.~\ref{sec:proofepsilon-free})
\begin{lemma}\label{pert-free}
 Let $\sfc\in \mathcal{B}$ and $\underline \sfx\in \mathcal{B}^T$. We have
 \begin{align}
  & \vert h_{t, s; \epsilon,\delta} - h_{t, s; \epsilon=0,\delta=0} \vert \leq  ( \epsilon + \frac{\delta}{n^{\theta}} ) \ln 2\, ,\label{eq:pert-free:1} \\
  & \vert \tilde{h}_{\epsilon,\delta}(\underline{\sfx}) - \tilde{h}_{\epsilon=0,\delta=0}(\underline{\sfx}) \vert \leq  ( \epsilon + \frac{\delta}{n^{\theta}} ) \ln 2\, . \label{eq:pert-free:2}
  \end{align}
\end{lemma}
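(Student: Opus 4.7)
The plan is to exploit two facts: the perturbation variables $H_i$ and $\tilde H_i$ only take values in $\{0,+\infty\}$, so each perturbation either does nothing or pins $\sigma_i$ to $+1$; and the assumption $\sfc,\underline{\sfx}\in\mathcal{B}$ keeps everything non-negative so that GKS applies. I would prove the two bounds in essentially one move each, without differentiating in $\epsilon$ or $\delta$.

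For \eqref{eq:pert-free:1}, I let $I\equiv\{i:H_i=+\infty\text{ or }\tilde H_i=+\infty\}$ and note that with probability one
\[
\mathcal{Z}_{t,s;\epsilon,\delta}=\sum_{\underline\sigma}e^{-\sH_{t,s}(\underline\sigma,\dots)}\prod_{i\in I}\mathds{1}[\sigma_i=1]=\mathcal{Z}_{t,s;0,0}\,\Bigl\langle \prod_{i\in I}\frac{1+\sigma_i}{2}\Bigr\rangle_{t,s;0,0}.
\]
Expanding the product gives $\mathcal{Z}_{t,s;\epsilon,\delta}/\mathcal{Z}_{t,s;0,0}=2^{-|I|}\sum_{J\subseteq I}\langle \sigma_J\rangle_{t,s;0,0}$. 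Since $\sfc,\tilde\sfx^{(t')}\in\mathcal{B}$, all couplings $\tilde J_A$ and half-edge fields $U_{B\to i}^{(t')}$ are non-negative (after the gauge fix $\sigma_i^0=1$), so GKS \eqref{eq:GKS1} gives $\langle\sigma_J\rangle_{t,s;0,0}\ge 0$ for every $J$. Keeping only the $J=\emptyset$ term yields the lower bound $2^{-|I|}$, and trivially the ratio is $\le 1$. Hence $|\ln\mathcal{Z}_{t,s;\epsilon,\delta}-\ln\mathcal{Z}_{t,s;0,0}|\le |I|\ln 2$, and averaging with $\mathbb{E}|I|=n(\epsilon+\delta n^{-\theta}-\epsilon\delta n^{-\theta})\le n(\epsilon+\delta n^{-\theta})$ closes the first bound.

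For \eqref{eq:pert-free:2}, the only $(H,\tilde H)$-dependent term in $\tilde h_{\epsilon,\delta}(\underline{\sfx})$ is the first logarithm. Writing $A\equiv\prod_{t,B}(1+\tanh U_B^{(t)})$ and $B\equiv\prod_{t,B}(1-\tanh U_B^{(t)})$, the assumption $\underline{\sfx}\in\mathcal{B}^T$ together with $\sfc\in\mathcal{B}$ forces each $U_B^{(t)}\in\{0,+\infty\}$, so only two cases arise: either some $U_B^{(t)}=+\infty$, in which case $B=0$ and the log reduces to $\ln A$, independent of $H$ and $\tilde H$; or all $U_B^{(t)}=0$ (including the case of empty products), in which case $A=B=1$ and the log equals $\ln(1+e^{-2(H+\tilde H)})$, which is $\ln 2$ when $H=\tilde H=0$ and $0$ otherwise. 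Subtracting the $\epsilon=\delta=0$ value gives
\[
\tilde h_{\epsilon,\delta}(\underline{\sfx})-\tilde h_{0,0}(\underline{\sfx})=-\Pr(A=B=1)\cdot\bigl[1-(1-\epsilon)(1-\delta n^{-\theta})\bigr]\ln 2,
\]
whose modulus is at most $(\epsilon+\delta n^{-\theta})\ln 2$.

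The only substantive point is verifying that GKS really applies to the Gibbs measure $\langle\cdot\rangle_{t,s;0,0}$ associated with the interpolating Hamiltonian \eqref{eq:Hamiltonian_ts}: after the symmetry/gauge reduction of Section~\ref{prelimtool}, the factor couplings $\tilde J_A$ are drawn from $\sfc(\cdot\mid 1)$ with $\sfc\in\mathcal B$, and the half-edge weights $U_{B\to i}^{(t')}$ are drawn from $\tilde\sfx^{(t')}$, which inherits non-negativity from $\sfc,\sfx^{(t')}\in\mathcal B$ through the update rule \eqref{eq:BP-update}. Once this is checked, both bounds follow without needing any derivative calculation; the main technical content is the observation that for $\mathcal B$-valued perturbations the problem collapses into counting how many variables get pinned.
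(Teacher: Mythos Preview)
Your proof is correct and takes a genuinely different route from the paper's. The paper argues via derivatives: it invokes the formulas \eqref{eq:perturbed_f:derivative1a} and \eqref{eq:perturbed_f:derivative3a} (derived in Appendix~\ref{app:entropy-derivatives}) to get $\bigl|\tfrac{d}{d\epsilon}h_{t,s;\epsilon,\delta}\bigr|\le\ln 2$ and $\bigl|\tfrac{d}{d\delta}h_{t,s;\epsilon,\delta}\bigr|\le(\ln 2)/n^{\theta}$, then applies the mean value theorem and a triangle inequality for \eqref{eq:pert-free:1}. For \eqref{eq:pert-free:2} the paper does not compute anything new: it observes the structural identity $\tilde h_{\epsilon,\delta}(\underline\sfx)-\tilde h_{0,0}(\underline\sfx)=h_{T,1;\epsilon,\delta}-h_{T,1;0,0}$ and reduces directly to \eqref{eq:pert-free:1}. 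By contrast, you bypass derivatives entirely: you exploit that $H_i,\tilde H_i\in\{0,+\infty\}$ to rewrite the perturbed partition function as $\mathcal{Z}_{t,s;0,0}\cdot 2^{-|I|}\sum_{J\subseteq I}\langle\sigma_J\rangle_{t,s;0,0}$ and then sandwich this ratio between $2^{-|I|}$ and $1$ using the first GKS inequality; for \eqref{eq:pert-free:2} you do a direct case analysis on whether any $U_B^{(t)}=+\infty$. Your argument is more elementary and self-contained (it does not need Appendix~\ref{app:entropy-derivatives}), and it in fact yields a slightly sharper pointwise statement before averaging. The paper's route has the advantage of reusing derivative machinery needed elsewhere and, for \eqref{eq:pert-free:2}, of making transparent that the generalized functional differs from $h_{T,1;\epsilon,\delta}$ only by an $(\epsilon,\delta)$-independent piece.
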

%
\subsection{Evaluating the free entropy change along the $(t,s)$--interpolation}
By interpolating $h_{t,s; \epsilon, \delta}$ from the initial state $(t=1, s=0)$ to the final one $(t=T, s=1)$, we have
\begin{align}
h_{1,0; \epsilon, \delta} = h_{T,1; \epsilon, \delta} + \sum_{t=1}^{T} ( h_{t,0; \epsilon, \delta} - h_{t,1; \epsilon, \delta}) = h_{T,1; \epsilon, \delta} - \sum_{t=1}^T \int_0^1 ds \frac{dh_{t,s; \epsilon, \delta}}{ds}.
\label{eq:interpolation:main1}
\end{align}
We have $m_{s=0}^{(t=1)}=m\sim{\rm Poi}(\alpha n)$ and thus
\begin{align*}
\sH_{1,0; \epsilon, \delta} =- \sum_{A=1}^{m} \tilde{J}_{A} (\sigma_{A}-1) - \sum_{i=1}^{n} (H_i + \tilde{H}_i) (\sigma_i - 1) .
\end{align*}
Therefore the initial interpolating free entropy without perturbation equals the average conditional entropy per variable:
\begin{align}
h_{1,0; \epsilon=0, \delta = 0}=\frac{1}{n}\mathbb{E} H(\underline \sigma | \underline{\tilde J}). \label{h10ish}	
\end{align}
On the other hand $h_{T,1; \epsilon, \delta}$ corresponds to a part of the generalized free entropy functional \eqref{eq:RSFreeEntropy_coupled1}. 
A subsequent computation (see Sec.~\ref{proof:interpolation:main2}) on \eqref{eq:interpolation:main1} leads to the {\it fundamental sum rule}
\begin{align}
h_{1,0; \epsilon, \delta} = \tilde{h}_{\epsilon, \delta}(\underline{\sfx}) + \frac{\alpha }{T} \sum_{t=1}^T \int_0^1 ds \,\sR_{t,s; \epsilon, \delta}
\label{eq:interpolation:main2}
\end{align}
where 
\begin{align}
\sR_{t,s; \epsilon, \delta} = \sum_{p=1}^{\infty} \frac{\mathbb{E} [ (\tanh \tilde{J})^{2p} ]}{2p(2p-1)} \ \mathbb{E} \big\langle Q_{2p}^{K} - K  (q_{2p}^{(t)})^{K-1} ( Q_{2p} - q_{2p}^{(t)}) - 
(q_{2p}^{(t)})^K \big\rangle_{t,s; \epsilon, \delta} 
\label{eq:interpolation:remainder}
\end{align}
with $$Q_{p} \equiv \frac{1}{n} \sum_{i=1}^{n} \sigma_i^{(1)} \cdots \sigma_i^{(p)}$$ the {\it overlap} 
of $p$ independent {\it replicas} $\underline{\sigma}^{(1)}, \dots , \underline{\sigma}^{(p)}$ 
and 
$$
q_p^{(t)} \equiv \mathbb{E} [(\tanh V^{(t)})^p].
$$
In \eqref{eq:interpolation:remainder} the Gibbs average $\langle - \rangle_{t, s;\epsilon, \delta}$ over a polynomial of $Q_p$ must be understood as an
average over the product measure
\begin{align*}
 \prod_{\alpha=1}^p \frac{1}{{\cal Z}_{t,s; \epsilon,\delta}}
	e^{-{\sH_{t,s; \epsilon,\delta}(\underline{\sigma}^{(\alpha)}, \underline{\tilde{J}}, \underline{U}, \underline{m}, \underline{e}, \underline{H}, \underline{\tilde{H}})}}
\end{align*}
where the quenched variables have the same realization for all replicas. We still denote this Gibbs average by $\langle - \rangle_{t, s;\epsilon, \delta}$ for simplicity.

\subsection{Lower bound}
In order to show the lower bound we need the following important concentration lemma (proven in Sec.~\ref{sec:overlap}), which is at the core of the ``replica symmetric'' behavior of the model:
\begin{lemma}[Concentration of $Q_p^K$ on $\< Q_p \>_{t,s; \epsilon; \theta}^K$]
For any $\sfc \in \sB$, $\underline\sfx\in \mathcal{B}^T$ we have
\begin{align}
\int_{\varepsilon_0}^{\varepsilon_1} d\epsilon \ \mathbb{E}  \big\< \big| Q_p^{K} - \< Q_p \>_{t,s; \epsilon,\delta}^K \big| \big\>_{t,s; \epsilon,\delta}  \leq K \Big ( \frac{3p(\varepsilon_1-\varepsilon_0)}{n} \Big )^{1/2}.
\end{align}
\label{thm:concentration1}
\end{lemma}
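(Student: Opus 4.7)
My plan is to reduce the $K$-th power concentration bound to a one-dimensional integrated variance bound for $Q_1$, which I then control by identifying the Gibbs covariance sum with a derivative of the mean magnetisation in the perturbation parameter $\epsilon$.

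First, since $\<\sigma_i\>_{t,s;\epsilon,\delta}\in[0,1]$ by GKS \eqref{eq:GKS1} and $|\sigma_i^{(\alpha)}|=1$, both $Q_p$ and $\<Q_p\>$ take values in $[-1,1]$, so the factorisation $a^K-b^K=(a-b)\sum_{j=0}^{K-1}a^j b^{K-1-j}$ gives $|Q_p^K-\<Q_p\>^K|\leq K|Q_p-\<Q_p\>|$. Applying Cauchy--Schwarz in $\epsilon$ followed by Jensen's inequality for $\mathbb{E}\<\cdot\>$ bounds the left-hand side of the lemma by $K\sqrt{\epsilon_1-\epsilon_0}\,\sqrt{\int d\epsilon\,\mathbb{E}\<(Q_p-\<Q_p\>)^2\>_{t,s;\epsilon,\delta}}$. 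By replica independence $\mathbb{E}\<(Q_p-\<Q_p\>)^2\>=n^{-2}\sum_{i,j}\mathbb{E}[\<\sigma_i\sigma_j\>^p-\<\sigma_i\>^p\<\sigma_j\>^p]$, and the GKS chain $0\leq\<\sigma_i\>\<\sigma_j\>\leq\<\sigma_i\sigma_j\>\leq 1$ combined with the elementary inequality $a^p-b^p\leq p(a-b)$ for $0\leq b\leq a\leq 1$ yields
\begin{align*}
\mathbb{E}\<(Q_p-\<Q_p\>)^2\>_{t,s;\epsilon,\delta}\ \leq\ p\,\mathbb{E}\<(Q_1-\<Q_1\>)^2\>_{t,s;\epsilon,\delta}.
\end{align*}
Thus it suffices to show $\int_{\epsilon_0}^{\epsilon_1}d\epsilon\,\mathbb{E}\<(Q_1-\<Q_1\>)^2\>_{t,s;\epsilon,\delta}\leq 3/n$.

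The main technical step exploits the BEC form of the $H$-perturbation. Setting $h_k:=1$ if $H_k=+\infty$ and $h_k:=0$ otherwise, so that the $h_k\sim\mathrm{Ber}(\epsilon)$ are i.i.d., the Bernoulli differentiation identity $\frac{d}{d\epsilon}\mathbb{E}[X]=\sum_k(\mathbb{E}[X\mid h_k=1]-\mathbb{E}[X\mid h_k=0])$ applied to $X=\<\sigma_i\>_{t,s;\epsilon,\delta}$, together with the fact that $h_k=1$ pins $\sigma_k=1$ and Bayes' rule, gives, for $k\neq i$,
\begin{align*}
\<\sigma_i\>|_{h_k=1}-\<\sigma_i\>|_{h_k=0}\ =\ \frac{\<\sigma_i\sigma_k\>-\<\sigma_i\>\<\sigma_k\>}{1+\<\sigma_k\>}\bigg|_{h_k=0}\ \geq\ \tfrac{1}{2}\,\mathrm{Cov}(\sigma_i,\sigma_k)|_{h_k=0},
\end{align*}
where the inequality uses GKS ($\mathrm{Cov}\geq 0$ and $\<\sigma_k\>\leq 1$). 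Since $\mathrm{Cov}(\sigma_i,\sigma_k)$ vanishes when $h_k=1$, one has $\mathbb{E}[\mathrm{Cov}\mid h_k=0]=\mathbb{E}[\mathrm{Cov}]/(1-\epsilon)$, and summing over $i$ and $k\neq i$ yields $n^{-2}\sum_{i,k:\,k\neq i}\mathbb{E}[\mathrm{Cov}(\sigma_i,\sigma_k)]\leq 2(1-\epsilon)n^{-1}A'(\epsilon)$, where $A(\epsilon):=\mathbb{E}\<Q_1\>_{t,s;\epsilon,\delta}\in[0,1]$. Adding the trivial diagonal bound $n^{-2}\sum_i\mathbb{E}[1-\<\sigma_i\>^2]\leq 1/n$ produces $\mathbb{E}\<(Q_1-\<Q_1\>)^2\>\leq 1/n+2(1-\epsilon)n^{-1}A'(\epsilon)$. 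Because every contribution to $A'(\epsilon)$ is non-negative by GKS, $A$ is non-decreasing, hence $\int_{\epsilon_0}^{\epsilon_1}(1-\epsilon)A'(\epsilon)\,d\epsilon\leq\int_{\epsilon_0}^{\epsilon_1}A'(\epsilon)\,d\epsilon=A(\epsilon_1)-A(\epsilon_0)\leq 1$, which gives $\int_{\epsilon_0}^{\epsilon_1}\mathbb{E}\<(Q_1-\<Q_1\>)^2\>\,d\epsilon\leq(\epsilon_1-\epsilon_0)/n+2/n\leq 3/n$. Combining the three reductions produces the claimed bound $K\sqrt{3p(\epsilon_1-\epsilon_0)/n}$.

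The principal obstacle is the Bayesian pinning identity above: it is where the precise BEC form of the $H$-perturbation combines with the GKS inequalities, both to extract the correct sign and to guarantee monotonicity of $A$, which is what makes the bound on $\int(1-\epsilon)A'\,d\epsilon$ telescope cleanly. Everything else---the factorisation of $a^K-b^K$, the reduction $p\mapsto 1$ via $a^p-b^p\leq p(a-b)$, and the Jensen/Cauchy--Schwarz cascade in $\epsilon$---is standard.
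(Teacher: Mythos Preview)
Your proof is correct and follows essentially the same route as the paper. Both arguments share the reduction $|a^K-b^K|\leq K|a-b|$, Cauchy--Schwarz in $\epsilon$, and the GKS-based reduction $a^p-b^p\leq p(a-b)$ from general $p$ to $p=1$; the only difference is in how the integrated variance of $Q_1$ is controlled. The paper packages all BEC-type perturbations into a single effective parameter $\mathbb{E}[\bar\epsilon^{(t,s)}]$ and uses the second-derivative formula \eqref{eq:perturbed_f:derivative1b} for the free entropy, whereas you differentiate the mean overlap $A(\epsilon)=\mathbb{E}\langle Q_1\rangle$ directly via the Bernoulli identity for the $H_i$'s and use the pinning formula $\langle\sigma_i\rangle|_{h_k=1}-\langle\sigma_i\rangle|_{h_k=0}=\mathrm{Cov}(\sigma_i,\sigma_k)/(1+\langle\sigma_k\rangle)$. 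These are equivalent: $A(\epsilon)$ is, up to affine factors, the first derivative of the free entropy in the perturbation parameter, so your $A'$ is the paper's second derivative. Your presentation is slightly more elementary in that it avoids the composite parameter and the free-entropy derivative machinery, at the cost of being specific to the case $dx^{(t)}/d\epsilon=0$ (which is exactly the hypothesis of this lemma, so nothing is lost). One small point: your final step $(\epsilon_1-\epsilon_0)/n+2/n\leq 3/n$ tacitly uses $\epsilon_1-\epsilon_0\leq 1$, which holds since $\epsilon\in(0,1)$, but is worth stating.
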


\begin{proposition}[Lower bound]\label{prop:lowerbound}
For $\sfc \in \mathcal{B}$ we have
\begin{align}\label{lower-bound-guerra-style}
\liminf_{n \rightarrow \infty} \frac{1}{n} \mathbb{E}H(\underline{\sigma} | \underline{\tilde{J}}) 
\geq \sup_{\sfx \in \mathcal{B}} h_{\mathrm{RS}}(\sfx).
\end{align}
\end{proposition}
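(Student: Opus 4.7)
The strategy is the standard Guerra--Toninelli style lower bound adapted to the adaptive interpolation. The plan is to freeze the path: for any fixed $\sfx \in \mathcal{B}$, pick the trivial ``non-adaptive'' choice $\sfx^{(t)} = \sfx$ for every $t \in \{1,\ldots,T\}$, so that $q_p^{(t)} = q_p \equiv \mathbb{E}[(\tanh V)^p]$ is independent of $t$, and $\tilde{h}_{\epsilon=0,\delta=0}(\underline{\sfx}) = h_{\mathrm{RS}}(\sfx)$. Then apply the fundamental sum rule \eqref{eq:interpolation:main2}:
\begin{align*}
h_{1,0;\epsilon,\delta} - \tilde{h}_{\epsilon,\delta}(\underline{\sfx}) = \frac{\alpha}{T}\sum_{t=1}^T \int_0^1 ds\,\sR_{t,s;\epsilon,\delta},
\end{align*}
and argue that the right-hand side is asymptotically nonnegative.

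The key step is the sign of the remainder. Because of the concentration Lemma \ref{thm:concentration1}, after integrating the remainder over a small interval $\epsilon \in [\epsilon_0,\epsilon_1]$, one may replace $Q_{2p}^K$ by $\< Q_{2p}\>_{t,s;\epsilon,\delta}^K$ at the cost of an error bounded (termwise in $p$) by $K(6p(\epsilon_1-\epsilon_0)/n)^{1/2}$. After this replacement the integrand reads
\begin{align*}
\sum_{p\geq 1}\frac{\mathbb{E}[(\tanh\tilde{J})^{2p}]}{2p(2p-1)}\,\mathbb{E}\bigl[\<Q_{2p}\>_{t,s;\epsilon,\delta}^K - K q_{2p}^{K-1}(\<Q_{2p}\>_{t,s;\epsilon,\delta} - q_{2p}) - q_{2p}^K\bigr].
\end{align*}
Here every coefficient is nonnegative (even moment of $\tanh\tilde J$), $q_{2p}\in[0,1]$ (even moment), and the GKS inequality \eqref{eq:GKS1} applied on the replicated measure gives $\<Q_{2p}\>_{t,s;\epsilon,\delta} = n^{-1}\sum_i\<\sigma_i\>_{t,s;\epsilon,\delta}^{2p} \geq 0$. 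The bracket is then the tangent-line remainder of the convex function $x\mapsto x^K$ on $\mathbb{R}_+$ and is therefore $\geq 0$. The infinite sum in $p$ converges because for the BEC $\mathbb{E}[(\tanh\tilde J)^{2p}] = 1-q$ for every $p$, and $\sum_p p^{-1/2}(2p(2p-1))^{-1}$ is finite; hence the total concentration error is of order $\alpha\sqrt{(\epsilon_1-\epsilon_0)/n}$, uniformly in $T$, $t$, $s$ and $\delta$.

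Combining these two ingredients, the $\epsilon$-integrated sum rule yields
\begin{align*}
\int_{\epsilon_0}^{\epsilon_1}\!\! d\epsilon\,\bigl[h_{1,0;\epsilon,\delta} - \tilde{h}_{\epsilon,\delta}(\underline{\sfx})\bigr] \geq -C\alpha\sqrt{(\epsilon_1-\epsilon_0)/n}.
\end{align*}
Using Lemma \ref{pert-free} to remove the perturbation on both sides gives, after dividing by $\epsilon_1 - \epsilon_0$,
\begin{align*}
h_{1,0;0,0} \geq h_{\mathrm{RS}}(\sfx) - O\!\left(\epsilon_1 + \delta n^{-\theta}\right) - O\!\left(\alpha/\sqrt{n(\epsilon_1-\epsilon_0)}\right).
\end{align*}
Choosing $\epsilon_1 - \epsilon_0 = n^{-\gamma}$ with $0<\gamma<1$ and then sending $n\to\infty$ followed by $\epsilon_1,\delta\to 0$ gives the result after using $h_{1,0;0,0} = n^{-1}\mathbb{E}H(\underline\sigma|\underline{\tilde J})$ and taking the supremum over $\sfx\in\mathcal{B}$. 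The main technical obstacle is the concentration step: without Lemma \ref{thm:concentration1} one cannot linearise $Q_{2p}^K$ into a function of the first moment $\<Q_{2p}\>$, and it is only after this linearisation that the convexity of $x^K$ together with GKS conspire to deliver the sign. A subsidiary, but easily handled, point is the uniform-in-$p$ control of the concentration error, which requires the $p^{-3/2}$ decay of the coefficients to ensure the infinite sum converges and does not destroy the rate in $n$.
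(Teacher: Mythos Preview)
Your proof is correct and takes essentially the same approach as the paper: split the remainder in the sum rule \eqref{eq:interpolation:main2} into a nonnegative convexity part and a concentration error controlled by Lemma~\ref{thm:concentration1}, integrate over a shrinking $\epsilon$-interval, and remove the perturbation via Lemma~\ref{pert-free}. The only cosmetic differences are that you fix the constant path $\sfx^{(t)}=\sfx$ from the start (so $\tilde h_{0,0}(\underline\sfx)=h_{\mathrm{RS}}(\sfx)$ directly, bypassing Lemma~\ref{equivalence}, which the paper invokes at the end) and that you are more careful about summing the $p$-dependent concentration errors than the paper's somewhat sloppy inequality \eqref{eq:almost-positivity}.
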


\begin{remark}
 The methods of this paper can be extended to show this proposition for $\mathrm{c}, \sfx\in \mathcal{X}$.
\end{remark}

\begin{proof}
Eq.~\eqref{eq:interpolation:main2} implies
\begin{align}
h_{1,0;\epsilon=0, \delta=0} 
	& = \tilde{h}_{\epsilon=0,\delta=0}(\underline{\sfx}) + \frac{\alpha }{T} \sum_{t=1}^{T} \int_0^1 ds \sR_{t,s;\epsilon,\delta}(\underline{\sfx}) \nonumber \\
	& + \big ( \tilde{h}_{\epsilon,\delta}(\underline{\sfx}) - \tilde{h}_{\epsilon=0,\delta=0}(\underline{\sfx}) \big ) 
	- \big ( h_{1,0;\epsilon,\delta} - h_{1,0;\epsilon=0,\delta=0} \big ). \label{eq:bd-start}
\end{align}
We fix $\delta=0$. From \eqref{eq:interpolation:remainder} we have $\sR_{t,s;\epsilon,\delta=0}$ equal to
\begin{align*}
\sum_{p=1}^{\infty} \frac{\mathbb{E}[ (\tanh \tilde{J})^{2p} ] }{2p(2p-1)} 
\Big ( \mathbb{E}
\Big[ \langle Q_{2p} \rangle_{t,s; \epsilon,0}^{K} - K (q_{2p}^{(t)})^{K-1} (\langle Q_{2p} \rangle_{t,s; \epsilon,0} - 
q_{2p}^{(t)}) - (q_{2p}^{(t)})^K \Big ] - \mathbb{E} \< ( Q_p^K - \< Q_p \>_{t,s;\epsilon,0}^K ) \>_{t,s;\epsilon,0} \Big ).
\end{align*}
 Note that the convexity $x\mapsto x^K$ for $x \in \mathbb{R}_{+}$ implies 
$x^K - y^K \geq K y^{K-1} (x-y)\geq 0$ for any $x,y\in \mathbb{R}_+$. As $\<Q_{2p}\>_{t,s; \epsilon,0}=\frac{1}{n} \sum_{i=1}^n \<\sigma_i\>_{t,s; \epsilon,0}^{2p} \geq 0$ and $q_{2p}^{(t)} \geq 0$,
\begin{align*}
 \langle Q_{2p} \rangle_{t,s; \epsilon,0}^{K} - K (q_{2p}^{(t)})^{K-1} (\langle Q_{2p} \rangle_{t,s; \epsilon,0} - 
q_{2p}^{(t)}) - (q_{2p}^{(t)})^K \geq 0.
\end{align*}
Thus with Lemma~\ref{thm:concentration1} we obtain
\begin{align}
\frac{1}{\epsilon_n} \int_{\epsilon_n}^{2\epsilon_n} d\epsilon\, \sR_{t,s;\epsilon,0} \geq - (\ln 2) K \Big ( \frac{3p}{\epsilon_n n} \Big )^{1/2}.
\label{eq:almost-positivity}
\end{align}
Now we average both side of \eqref{eq:bd-start} over $\epsilon \in [\epsilon_n, 2 \epsilon_n]$ for some sequence $\epsilon_n$ specified at the end. Using \eqref{eq:almost-positivity} and Lemma~\ref{pert-free}
\begin{align*}
h_{1,0;\epsilon=0,\delta=0} \geq \tilde{h}_{\epsilon=0,\delta=0}(\underline{\sfx}) + \mathcal{O} \big ( \frac{1}{\sqrt{\epsilon_n n}} \big ) + \mathcal{O}(\epsilon_n).
\end{align*}
Choosing $\epsilon_n = n^{-\gamma}$ with $0<\gamma<1$, we conclude
\begin{align*}
\liminf_{n \rightarrow \infty} h_{1,0;\epsilon=0,\delta=0} \geq \tilde{h}_{\epsilon=0,\delta=0}(\underline{\sfx}).
\end{align*}
Finally one can take the supremum of the right hand side and use \eqref{eq:h_RS_coupled:equivalence} as well as \eqref{h10ish} to obtain \eqref{lower-bound-guerra-style}.
\end{proof}

\subsection{Upper bound}
In this paragraph we crucially use the specificities of the BEC. We take interpolating paths $\underline{\sfx}=(\sfx^{(1)},\cdots, \sfx^{(T)})\in \mathcal{B}^T$, where 
$\sfx^{(t)} = x^{(t)}\Delta_0 + (1- x^{(t)})\Delta_{\infty}$.
In particular we use the following lemma (proven in Sec.~\ref{sec:proof:BEC:overlap}):
\begin{lemma}
For any $\sfc \in \sB$, and any interpolation path $\underline\sfx(\epsilon) \in \mathcal{B}^T$ depending on $\epsilon$, and any $A \subseteq \{ 1, \dots, n \}$ we have $\< \sigma_A \>_{t,s; \epsilon,\delta}\in\{0,1\}$.
\label{thm:BEC:overlap}
\end{lemma}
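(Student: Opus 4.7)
The plan is to observe that on the BEC with $\sfc \in \mathcal{B}$ and $\underline{\sfx}(\epsilon) \in \mathcal{B}^T$, every random half-log-likelihood entering the interpolating Hamiltonian $\sH_{t,s;\epsilon,\delta}$ takes values in the two-point set $\{0,+\infty\}$, so the Gibbs distribution degenerates to the uniform distribution on a subgroup of $(\{-1,+1\}^n,\cdot)$, and from there $\<\sigma_A\>_{t,s;\epsilon,\delta}\in\{0,1\}$ follows by an elementary homomorphism argument.

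First I would verify that every weight is $\mathcal{B}$-distributed. Each channel log-likelihood $\tilde{J}_A$ lies in $\{0,+\infty\}$ since $\sfc\in\mathcal{B}$; the perturbations $H_i$ and $\tilde{H}_i$ are in $\{0,+\infty\}$ by construction; and each half-edge weight $U_{B\to i}^{(t')}$ or $U_{C\to i}^{(t)}$ is a sample from $\tilde{\sfx}^{(t')}$, i.e.\ the law of $U^{(t')}=\tanh^{-1}(\tanh\tilde{J}\prod_{i=1}^{K-1}\tanh V_i^{(t')})$ with $V_i^{(t')}\sim\sfx^{(t')}\in\mathcal{B}$. Since $\tanh$ sends $\{0,+\infty\}$ to $\{0,1\}$ and $\{0,1\}$ is closed under products, $U^{(t')}\in\{0,+\infty\}$, so $\tilde{\sfx}^{(t')}\in\mathcal{B}$.

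Next I would reinterpret the Gibbs measure. In $e^{-\sH_{t,s;\epsilon,\delta}}$ every factor is either $1$ (when the weight is $0$) or enforces a hard constraint of the form $\sigma_S=+1$ for some $S\subseteq\{1,\ldots,n\}$ (when the weight is $+\infty$, using $e^{+\infty(\sigma_S-1)}=\id\{\sigma_S=+1\}$). Let $F\subseteq\{-1,+1\}^n$ be the set of configurations satisfying all enforced constraints. Then $\<\cdot\>_{t,s;\epsilon,\delta}$ is precisely the uniform measure on $F$. Because every constraint is homogeneous, the all-ones configuration lies in $F$, and $F$ is closed under componentwise multiplication; hence $F$ is a subgroup of the abelian group $(\{-1,+1\}^n,\cdot)$.

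Finally, for any $A\subseteq\{1,\ldots,n\}$ the map $\phi_A:F\to\{-1,+1\}$, $\phi_A(\underline{\sigma})=\sigma_A$, is a group homomorphism. Its image is a subgroup of $\{-1,+1\}$, leaving only two possibilities: either $\mathrm{Im}\,\phi_A=\{+1\}$, which forces $\<\sigma_A\>_{t,s;\epsilon,\delta}=1$, or $\mathrm{Im}\,\phi_A=\{-1,+1\}$, in which case the kernel has index $2$ and $|F\cap\phi_A^{-1}(+1)|=|F\cap\phi_A^{-1}(-1)|=|F|/2$, yielding $\<\sigma_A\>_{t,s;\epsilon,\delta}=0$. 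The only point requiring care is the closure of $\mathcal{B}$ under the BP update $U^{(t)}=\tanh^{-1}(\tanh\tilde{J}\prod\tanh V_i^{(t)})$, but this is immediate. The statement is really a clean algebraic dichotomy; no probabilistic estimate is needed and there is no substantial obstacle.
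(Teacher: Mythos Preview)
Your proof is correct and takes a genuinely different route from the paper's. The paper argues as follows: the first GKS inequality gives $\langle\sigma_A\rangle_{t,s;\epsilon,\delta}\in[0,1]$, hence $\langle\sigma_A\rangle_{t,s;\epsilon,\delta}(1-\langle\sigma_A\rangle_{t,s;\epsilon,\delta})\geq 0$; the Nishimori identity $\mathbb{E}[\langle\sigma_A\rangle_{t,s;\epsilon,\delta}]=\mathbb{E}[\langle\sigma_A\rangle_{t,s;\epsilon,\delta}^2]$ then forces $\mathbb{E}[\langle\sigma_A\rangle_{t,s;\epsilon,\delta}(1-\langle\sigma_A\rangle_{t,s;\epsilon,\delta})]=0$, so the nonnegative integrand vanishes almost surely. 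Your argument instead identifies the Gibbs measure explicitly as the uniform law on a subgroup $F\leq(\{-1,+1\}^n,\cdot)$ and reads off the $\{0,1\}$-dichotomy from the homomorphism $\sigma\mapsto\sigma_A$. The paper's approach is shorter and reuses tools (GKS, Nishimori) already in place for other parts of the argument, but it only yields the conclusion almost surely (harmless here since all randomness is discrete). Your approach is more elementary and self-contained---it needs neither GKS nor Nishimori---and gives the result for every realization directly, together with a structural explanation of \emph{why} the magnetizations are $\{0,1\}$-valued on the BEC.
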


Notice that $\< Q_p^k \>_{t,s; \epsilon,\delta} = \frac{1}{n^k}\sum_{i_1,\dots,i_k=1}^n \langle\sigma_{i_1} \cdots \sigma_{i_k}\rangle_{t,s; \epsilon,\delta}^p$ for any $k \in \mathbb{N}$. 
Lemma~\ref{thm:BEC:overlap} then implies 
$$
\< Q_p^k \>_{t,s; \epsilon,\delta} = \< Q_1^k \>_{t,s; \epsilon,\delta}
$$
for all $p \in \mathbb{N}^{*}$. We also have $\tanh V^{(t)}\in\{0,1\}$ because $\sfx^{(t)}\in \mathcal{B}$ and thus $$q_p^{(t)}= q_1^{(t)} \ \forall \ p \in \mathbb{N}^{*}.$$ 
Finally recall that $\sfc(\tilde J|1) = (1-q)\Delta_\infty + q \Delta_0$, therefore $\mathbb{E}[(\tanh \tilde{J})^{2p}] = 1- q$. 
These facts reduce \eqref{eq:interpolation:remainder} to
\begin{align}
\sR_{t,s; \epsilon, \delta} = (1-q)(\ln 2) \mathbb{E}
\Big [ \langle Q_{1}^K \rangle_{t,s; \epsilon, \delta} - K (q_{1}^{(t)})^{K-1} (\langle Q_{1} \rangle_{t,s; \epsilon, \delta} - q_{1}^{(t)}) - 
(q_{1}^{(t)})^K \Big ].
\label{eq:interpolation:remainder:simplify}
\end{align}
We then split the remainder as follows:
\begin{align}
\sR_{t,s; \epsilon, \delta} 
& = (\sR_{t,s; \epsilon, \delta} - \sR_{t,0; \epsilon, \delta}) + \sR_{t,0; \epsilon, \delta} \nonumber \\
& = (\sR_{t,s; \epsilon, \delta} - \sR_{t,0; \epsilon, \delta})
+ (1-q)(\ln 2) \Big ( \mathbb{E} [ \< Q_{1} \>_{t,0; \epsilon, 0} ]^K - K (q_{1}^{(t)})^{K-1} (\mathbb{E} \< Q_{1} \>_{t,0; \epsilon, 0} - q_{1}^{(t)}) - 
(q_{1}^{(t)})^K \Big ) \nonumber \\
& + (1-q) (\ln 2) \big ( \mathbb{E} \<Q_1^K\>_{t,0;\epsilon,\delta} - \mathbb{E}[\<Q_1\>_{t,0;\epsilon,\delta}]^K \big ) \nonumber \\
& + (1-q) (\ln 2) \big ( \mathbb{E}[\<Q_1\>_{t,0;\epsilon,\delta}]^K - \mathbb{E}[\<Q_1\>_{t,0;\epsilon,0}]^K
- K (q_1^{(t)})^{K-1} ( \mathbb{E}[\<Q_1\>_{t,0;\epsilon,\delta}] - \mathbb{E}[\<Q_1\>_{t,0;\epsilon,0}] ) \big )
\label{eq:interpolation:remainder:decompose}
\end{align}
and treat each part thanks to the three following lemmas.
Lemma \ref{thm:bd_fixed_t} is proven in Sec.~\ref{sec:proof:bd_fixed_t}, lemma \ref{thm:concentration2} in Sec.~\ref{sec:overlap}, 
and lemma~\ref{thm:rmv-delta} in Sec.~\ref{sec:proof:rmv-delta}).

\begin{lemma}[Weak $s$-dependence at fixed $t$]\label{thm:bd_fixed_t}
For any $k \in \mathbb{N}$ and $s\in [0,1]$ we have
\begin{align}
\big| \mathbb{E}\< Q_1^k \>_{t,s; \epsilon,\delta}  - \mathbb{E} \< Q_1^k \>_{t,0; \epsilon,\delta}  \big| \leq  \frac{2\alpha (K+1)n}{T}.
\end{align}
\end{lemma}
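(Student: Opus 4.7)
The plan is to control the $s$-derivative of $\mathbb{E}\<Q_1^k\>_{t,s;\epsilon,\delta}$ pointwise and then integrate over $s$. Inspecting Algorithm~\ref{algo:G(t,s)}, the continuous parameter $s$ enters the law of $\sG_{t,s}$ only through the rates of two independent families of Poisson variables at layer $t$: the $n$ half-edge counts $e_{i,s}^{(t)}\sim\mathrm{Poi}(\alpha K s/T)$ at variable nodes $i=1,\ldots,n$ (rate slope $+\alpha K/T$), and the total number of factor nodes $m_s^{(t)}\sim\mathrm{Poi}(\alpha n(T-t+1-s)/T)$ (rate slope $-\alpha n/T$). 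All other ingredients entering $\sH_{t,s;\epsilon,\delta}$ are $s$-independent.

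The key tool is the elementary Poisson derivative identity $\frac{d}{d\lambda}\mathbb{E}[F(N)]=\mathbb{E}[F(N+1)-F(N)]$ for $N\sim\mathrm{Poi}(\lambda)$ and any bounded $F$. Applying it with the chain rule to each of the $n+1$ rates identified above yields
\begin{align*}
\frac{d}{ds}\mathbb{E}\<Q_1^k\>_{t,s;\epsilon,\delta}
&= \frac{\alpha K}{T}\sum_{i=1}^{n}\mathbb{E}\bigl[\<Q_1^k\>^{(+i)}_{t,s;\epsilon,\delta}-\<Q_1^k\>_{t,s;\epsilon,\delta}\bigr]\\
&\quad-\frac{\alpha n}{T}\,\mathbb{E}\bigl[\<Q_1^k\>^{(+A)}_{t,s;\epsilon,\delta}-\<Q_1^k\>_{t,s;\epsilon,\delta}\bigr],
\end{align*}
where $(+i)$ denotes the Gibbs bracket obtained after adjoining one extra half-edge at node $i$ carrying a fresh independent weight $U\sim\tilde\sfx^{(t)}$, and $(+A)$ after adjoining one extra factor node with fresh half-log-likelihood $\tilde J\sim\sfc$ attached to a uniform random $K$-subset; the outer $\mathbb{E}$ also averages over the newly introduced quenched variables.

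The crux is then a deliberately crude bound: since $Q_1=\tfrac1n\sum_i\sigma_i\in[-1,1]$ one has $|\<Q_1^k\>|\le 1$ under \emph{any} Gibbs measure regardless of which perturbation is applied, so each of the two bracketed differences above is at most $2$ in absolute value. Summing the $n$ half-edge contributions with the single factor-node contribution by the triangle inequality gives
\begin{align*}
\Big|\frac{d}{ds}\mathbb{E}\<Q_1^k\>_{t,s;\epsilon,\delta}\Big|\le 2\cdot\frac{\alpha Kn}{T}+2\cdot\frac{\alpha n}{T}=\frac{2\alpha(K+1)n}{T},
\end{align*}
and integrating over $s'\in[0,s]\subset[0,1]$ concludes. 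The main obstacle is purely bookkeeping, namely verifying that the only $s$-dependence is carried by the two Poisson families; the a priori bound $|Q_1^k|\le 1$ makes everything else trivial. The resulting estimate is loose by design and will be useful downstream only because $T$ is eventually chosen to grow much faster than $n$, so that $n/T\to 0$.
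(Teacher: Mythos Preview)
Your proof is correct and follows essentially the same approach as the paper's own argument: both write the difference via the fundamental theorem of calculus, apply the Poisson derivative identity \eqref{eq:dPoissonMean} to the $n$ half-edge counts $e_{i,s}^{(t)}$ and to $m_s^{(t)}$, and then bound each resulting difference crudely by $2$ using $\bigl|\langle Q_1^k\rangle\bigr|\le 1$ (the paper equivalently uses $\bigl|\langle\sigma_{i_1}\cdots\sigma_{i_k}\rangle\bigr|\le 1$ after expanding $Q_1^k$). The only cosmetic difference is that the paper expands $\langle Q_1^k\rangle$ into its $n^k$ summands before bounding, whereas you keep $Q_1^k$ as a single bounded observable; the arithmetic is identical.
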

\begin{lemma}[Concentration of $\< Q_1 \>_{t,s; \epsilon,\delta}^K$ on $\mathbb{E}{[}\langle Q_1 \rangle_{t,s; \epsilon,\delta}{]}^K$ ]\label{thm:concentration2}

For any $\sfc \in \sB$ and $\underline\sfx(\epsilon) \in \mathcal{B}^T$ such that every component satisfies $dx^{(t)}/d\epsilon \geq 0$ we have for $\theta\in (0, 1/5]$
\begin{align}
\int_{\delta_0}^{\delta_1} d\delta \int_{\varepsilon_0}^{\varepsilon_1} d\epsilon \mathbb{E}\big[ \big| \< Q_1 \>_{t,s; \epsilon,\delta}^{K}  - \mathbb{E} [\< Q_1 \>_{t,s; \epsilon,\delta}]^K \big| \big ] 
	= \mathcal{O} \bigg ( \bigg ( (\delta_1-\delta_0)(\varepsilon_1-\varepsilon_0) \frac{\delta_1-\delta_0 + \varepsilon_1-\varepsilon_0}{n^{\theta}} \bigg )^{1/2} \bigg ).
\label{eq:concentration2}
\end{align}
\end{lemma}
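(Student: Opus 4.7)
By Lemma~\ref{thm:BEC:overlap} applied with $A=\{i\}$ one has $\<\sigma_i\>_{t,s;\epsilon,\delta}\in\{0,1\}$, so $\<Q_1\>_{t,s;\epsilon,\delta}=n^{-1}\sum_i\<\sigma_i\>_{t,s;\epsilon,\delta}\in[0,1]$ and also $\mathbb{E}\<Q_1\>_{t,s;\epsilon,\delta}\in[0,1]$. The inequality $|a^K-b^K|\le K|a-b|$ on $[0,1]$, then Jensen ($\mathbb{E}|X-\mathbb{E}X|\le\sqrt{\mathrm{Var}(X)}$), then Cauchy--Schwarz on the double integral reduce \eqref{eq:concentration2} to establishing
\[\int_{\delta_0}^{\delta_1}\!\!\int_{\varepsilon_0}^{\varepsilon_1}\mathrm{Var}(\<Q_1\>_{t,s;\epsilon,\delta})\,d\epsilon\,d\delta=\mathcal{O}\!\Big(\frac{\delta_1-\delta_0+\varepsilon_1-\varepsilon_0}{n^{\theta}}\Big).\]
I then split
\[\mathrm{Var}(\<Q_1\>)\le\mathbb{E}\<(Q_1-\mathbb{E}\<Q_1\>)^2\>=\mathbb{E}\<(Q_1-\<Q_1\>)^2\>+\mathbb{E}(\<Q_1\>-\mathbb{E}\<Q_1\>)^2\]
and bound the thermal (first) and quenched (second) pieces separately.

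\textbf{Thermal bound via $\tilde H$-pinning.} Writing $p=\delta n^{-\theta}$, the partial free entropy $H_{t,s;\epsilon,\delta}=n^{-1}\mathbb{E}_{\underline{\tilde H}}\ln\mathcal Z_{t,s;\epsilon,\delta}$ is a polynomial in $p$ of degree at most $n$ and takes values in $[0,\ln 2]$. A first discrete difference of $H$ in $p$ equals, by $\<\id_{\sigma_i=1}\>=(1+\<\sigma_i\>)/2$ together with Lemma~\ref{thm:BEC:overlap}, $(-\ln 2)n^{-1}\sum_i\mathbb{E}(1-\<\sigma_i\>)$. A second discrete difference, after the usual Nishimori/conditional-expectation rearrangement $\<\sigma_i\>_{B_j=1}-\<\sigma_i\>_{B_j=0}=(\<\sigma_i\sigma_j\>-\<\sigma_i\>\<\sigma_j\>)/(1+\<\sigma_j\>)$, produces the sum of two-point covariances $\mathbb{E}[\<\sigma_i\sigma_j\>-\<\sigma_i\>\<\sigma_j\>]$ — up to a denominator in $[1,2]$ and diagonal terms of order $n$, this is precisely $n^2\,\mathbb{E}\<(Q_1-\<Q_1\>)^2\>$. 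By GKS~\eqref{eq:GKS2} these covariances are nonnegative, so $-\partial_p^2 H\ge 0$; integrating it against $dp$ picks up at most $|\partial_p H|^{\max}=\mathcal{O}(1)$, and changing variables back to $\delta$ gives
\[\int_{\delta_0}^{\delta_1}\mathbb{E}\<(Q_1-\<Q_1\>)^2\>_{t,s;\epsilon,\delta}\,d\delta=\mathcal{O}(1/n^{1-\theta})\]
uniformly in $\epsilon$. The thermal contribution is therefore $\mathcal{O}((\varepsilon_1-\varepsilon_0)/n^{1-\theta})$, which is $o((\varepsilon_1-\varepsilon_0)/n^{\theta})$ since $\theta\le 1/5<1/2$.

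\textbf{Quenched bound via concentration of $H$.} An analogous discrete $\epsilon$-difference in the $H_i$-perturbation identifies, via the same $\<\id_{\sigma_i=1}\>=(1+\<\sigma_i\>)/2$ computation, $\<Q_1\>_{t,s;\epsilon,\delta}$ (up to a $\ln 2$ prefactor and boundary terms) with a first $\epsilon$-difference quotient of $H_{t,s;\epsilon,\delta}$. On the other hand $H_{t,s;\epsilon,\delta}$, viewed as a function of the remaining quenched data — the graph $\sG_{t,s}$, the weights $\underline U$, the observations $\underline{\tilde J}$ and the $H_i$'s — has bounded differences: altering any single one of them modifies $\ln\mathcal Z_{t,s;\epsilon,\delta}$ by $\mathcal{O}(1)$, so McDiarmid / Efron--Stein yields $\mathrm{Var}(H_{t,s;\epsilon,\delta})=\mathcal{O}(1/n)$. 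The monotonicity hypothesis $dx^{(t)}/d\epsilon\ge 0$ ensures that $H_{t,s;\epsilon,\delta}$ is a monotone (in fact convex) function of $\epsilon$ along the interpolation path; averaging the $\epsilon$-difference-quotient expression for $\<Q_1\>$ over $\epsilon\in[\varepsilon_0,\varepsilon_1]$ and applying Cauchy--Schwarz then converts the $\mathcal{O}(n^{-1/2})$ concentration of $H$ into
\[\int_{\varepsilon_0}^{\varepsilon_1}\mathbb{E}(\<Q_1\>-\mathbb{E}\<Q_1\>)^2\,d\epsilon=\mathcal{O}(1/n^{\theta})\]
uniformly in $\delta$. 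Integration in $\delta\in[\delta_0,\delta_1]$ contributes $\mathcal{O}((\delta_1-\delta_0)/n^{\theta})$, which combined with the thermal bound completes the required estimate of $\int\!\!\int\mathrm{Var}(\<Q_1\>)$.

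\textbf{Main obstacle.} The delicate step is the quenched one. Because $\epsilon$ enters the system through a discrete Bernoulli parameter (the mass of $H_i$ at $+\infty$), the dense-case trick of differentiating the free energy twice in a continuous field and reading off the thermal plus quenched variance is unavailable. One must instead work with a finite $\epsilon$-difference of $H$, and the fact that this difference has a definite sign and can be meaningfully averaged along the interpolation path relies crucially on the structural assumption $dx^{(t)}/d\epsilon\ge 0$. The restriction $\theta\in(0,1/5]$ then arises from balancing the pinning strength $n^{-\theta}$ (strong enough for concentration of $H$ to transfer to $\<Q_1\>$) against the perturbation error $\epsilon+\delta n^{-\theta}$ of Lemma~\ref{pert-free} (which must remain $o(1)$ while the concentration rate remains meaningful).
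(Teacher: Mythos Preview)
Your reduction step and your thermal bound are both fine; in fact the thermal bound can be obtained either by integrating in $\epsilon$ (as the paper does in Lemma~\ref{thm:overlap:thermal:concentration}) or in $\delta$ as you do, and your rate $\mathcal{O}(n^{\theta-1})$ is more than enough.

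The quenched step, however, has a genuine gap. You claim that $\<Q_1\>_{t,s;\epsilon,\delta}$ can be identified with an $\epsilon$-difference quotient of $H_{t,s;\epsilon,\delta}=n^{-1}\mathbb{E}_{\underline{\tilde H}}\ln\mathcal Z$, and then use concentration of $H$ around $h$ plus convexity in $\epsilon$. But $H_{t,s;\epsilon,\delta}$ is \emph{not} a function of $\epsilon$: it is averaged only over $\underline{\tilde H}$ and still depends on the actual realization of $\underline H\in\{0,\infty\}^n$. The parameter $\epsilon$ only enters once you also average over $\underline H$, at which point the derivative yields $\mathbb{E}_{\underline{\tilde H},\underline H}\<Q_1\>$, not $\<Q_1\>$. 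So your convexity/concentration argument can at best control $\mathbb{E}_{\underline{\tilde H},\underline H}\<Q_1\>-\mathbb{E}\<Q_1\>$, and you have no mechanism to bound the remaining fluctuation $\<Q_1\>-\mathbb{E}_{\underline{\tilde H},\underline H}\<Q_1\>$. (There is also a secondary issue: with the path $\underline\sfx(\epsilon)$ depending on $\epsilon$, convexity of $H$ and $h$ in $\epsilon$ is not immediate, whereas the paper chose $\underline\sfx$ independent of $\delta$ precisely so that convexity in $\delta$ is clean.)

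This is exactly why the paper uses a \emph{three}-way split rather than two: it writes $\<Q_1\>-\mathbb{E}\<Q_1\>$ as $(\<Q_1\>-\mathbb{E}_{\underline{\tilde H}}\<Q_1\>)+(\mathbb{E}_{\underline{\tilde H}}\<Q_1\>-\mathbb{E}\<Q_1\>)$. The second piece is handled by the convexity trick in the $\delta$-variable (well-defined because $H$ \emph{is} averaged over $\underline{\tilde H}$), producing the rate $n^{-(1-2\theta)/3}$ after optimizing the window $\xi$; multiplying by the prefactor $n^{2\theta}$ coming from $\partial_\delta H\sim n^{-\theta}\,\mathbb{E}_{\underline{\tilde H}}\<Q_1\>$ is what forces $\theta=(1-2\theta)/3$, i.e.\ $\theta\le 1/5$. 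The first piece requires an additional Efron--Stein argument on $\underline{\tilde H}$ (Lemma~\ref{thm:overlap:pert:concentration}), which in turn is bounded using the thermal concentration; this step is entirely absent from your proposal. Without it your quenched bound does not close.
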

\begin{lemma}
For any $\sfc \in \sB$ and $\underline\sfx(\epsilon) \in \mathcal{B}^T$ such that every component satisfies $dx^{(t)}/d\epsilon \geq 0$ we have
\begin{align}
& \bigg \vert \int_{\delta_0}^{\delta_1} d\delta \int_{\varepsilon_0}^{\varepsilon_1} d\epsilon \Big \{ \mathbb{E}[\<Q_1\>_{t,0;\epsilon,\delta}]^K - \mathbb{E}[\<Q_1\>_{t,0;\epsilon,0}]^K \nonumber \\
& \hspace{1cm} - K (q_1^{(t)})^{K-1} ( \mathbb{E}[\<Q_1\>_{t,0;\epsilon,\delta}] - \mathbb{E}[\<Q_1\>_{t,0;\epsilon,0}] ) \Big \} \bigg \vert \leq \frac{3K(\delta_1^2-\delta_0^2)}{n^{\theta}(1-\delta_1/n^{\theta})}\,.
\end{align}
where $\theta\in (0, 1/5]$.
\label{thm:rmv-delta}
\end{lemma}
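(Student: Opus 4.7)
Setting $f_\epsilon(\delta):=\mathbb{E}[\langle Q_1\rangle_{t,0;\epsilon,\delta}]$ and $c:=q_1^{(t)}$, the integrand in braces equals $h_\epsilon(\delta)=f_\epsilon(\delta)^K-f_\epsilon(0)^K-Kc^{K-1}(f_\epsilon(\delta)-f_\epsilon(0))$. I first apply the elementary identity $a^K-b^K-Kc^{K-1}(a-b)=K\int_b^a(u^{K-1}-c^{K-1})du$. Lemma~\ref{thm:BEC:overlap} gives $\langle\sigma_i\rangle_{t,0;\epsilon,\delta}\in\{0,1\}$, hence $f_\epsilon(\delta)\in[0,1]$; and $\sfx^{(t)}\in\sB$ gives $c=1-x^{(t)}\in[0,1]$. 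Thus $|u^{K-1}-c^{K-1}|\le 1$ on the integration range, and GKS~\eqref{eq:GKS2} (pinning $\tilde H_i=\infty$ monotonically raises $\langle\sigma_j\rangle$) gives $f_\epsilon(\delta)\ge f_\epsilon(0)$. This produces the pointwise bound
\[|h_\epsilon(\delta)|\;\le\; K\bigl(f_\epsilon(\delta)-f_\epsilon(0)\bigr).\]

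The crucial observation is that $H_i\sim\epsilon\Delta_\infty+(1-\epsilon)\Delta_0$ and $\tilde H_i\sim(\delta/n^\theta)\Delta_\infty+(1-\delta/n^\theta)\Delta_0$ are independent Bernoulli pinning fields, and the interpolating Hamiltonian~\eqref{eq:Hamiltonian_tse} depends on $(\underline H,\underline{\tilde H})$ only through the sum $H_i+\tilde H_i$, whose distribution is $r\Delta_\infty+(1-r)\Delta_0$ with $r:=1-(1-\epsilon)(1-\delta/n^\theta)$. Consequently $f_\epsilon(\delta)=g(r(\epsilon,\delta))$ for some function $g$, and the chain rule with $\partial_\delta r=(1-\epsilon)/n^\theta$ and $\partial_\epsilon r=1-\delta/n^\theta$ yields
\[\partial_\delta f_\epsilon(\delta)\;=\;\frac{1-\epsilon}{n^\theta(1-\delta/n^\theta)}\,\partial_\epsilon f_\epsilon(\delta),\]
trading the $\delta$-derivative for an $\epsilon$-derivative at the cost of exactly the factor $[n^\theta(1-\delta/n^\theta)]^{-1}$ appearing in the stated bound.

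Now write $f_\epsilon(\delta)-f_\epsilon(0)=\int_0^\delta \partial_{\delta'}f_\epsilon(\delta')\,d\delta'$, integrate in $\epsilon$, swap the $(\epsilon,\delta')$ integrals by Fubini, and integrate by parts in $\epsilon$:
\[\int_{\epsilon_0}^{\epsilon_1}(1-\epsilon)\,\partial_\epsilon f_\epsilon(\delta')\,d\epsilon\;=\;\bigl[(1-\epsilon)f_\epsilon(\delta')\bigr]_{\epsilon_0}^{\epsilon_1}+\int_{\epsilon_0}^{\epsilon_1}f_\epsilon(\delta')\,d\epsilon.\]
Since $f_\epsilon(\delta')\in[0,1]$ by Lemma~\ref{thm:BEC:overlap}, the absolute value of the last display is at most $(1-\epsilon_0)+(1-\epsilon_1)+(\epsilon_1-\epsilon_0)\le 2$. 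Therefore $\int_{\epsilon_0}^{\epsilon_1}(f_\epsilon(\delta)-f_\epsilon(0))\,d\epsilon\le 2\delta/[n^\theta(1-\delta/n^\theta)]$. Multiplying by $K$ and integrating over $\delta\in[\delta_0,\delta_1]$ after pulling out $(1-\delta/n^\theta)^{-1}\le(1-\delta_1/n^\theta)^{-1}$ yields the bound $K(\delta_1^2-\delta_0^2)/[n^\theta(1-\delta_1/n^\theta)]$, which is comfortably $\le 3K(\delta_1^2-\delta_0^2)/[n^\theta(1-\delta_1/n^\theta)]$ as claimed.

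The main conceptual step is the second one: recognising the gauge-like substitution $(\epsilon,\delta)\mapsto r$ that identifies the $\delta$-perturbation with a rescaled $\epsilon$-perturbation. Without it, the natural route is to compute $\partial_\delta f_\epsilon$ directly via the Bayes-rule covariance formula $\partial_\delta f_\epsilon = n^{1-\theta}\mathbb{E}\langle(Q_1-\langle Q_1\rangle)^2\rangle/(1-\delta/n^\theta)$, but the thermal variance is not uniformly $O(1/n)$ in a supercritical phase of the decoding graph, so that approach only delivers the weaker concentration-type estimate of Lemma~\ref{thm:concentration2} rather than the sharper $1/n^\theta$ scaling required here.
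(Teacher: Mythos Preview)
Your pointwise bound $|h_\epsilon(\delta)|\le K(f_\epsilon(\delta)-f_\epsilon(0))$ is fine (and slightly sharper than the paper's $2K$). The gap is in your ``crucial observation''. You assert that $f_\epsilon(\delta)=g(r(\epsilon,\delta))$ depends on $(\epsilon,\delta)$ only through $r=1-(1-\epsilon)(1-\delta/n^\theta)$. But the interpolating Hamiltonian at $(t,0)$ also contains the half-edge fields $U_{B\to i}^{(t')}$ for $t'<t$, whose distributions $\tilde\sfx^{(t')}$ depend on $\underline\sfx(\epsilon)$; by hypothesis $\underline\sfx$ genuinely varies with $\epsilon$. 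In the paper's notation the correct reduction is $f_\epsilon(\delta)=\tilde g\bigl(\mathbb E[\bar\epsilon^{(t,0)}]\bigr)$ with $\mathbb E[\bar\epsilon^{(t,0)}]$ given by \eqref{eq:E-epsilon-bar}, and your chain-rule identity $\partial_\delta f=\frac{1-\epsilon}{n^\theta(1-\delta/n^\theta)}\partial_\epsilon f$ is simply false for $t\ge 2$. It can be salvaged as the \emph{inequality} $\partial_\delta f\le\frac{1-\epsilon}{n^\theta(1-\delta/n^\theta)}\partial_\epsilon f$, because $dx^{(t')}/d\epsilon\ge 0$ and GKS make the extra $\epsilon$-dependence through $\underline\sfx$ monotone in the right direction; with that repair your integration-by-parts bound does go through.

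Your closing paragraph also mischaracterises the ``natural route''. The paper's proof \emph{does} compute $\partial_\delta f_\epsilon$ via the covariance identity \eqref{eq:Q1:delta-der}, obtaining $\partial_\delta f_\epsilon=\frac{n^{1-\theta}}{1-\delta/n^\theta}\,\mathbb E\langle(Q_1-\langle Q_1\rangle)^2\rangle$, and then integrates over $\epsilon$ and invokes Lemma~\ref{thm:overlap:thermal:concentration}, which bounds $\int_{\epsilon_0}^{\epsilon_1}\mathbb E\langle(Q_1-\langle Q_1\rangle)^2\rangle\,d\epsilon\le 3/n$. This delivers exactly the $n^{-\theta}$ scaling you claim it cannot: the thermal variance need not be $O(1/n)$ pointwise in $\epsilon$, but its $\epsilon$-integral is, and that is all the statement requires since it already carries an $\epsilon$-integral.
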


Now we look into each term of \eqref{eq:interpolation:remainder:decompose}. Lemma~\ref{thm:bd_fixed_t} and \eqref{eq:interpolation:remainder:simplify} imply 
$$
|\sR_{t,s; \epsilon, \delta} - \sR_{t,0; \epsilon, \delta}| \leq \frac{2 (\ln 2) \alpha (K+1)^2 n}{T} = \mathcal{O}(\frac{n}{T}).
$$
Since $T$ is a free parameter (controlling the mean of $e_{i,s}^{(t)}$ and $m_s^{(t)}$) we can set it 
significantly larger than $n$. The first term of \eqref{eq:interpolation:remainder:decompose} thus can be neglected and it is sufficient to work with $\sR_{t,0; \epsilon, \delta}$. 
This separation is important because we use that $\mathbb{E} \langle Q_{1} \rangle_{t,0; \epsilon,0}$ 
(in the second term of \eqref{eq:interpolation:remainder:decompose}) is independent of $\{ \sfx^{(t')} \}_{t' \geq t}$. Also recall $q_1^{(t)}\equiv \mathbb{E} \tanh V^{(t)}$. 
This allows us to sequentially choose a distribution 
$\hat{\sfx}_n^{(t)}$ for $V^{(t)}$ along our interpolation from $t=1$ to $T$ such that the following equation is satisfied:
\begin{align}
q_1^{(t)} = \mathbb{E} \langle Q_1 \rangle_{t,0; \epsilon, 0} .
\label{eq:momentMatching}
\end{align}
In other words the interpolation path is {\it adapted} so that \eqref{eq:momentMatching} holds, which then cancels the second term in \eqref{eq:interpolation:remainder:decompose}. This 
path is also independent of $\delta$ because we have set $\delta=0$ in the Gibbs expectation \eqref{eq:momentMatching} as well as in the second term of \eqref{eq:interpolation:remainder:decompose}. 
We must still check that equation \eqref{eq:momentMatching} possesses a (unique) solution, see Sec.~\ref{sec:proof:momentMatching:solution} for the proof:
\begin{lemma}[Existence of the optimal interpolation path]\label{thm:momentMatching:solution}
Eq. \eqref{eq:momentMatching} has 
a unique solution $\underline{\hat{\sfx}}_n(\epsilon) = \{ \hat{\sfx}^{(t)}_n \}_{t=1}^T \in \sB^T$. The solution $\hat{\sfx}_n \equiv \hat{x}_n^{(t)} \Delta_\infty + (1-\hat{x}_n^{(t)}) \Delta_0$ satisfies $d \hat{x}_n^{(t)}/d\epsilon \geq 0$.
\end{lemma}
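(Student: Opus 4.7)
The first step is to turn \eqref{eq:momentMatching} into a concrete scalar identity. Since $\hat{\sfx}_n^{(t)} = \hat{x}_n^{(t)} \Delta_\infty + (1-\hat{x}_n^{(t)}) \Delta_0$, the variable $V^{(t)}$ equals $+\infty$ with probability $\hat{x}_n^{(t)}$ and $0$ with probability $1-\hat{x}_n^{(t)}$, so $\tanh V^{(t)} \in \{0,1\}$ and $q_1^{(t)} = \mathbb{E}[\tanh V^{(t)}] = \hat{x}_n^{(t)}$. Thus \eqref{eq:momentMatching} reduces to the scalar equation $\hat{x}_n^{(t)} = \mathbb{E}\langle Q_1 \rangle_{t,0;\epsilon,0}$.

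The crucial structural observation is that the slice $s=0$ kills the $t$-th term in Algorithm~\ref{algo:G(t,s)}: since $e_{i,0}^{(t)} \sim \mathrm{Poi}(0)$ is almost surely $0$, no half-edge of type $t$ is attached, and similarly $m_0^{(t)} \sim \mathrm{Poi}(\alpha n(T-t+1)/T)$ does not involve $\hat{\sfx}_n^{(t)}$. Consequently $\sH_{t,0;\epsilon,0}$ depends only on $\hat{\sfx}_n^{(1)},\ldots,\hat{\sfx}_n^{(t-1)}$ and on $\epsilon$, \emph{not} on $\hat{\sfx}_n^{(t)}$ itself. This makes \eqref{eq:momentMatching} solvable by an explicit forward recursion: starting from $t=1$, where $\mathbb{E}\langle Q_1 \rangle_{1,0;\epsilon,0}$ depends only on $\epsilon$, I set $\hat{x}_n^{(t)} \equiv \mathbb{E}\langle Q_1 \rangle_{t,0;\epsilon,0}$ with the Gibbs measure built from the previously chosen $\hat{\sfx}_n^{(1)},\ldots,\hat{\sfx}_n^{(t-1)}$. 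This construction yields both existence and uniqueness. Well-definedness in $\mathcal{B}$ is automatic: by Lemma~\ref{thm:BEC:overlap}, $\langle \sigma_i \rangle_{t,0;\epsilon,0}\in\{0,1\}$, so $\langle Q_1\rangle_{t,0;\epsilon,0}\in[0,1]$ and $\hat{x}_n^{(t)}\in[0,1]$.

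The monotonicity $d\hat{x}_n^{(t)}/d\epsilon \geq 0$ is established by induction on $t$. The underlying principle is that increasing $\epsilon$ (i.e.\ the Bernoulli parameter of the $\tilde{H}_i$-independent perturbations $H_i$) injects additional perfect side-observations, and likewise increasing a previous $\hat{x}_n^{(t')}$ raises the probability that a half-edge carries weight $\infty$: in the BEC the induced distribution is $\tilde{\sfx}^{(t')} = (1-q)(\hat{x}_n^{(t')})^{K-1}\Delta_\infty + \bigl(1-(1-q)(\hat{x}_n^{(t')})^{K-1}\bigr)\Delta_0$, so its probability of $\infty$ is a monotone function of $\hat{x}_n^{(t')}$. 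Since all couplings in $\sH_{t,0;\epsilon,0}$ are non-negative, GKS~\eqref{eq:GKS2} implies that $\langle \sigma_i\rangle_{t,0;\epsilon,0}$ is a non-decreasing function of each Bernoulli indicator that decides whether a given half-edge or perturbation is $\infty$ rather than $0$. A stochastic coupling (equivalently the Margulis--Russo identity $d\mathbb{E}\langle\sigma_i\rangle/d\epsilon = \sum_j (\mathbb{E}[\langle\sigma_i\rangle\mid\eta_j=1]-\mathbb{E}[\langle\sigma_i\rangle\mid\eta_j=0])\geq 0$) then yields
\[
\frac{\partial}{\partial\epsilon}\mathbb{E}\langle Q_1\rangle_{t,0;\epsilon,0}\geq 0,\qquad \frac{\partial}{\partial \hat{x}_n^{(t')}}\mathbb{E}\langle Q_1\rangle_{t,0;\epsilon,0}\geq 0\ \ (t'<t).
\]
Combined with the inductive hypothesis $d\hat{x}_n^{(t')}/d\epsilon\geq 0$ for all $t'<t$, the chain rule concludes $d\hat{x}_n^{(t)}/d\epsilon \geq 0$.

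I expect the main obstacle to lie in the second monotonicity, namely $\partial_{\hat{x}_n^{(t')}}\mathbb{E}\langle Q_1\rangle \geq 0$, because the dependence on $\hat{x}_n^{(t')}$ enters through the whole distribution of the half-edge weights $U^{(t')}$ rather than through a single Bernoulli parameter; however, once one conditions on the latent $\{0,\infty\}$-valued indicators of each half-edge and uses GKS with the non-negativity of all couplings, the argument reduces exactly to the $\epsilon$-monotonicity case. The rest of the lemma (existence, uniqueness, and membership in $\mathcal{B}^T$) is direct once the $s=0$ decoupling is noticed.
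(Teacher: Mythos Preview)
Your argument is correct and follows the same skeleton as the paper: the $s=0$ decoupling makes the right-hand side of \eqref{eq:momentMatching} depend only on $\{\hat\sfx^{(t')}\}_{t'<t}$, so existence and uniqueness come from the explicit forward recursion $\hat x_n^{(t)}=\mathbb{E}\langle Q_1\rangle_{t,0;\epsilon,0}$, and monotonicity in $\epsilon$ is proved by induction on $t$ via GKS. The difference is in how the monotonicity step is packaged. The paper observes that, for the BEC, all the $\{0,\infty\}$-valued local fields at node $i$ (the $H_i$, the half-edge weights from each $t'<t$) can be collapsed into a single effective Bernoulli field $\bar H_i^{(t,0)}$ with parameter $\mathbb{E}[\bar\epsilon^{(t,0)}]$, so that $d\hat x^{(t)}/d\epsilon$ factorises as $(d\mathbb{E}[\bar\epsilon^{(t,0)}]/d\epsilon)\cdot(d\mathbb{E}\langle\sigma_1\rangle/d\mathbb{E}[\bar\epsilon^{(t,0)}])$; the second factor is shown non-negative by an explicit ``pull out one $\bar H_j$'' computation yielding a sum of GKS covariances, and the first factor is non-negative by induction. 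You instead keep the sources separate, show $\partial_\epsilon\mathbb{E}\langle Q_1\rangle\ge 0$ and $\partial_{\hat x^{(t')}}\mathbb{E}\langle Q_1\rangle\ge 0$ for each $t'<t$ via a coupling/Margulis--Russo argument (monotone function of independent Bernoullis has expectation monotone in the parameters), and conclude by the chain rule plus induction. Both routes are valid; the paper's aggregation into $\bar\epsilon$ is slightly slicker, while your formulation makes the role of each monotonicity source more transparent and avoids the explicit derivative computation.
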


Fixing $\underline{\sfx} = \underline{\hat{\sfx}}_n(\epsilon)$, lemmas~\ref{thm:concentration2} and \ref{thm:rmv-delta} 
are used to upper bound the last two terms of \eqref{eq:interpolation:remainder:decompose} upon integrating over $\delta,\epsilon$. 
The solution of \eqref{eq:momentMatching}, that eliminates $\sR_{t,s; \epsilon, \delta}$, therefore can be considered as the ``optimal interpolation path''.
In summary, using lemmas~\ref{thm:bd_fixed_t} to \ref{thm:momentMatching:solution} on \eqref{eq:interpolation:remainder:decompose} we have
\begin{align}
\bigg \vert \int_{0}^{1} d\delta \int_{\epsilon_n}^{2\epsilon_n} d\epsilon \sR_{t,s;\epsilon,\delta} \big ( \underline{\hat{\sfx}}_n(\epsilon) \big ) \bigg \vert 
	= \mathcal{O} \big (\frac{n\epsilon_n}{T} \big ) + \mathcal{O} \big ( \frac{\sqrt{\epsilon_n}}{n^{\theta/2}} \big ) + \mathcal{O} \big ( \frac{1}{n^{\theta}} \big )
	\label{eq:R-order}
\end{align}
for any sequence $\epsilon_n$.
We are now ready to prove the upper bound.
\begin{proposition}[Upper bound]\label{prop:upperbound}
For any $\sfc \in \sB$ we have
\begin{align}
\limsup_{n \rightarrow \infty} \frac{1}{n} \mathbb{E} H(\underline{\sigma} | \underline{\tilde{J}}) 
\leq \sup_{\sfx \in \mathcal{B}} h_{\mathrm{RS}}(\sfx).
\end{align}
\end{proposition}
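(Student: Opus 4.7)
The plan is to replicate the structure of the lower bound proof but in the opposite direction, exploiting the fact that equation \eqref{eq:R-order} controls the \emph{two-sided} size of the remainder once one adopts the adapted path $\underline{\hat{\sfx}}_n(\epsilon)$ produced by Lemma~\ref{thm:momentMatching:solution}. Starting from the fundamental sum rule \eqref{eq:interpolation:main2} evaluated at $\underline{\sfx}=\underline{\hat{\sfx}}_n(\epsilon)$, I would first pass to the perturbed system and write
\[
h_{1,0;\epsilon,\delta}=\tilde{h}_{\epsilon,\delta}\bigl(\underline{\hat{\sfx}}_n(\epsilon)\bigr)+\frac{\alpha}{T}\sum_{t=1}^T\int_0^1 ds\,\sR_{t,s;\epsilon,\delta}\bigl(\underline{\hat{\sfx}}_n(\epsilon)\bigr),
\]
and then integrate both sides over $\delta\in[0,1]$ and $\epsilon\in[\epsilon_n,2\epsilon_n]$.

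The next step is to argue that the integrated remainder is negligible. With $\underline{\sfx}$ equal to the adapted path, the four-term decomposition \eqref{eq:interpolation:remainder:decompose} is treated term by term: the first piece is $\mathcal{O}(n/T)$ by Lemma~\ref{thm:bd_fixed_t}, the second piece vanishes identically because \eqref{eq:momentMatching} holds along the path (Lemma~\ref{thm:momentMatching:solution} guarantees existence of a solution with $dx^{(t)}/d\epsilon\ge 0$, which is precisely the hypothesis required to invoke the concentration lemmas), the third piece is controlled after averaging in $(\epsilon,\delta)$ by the concentration Lemma~\ref{thm:concentration2}, and the fourth piece by Lemma~\ref{thm:rmv-delta}. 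Collecting these bounds gives exactly \eqref{eq:R-order}, and choosing $T\gg n$ together with $\epsilon_n=n^{-\gamma}$ for some $\gamma\in(0,\theta)$ with $\theta\in(0,1/5]$ forces the averaged remainder to vanish as $n\to\infty$.

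Putting this together, after dividing by $\epsilon_n$ the averaged identity yields
\[
h_{1,0;\epsilon=0,\delta=0}\le\frac{1}{\epsilon_n}\int_{\epsilon_n}^{2\epsilon_n}d\epsilon\,\tilde{h}_{\epsilon,\delta=0}\bigl(\underline{\hat{\sfx}}_n(\epsilon)\bigr)+o(1),
\]
where I have used Lemma~\ref{pert-free} to absorb the $\mathcal{O}(\epsilon+\delta/n^\theta)$ discrepancy between perturbed and unperturbed free entropies (the $\delta$ integration only costs a further $\mathcal{O}(n^{-\theta})$ factor). I would then bound the integrand pointwise by $\sup_{\underline{\sfx}\in\mathcal{B}^T}\tilde{h}_{\epsilon,\delta=0}(\underline{\sfx})$, use Lemma~\ref{pert-free} once more to replace this by $\sup_{\underline{\sfx}\in\mathcal{B}^T}\tilde{h}_{0,0}(\underline{\sfx})$ up to $\mathcal{O}(\epsilon_n)$, and finally invoke Lemma~\ref{equivalence} to identify this last supremum with $\sup_{\sfx\in\mathcal{B}}h_{\mathrm{RS}}(\sfx)$. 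Recalling \eqref{h10ish} and taking $\limsup_{n\to\infty}$ concludes the proof.

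The main obstacle is the bookkeeping of the joint limit: one must choose the parameters in the correct order ($T/n\to\infty$ first, then $\epsilon_n,\delta\to 0$ at rates compatible with Lemmas~\ref{thm:concentration2} and \ref{thm:rmv-delta}, with $\theta\le 1/5$). The conceptual delicacy is that the adapted path $\underline{\hat{\sfx}}_n(\epsilon)$ depends on $n$ and $\epsilon$, so one cannot simply commute the supremum with the limit; the argument works only because we first upper bound $\tilde{h}_{\epsilon,\delta}$ at the specific adapted path by the uniform supremum and \emph{only afterwards} send $n\to\infty$. All remaining calculations are routine applications of the cited lemmas.
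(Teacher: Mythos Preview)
Your proposal is correct and follows essentially the same route as the paper: evaluate the sum rule along the adapted path $\underline{\hat{\sfx}}_n(\epsilon)$, average over $(\epsilon,\delta)$, use \eqref{eq:R-order} (fed by Lemmas~\ref{thm:bd_fixed_t}--\ref{thm:momentMatching:solution}) to kill the remainder, absorb the perturbations via Lemma~\ref{pert-free}, and finish with the trivial bound $\tilde h_{0,0}(\underline{\hat{\sfx}}_n(\epsilon))\le \sup_{\underline{\sfx}\in\mathcal{B}^T}\tilde h_{0,0}(\underline{\sfx})$ and Lemma~\ref{equivalence}. The only cosmetic difference is that the paper starts from the rearranged identity \eqref{eq:bd-start}, which already isolates $\tilde h_{\epsilon=0,\delta=0}(\underline{\sfx})$ on the right, so Lemma~\ref{pert-free} is applied once rather than in two stages; your constraint $0<\gamma<\theta$ on $\epsilon_n=n^{-\gamma}$ is in fact the correct one (the paper's stated range $0<\gamma<1$ is a slip).
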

\begin{proof}
We evaluate \eqref{eq:bd-start} at $\underline{\sfx} = \underline{\hat{\sfx}}_n(\epsilon)$ and average the equation over $\delta \in [0,1]$, $\epsilon \in [\epsilon_n, 2 \epsilon_n]$. Using \eqref{eq:R-order} and Lemma~\ref{pert-free}
\begin{align*}
h_{1,0;\epsilon=0,\delta=0}
	& = \tilde{h}_{\epsilon=0,\delta=0}(\underline{\hat{\sfx}}_n(\epsilon)) + \mathcal{O} \big (\frac{n}{T} \big ) + \mathcal{O} \big ( \frac{1}{\sqrt{\epsilon_n n^{\theta}}} \big ) + \mathcal{O} \big ( \frac{1}{\epsilon_n n^{\theta}} \big ) + \mathcal{O}(\epsilon_n + \frac{1}{n^{\theta}}).
\end{align*}
Choosing $\epsilon_n = n^{-\gamma}$ with $0<\gamma<1$, we conclude
\begin{align}
\limsup_{n \rightarrow \infty} h_{1,0;\epsilon=0,\delta=0} = \tilde{h}_{\epsilon=0,\delta=0}(\underline{\hat{\sfx}}_n(\epsilon)).
\label{eq:up-bd-eq}
\end{align}
A trivial upper bound together with Lemma~\ref{eq:h_RS_coupled:equivalence} gives
\begin{align}
\tilde{h}_{\epsilon=0,\delta=0}(\underline{\hat{\sfx}}_n(\epsilon))
	\leq \sup_{\underline{\sfx} \in \mathcal{B}^T} \tilde{h}_{\epsilon=0,\delta=0}(\underline{\sfx})
	= \sup_{\sfx \in \mathcal{B}} h_{\mathrm{RS}}(\sfx). \label{eq:trivial-ub}
\end{align}
The proof is ended by substituting \eqref{eq:trivial-ub} into \eqref{eq:up-bd-eq} and using \eqref{h10ish}.
\end{proof}


%
\section{Proof of the fundamental sum rule~\eqref{eq:interpolation:main2}--\eqref{eq:interpolation:remainder}}
\label{proof:interpolation:main2}

Similar computations go back to \cite{FranzLeone} and were applied in Nishimori symmetric situations in \cite{Mon:2005,Mac:2007a,KudM:2009}, so we will 
be relatively brief.
We compute $h_{T,1; \epsilon,\delta}$ and $\frac{dh_{t,s; \epsilon,\delta}}{ds}$ in \eqref{eq:interpolation:main1}.
From the definitions \eqref{eq:Hamiltonian_tse}, \eqref{eq:partitionFunction_tse}, \eqref{eq:htse}, and the identity $e^{\sigma x}=(1+\sigma\tanh x)\cosh x$ for $\sigma \in\{-1,+1\}$, we can expand $h_{T,1;\epsilon,\delta}$ as
\begin{align}
h_{T,1; \epsilon,\delta}
	= \mathbb{E}\Big[ & \ln\Big( \prod_{t'=1}^{T} \prod_{B=1}^{e_{i}^{(t')}} (1 + \tanh U_{B}^{(t')}) 
	+ e^{-2(H+\tilde{H})} \prod_{t'=1}^{T} \prod_{B=1}^{e_{i}^{(t')}} (1 - \tanh U_{B}^{(t')})\Big)
	\nonumber \\ &
	- \frac{\alpha K}{T} \sum_{t'=1}^{T} \ln(1+\tanh U^{(t')})\Big].
		\label{eq:h_T1}
\end{align}
Note that the first term is part of~\eqref{eq:RSFreeEntropy_coupled1}.
For $\frac{dh_{t,s; \epsilon,\delta}}{ds}$ we use the following property of the Poisson distribution: for any function $f(X)$ of a r.v. $X$ with Poisson distribution and mean $\nu$ we have
\begin{align}
\frac{d \,\mathbb{E}f(X)}{d \nu} = \mathbb{E} f(X+1) - \mathbb{E} f(X).
\label{eq:dPoissonMean}
\end{align}
This allows us to write
\begin{align}
\frac{dh_{t,s; \epsilon,\delta}}{ds} \-
	& = - \frac{\alpha }{T} \mathbb{E}_{t,s; \epsilon,\delta}\mathbb{E}_{B, \tilde{J}_B }  \ln \< e^{\tilde{J}_{B} (\sigma_{B}-1)} \>_{t,s; \epsilon,\delta}  \-
		+ \frac{\alpha K}{nT} \sum_{i=1}^{n} \mathbb{E}_{t,s; \epsilon,\delta}\mathbb{E}_{U_i^{(t)}} \ln \< e^{U_{i}^{(t)} (\sigma_{i}-1)} \>_{t,s; \epsilon,\delta} 
		\label{eq:dh_ts}
\end{align}
where we distinguish the expectation $\mathbb{E}_{t,s; \epsilon,\delta}$ with respect to the original interpolating model with Hamiltonian \eqref{eq:Hamiltonian_tse} 
and the expectation with respect to 
an ``extra measurement'' and its neighborood $\mathbb{E}_{B, \tilde{J}_B }$ and an ``extra field'' $\mathbb{E}_{U_i^{(t)}}$.
Standard algebra, using again the identity $e^{\pm x}=(1\pm\tanh x)\cosh x$, leads to 
\begin{align}
\mathbb{E}  \ln \< e^{\tilde{J}_{B} (\sigma_{B} -1)} \>_{t,s; \epsilon,\delta} 
	& = \mathbb{E}_{t,s; \epsilon,\delta}\mathbb{E}_{B,\tilde{J}_B}  
	 \ln \big( 1 + \langle \sigma_B \rangle_{t,s; \epsilon,\delta} \tanh \tilde{J}_B \big ) 
	- \mathbb{E}_{\tilde{J}_B} \ln(1+ \tanh \tilde J_B) \nonumber \\
	& =\sum_{p=1}^{\infty} \frac{(-1)^{p+1}}{p} \frac{1}{n^K} \sum_{i_1,\dots,i_K} 
	\mathbb{E}[\langle \sigma_{i_1} \cdots \sigma_{i_K} \rangle_{t,s; \epsilon,\delta}^p] 
	\mathbb{E}[(\tanh \tilde{J})^p] 
	-\mathbb{E} \ln(1+ \tanh \tilde J)
	\nonumber 
\end{align}
and similarly, using \eqref{eq:BP-update},
\begin{align}
	& \frac{1}{n} \sum_{i=1}^{n} \mathbb{E} \ln \< e^{U_{i}^{(t)} (\sigma_{i}-1)} \>_{t,s; \epsilon,\delta}  \nonumber \\
	 = ~&\frac{1}{n}\sum_{i=1}^{n} \mathbb{E}_{t,s; \epsilon,\delta}\mathbb{E}_{U_i^{(t)}} 
	\ln \big ( 1 + \langle \sigma_i \rangle_{t,s; \epsilon,\delta} \tanh U_i^{(t)}  \big ) 
	-
	\frac{1}{n}\sum_{i=1}^{n} \mathbb{E}\ln(1+ \tanh U_i^{(t)})
	\nonumber \\
	=~& \frac{1}{n}\sum_{i=1}^{n} \mathbb{E}_{t,s; \epsilon,\delta}\mathbb{E}_{\tilde{J}, \underline{V}^{(t)} } 
	 \ln\big( 1 + \langle \sigma_i \rangle_{t,s; \epsilon,\delta} \tanh \tilde{J} \prod_{j=1}^{K-1} \tanh V_j^{(t)} \big) 
	-  \mathbb{E}\ln(1+ \tanh U^{(t)}) \nonumber \\
	=~& \sum_{p=1}^{\infty} \frac{(-1)^{p+1}}{p}  \mathbb{E}[(\tanh \tilde{J})^p]
	\mathbb{E} [ (\tanh V^{(t)})^p]^{K-1}\frac{1}{n} 
	\sum_{i=1}^n \mathbb{E}[\langle \sigma_i \rangle_{t,s; \epsilon,\delta}^p] 
	- \mathbb{E} \ln(1+ \tanh U^{(t)}). \nonumber 
\end{align}
Recall $Q_{p} \equiv \frac{1}{n} \sum_{i} \sigma_i^{(1)} \cdots \sigma_i^{(p)}$ and thus 
$$
\< Q_p \>_{t,s; \epsilon,\delta} = \frac{1}{n} \sum_{i} \langle \sigma_{i}\rangle_{t,s; \epsilon,\delta}^p, 
\quad \< Q_p^K \>_{t,s; \epsilon,\delta} = \frac{1}{n^K} \sum_{i_1,\dots,i_K} \langle \sigma_{i_1} \ldots \sigma_{i_K} \rangle_{t,s; \epsilon,\delta}^p\,.
$$
Recall also $q_p^{(t)} \equiv \mathbb{E} [(\tanh V^{(t)})^p]$. Then \eqref{eq:dh_ts} becomes
\begin{align}
\frac{dh_{t,s; \epsilon,\delta}}{ds} \-
	 & =
	 - \frac{\alpha}{T} \sum_{p=1}^{\infty} \frac{(-1)^{p+1}}{p} 
	\mathbb{E}[(\tanh \tilde{J})^p] 
	\mathbb{E}[\langle Q_p^K \rangle_{t,s; \epsilon,\delta} - K (q_p^{(t)})^{K-1} \langle Q_p \rangle_{t,s; \epsilon,\delta}] 
	\nonumber \\
	& \ + \frac{\alpha}{T}\mathbb{E}\ln (1 + \tanh\tilde J) 
	- \frac{\alpha K}{T}\mathbb{E}\ln (1 + \tanh U^{(t)})  
	\nonumber \\
	& = - \frac{\alpha}{T} \sum_{p=1}^{\infty} \frac{(-1)^{p+1}}{p} \mathbb{E}[( \tanh \tilde{J})^p] 
	\mathbb{E} \big\langle Q_{p}^{K} - K (q_{p}^{(t)})^{K-1} ( Q_{p} - q_{p}^{(t)}) - (q_{p}^{(t)})^K \big\rangle_{t,s; \epsilon,\delta} \nonumber \\
	& \ + \frac{\alpha(K-1)}{T} \mathbb{E}\ln\big( 1 + \tanh \tilde{J} \prod_{j=1}^{K} \tanh V_j^{(t)}\big) 
	+ \frac{\alpha}{T} \mathbb{E}\ln(1+ \tanh \tilde J) 
	- \frac{\alpha K}{T} \mathbb{E}\ln(1+ \tanh U^{(t)}) .
	\label{eq:dh_st _final}
\end{align}
Substituting \eqref{eq:h_T1} and \eqref{eq:dh_st _final} into \eqref{eq:interpolation:main1} gives~\eqref{eq:interpolation:main2}, where
\begin{align}
\sR_{t,s; \epsilon,\delta} = \sum_{p=1}^{\infty} \frac{(-1)^{p+1}}{p} \mathbb{E}[ (\tanh \tilde{J})^{p} ]
\mathbb{E} \big\langle Q_{p}^{K} - K (q_{p}^{(t)})^{K-1} ( Q_{p} - q_{p}^{(t)}) - (q_{p}^{(t)})^K 
\big\rangle_{t,s; \epsilon,\delta}.
\label{eq:interpolation:remainder:unpaired}
\end{align}
An application of \eqref{eq:Nishimori} yields 
$$
\mathbb{E}\langle Q_{2p-1}^m\rangle_{t, s;\epsilon,\delta} = \mathbb{E}\langle Q_{2p}^m\rangle_{t, s;\epsilon,\delta}
$$ 
for all $m\in \mathbb{N}$ and $p\geq 1$.
Similarly 
an application of \eqref{eq:channelSymmetry} yields $$q_{2p-1}^{(t)} = q_{2p}^{(t)} \quad \text{as well as} \quad \mathbb{E}[ (\tanh \tilde{J})^{2p-1} ]= \mathbb{E}[ (\tanh \tilde{J})^{2p} ]$$ for $p\geq 1$. Therefore 
combining the odd and even terms of \eqref{eq:interpolation:remainder:unpaired} we obtain the form in \eqref{eq:interpolation:remainder}.

\section{Concentration of overlaps I: Proof of lemmas~\ref{thm:concentration1} and \ref{thm:concentration2}}
\label{sec:overlap}
In this section we prove lemmas \ref{thm:concentration1} and \ref{thm:concentration2}. We need the following lemmas proved in the next section \ref{sec:thermal-and-quenched}. For lemma \ref{thm:concentration1}
it suffices to take an interpolation path $\underline{\sfx}\in \mathcal{B}^T$ independent of $\epsilon$ and $\delta$. However for $\ref{thm:concentration2}$ we need 
to take $\underline{\sfx}(\epsilon) \in \mathcal{B}^{T}$ dependent 
on $\epsilon$ (and independent of $\delta$). We therefore formulate the lemmas below for an $\epsilon$-dependent interpolation path. 

\begin{lemma}[Concentration of $Q_p$ on $\< Q_{p} \>_{t,s;\epsilon, \delta}$]\label{thm:overlap:thermal:concentration}
For any $\sfc \in \mathcal{B}$ and any choice of interpolating path $\underline{\sfx}(\epsilon) \in \mathcal{B}^{T}$ such that every component satisfies $dx^{(t)} / d\epsilon \geq 0$, we have
\begin{align}
\int_{\varepsilon_0}^{\varepsilon_1} d\epsilon \, \mathbb{E} \big\< ( Q_{p} - \< Q_{p} \>_{t,s;\epsilon, \delta} )^2\big \>_{t,s;\epsilon, \delta} 
	& \leq \frac{3p}{n} \label{eq:overlap:thermal:concentration}
\end{align}
uniformly in $t, s, \delta$.
\end{lemma}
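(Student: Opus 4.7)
The plan is to use the standard Bayesian fluctuation--dissipation scheme: relate the thermal variance of $Q_p$ to the $\epsilon$-derivative of a non-decreasing, uniformly bounded quantity, and conclude by the fundamental theorem of calculus. Since $\sfc\in\mathcal{B}$ and each component of $\underline{\sfx}(\epsilon)$ lies in $\mathcal{B}$, all couplings $\tilde J_A$, $U^{(t)}_{B\to i}$, $H_i$, $\tilde H_i$ in the interpolating Hamiltonian \eqref{eq:Hamiltonian_tse} are non-negative, so the GKS inequalities \eqref{eq:GKS1}--\eqref{eq:GKS2} apply throughout; in particular $\langle\sigma_i\rangle_{t,s;\epsilon,\delta}, \langle\sigma_i\sigma_j\rangle_{t,s;\epsilon,\delta}\in[0,1]$ and $\langle\sigma_i\sigma_j\rangle\geq\langle\sigma_i\rangle\langle\sigma_j\rangle$.

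First I would expand the variance through two replicas as
\begin{align*}
\mathbb{E}\bigl\langle (Q_p - \langle Q_p\rangle_{t,s;\epsilon,\delta})^2\bigr\rangle_{t,s;\epsilon,\delta}
= \frac{1}{n^2}\sum_{i,j=1}^{n}\mathbb{E}\bigl[\langle\sigma_i\sigma_j\rangle^p - \langle\sigma_i\rangle^p\langle\sigma_j\rangle^p\bigr]_{t,s;\epsilon,\delta}.
\end{align*}
The elementary inequality $a^p - b^p \leq p(a-b)$ for $0\leq b\leq a\leq 1$, applied with $a=\langle\sigma_i\sigma_j\rangle$ and $b=\langle\sigma_i\rangle\langle\sigma_j\rangle$ (ordered by GKS), upper-bounds each summand by $p\bigl(\langle\sigma_i\sigma_j\rangle - \langle\sigma_i\rangle\langle\sigma_j\rangle\bigr)$, so it suffices to prove the $p$-independent estimate
\begin{align*}
\int_{\varepsilon_0}^{\varepsilon_1}d\epsilon\, \frac{1}{n^2}\sum_{i,j=1}^{n}\mathbb{E}\bigl[\langle\sigma_i\sigma_j\rangle - \langle\sigma_i\rangle\langle\sigma_j\rangle\bigr]_{t,s;\epsilon,\delta} \leq \frac{3}{n}.
\end{align*}

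Next I would identify the integrand with a lower bound on an $\epsilon$-derivative. Because the $H_j\sim\epsilon\Delta_\infty+(1-\epsilon)\Delta_0$ are i.i.d., Leibniz's rule gives $\frac{d}{d\epsilon}\mathbb{E}_{\underline{H}}[f] = \sum_{j=1}^n\mathbb{E}_{\underline{H}_{-j}}[f|_{H_j=\infty}-f|_{H_j=0}]$ for any bounded $f$. Applied to $f=\langle\sigma_i\rangle_{t,s;\epsilon,\delta}$, the standard Gibbs pinning identity $\langle\sigma_i\rangle|_{H_j=\infty} = (\langle\sigma_i\rangle+\langle\sigma_i\sigma_j\rangle)/(1+\langle\sigma_j\rangle)|_{H_j=0}$ makes the $j$th term equal to $(\langle\sigma_i\sigma_j\rangle-\langle\sigma_i\rangle\langle\sigma_j\rangle)/(1+\langle\sigma_j\rangle)|_{H_j=0}$, which is non-negative by GKS and at least half of the correlation since $\langle\sigma_j\rangle\leq 1$. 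When $H_j=\infty$ the spin $\sigma_j$ is pinned to $+1$ and the correlation vanishes identically, so the mixture decomposition $\mathbb{E}[\,\cdot\,]=(1-\epsilon)\mathbb{E}^{(H_j=0)}[\,\cdot\,]+\epsilon\mathbb{E}^{(H_j=\infty)}[\,\cdot\,]$ simplifies to $\mathbb{E}^{(H_j=0)}[\mathrm{corr}_{ij}]=(1-\epsilon)^{-1}\mathbb{E}[\mathrm{corr}_{ij}]$. Any extra contribution to $\frac{d}{d\epsilon}\mathbb{E}\langle Q_1\rangle$ coming from the $\epsilon$-dependence of $\underline{\sfx}(\epsilon)$ is non-negative by $dx^{(t)}/d\epsilon\geq 0$ combined with ferromagnetic monotonicity, and can therefore be discarded to obtain
\begin{align*}
\frac{d}{d\epsilon}\mathbb{E}\langle Q_1\rangle_{t,s;\epsilon,\delta} \geq \frac{1}{2n(1-\epsilon)}\sum_{i,j=1}^{n}\mathbb{E}\bigl[\langle\sigma_i\sigma_j\rangle - \langle\sigma_i\rangle\langle\sigma_j\rangle\bigr]_{t,s;\epsilon,\delta}.
\end{align*}

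Finally I would integrate the last display over $\epsilon\in[\varepsilon_0,\varepsilon_1]$: by the fundamental theorem of calculus the right-hand side contributes at most $2\bigl[\mathbb{E}\langle Q_1\rangle_{t,s;\varepsilon_1,\delta}-\mathbb{E}\langle Q_1\rangle_{t,s;\varepsilon_0,\delta}\bigr]\leq 2$, yielding the $p=1$ estimate (with room to spare in the constants) and hence, via the first step, the claimed $3p/n$ bound. The argument is uniform in $(t,s,\delta)$ because GKS and the Bayesian structure are preserved along the entire interpolating path. The main technical obstacle, in my view, is the bookkeeping for the $\epsilon$-dependence of $\underline{\sfx}(\epsilon)$: one must verify rigorously that the extra contribution it adds to $\frac{d}{d\epsilon}\mathbb{E}\langle Q_1\rangle$ is indeed non-negative and hence can be safely discarded, which ultimately reduces to another application of GKS on the $U$-edges of the interpolating graph.
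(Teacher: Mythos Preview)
Your proposal is correct and follows essentially the same fluctuation--dissipation route as the paper: expand the variance, use GKS to reduce $a^p-b^p\le p(a-b)$, and bound the integrated sum of correlations by the total variation of a monotone, $[0,1]$-valued observable in $\epsilon$, discarding the non-negative contribution from the $\epsilon$-dependence of $\underline{\sfx}$. The only cosmetic difference is that the paper carries this out via the second derivative of the free entropy in the effective parameter $\mathbb{E}[\bar\epsilon^{(t,s)}]$ (formulas \eqref{eq:perturbed_f:derivative1a}--\eqref{eq:perturbed_f:derivative1b}), whereas you differentiate $\mathbb{E}\langle Q_1\rangle$ directly and invoke the pinning identity; both recover the same correlation sum and the same telescoping bound (yours even yields $2p/n$ rather than $3p/n$).
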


\begin{lemma}[Concentration of $\<Q_1\>$ on $\mathbb{E}_{\underline{\tilde{H}}} \< Q_{1} \>_{t,s;\epsilon, \delta}$]\label{thm:overlap:pert:concentration}
For any $\sfc \in \mathcal{B}$ and $\underline{\sfx}(\epsilon) \in \mathcal{B}^{T}$ and any choice of interpolating path such that every component satisfies $dx^{(t)} / d\epsilon \geq 0$, we have
\begin{align}
\int_{\varepsilon_0}^{\varepsilon_1} d\epsilon \, \mathbb{E} \big [( \< Q_{1} \>_{t,s;\epsilon, \delta} - \mathbb{E}_{\underline{\tilde{H}}} \< Q_{1} \>_{t,s;\epsilon, \delta} )^2 \big ]
	& \leq \frac{3\delta}{n^{\theta}} \label{eq:overlap:pert:concentration}
\end{align}
for any $\theta\in (0, 1]$, uniformly in $t, s, \delta$.
\end{lemma}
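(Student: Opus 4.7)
My strategy is an Efron--Stein bounded-differences argument for the $\underline{\tilde H}$-variance, combined with a symmetry between the $\underline{H}$- and $\underline{\tilde H}$-perturbations in the interpolating Hamiltonian \eqref{eq:Hamiltonian_tse}. First, by the tower property,
\begin{align*}
\mathbb{E}\big[(\langle Q_1\rangle_{t,s;\epsilon,\delta}-\mathbb{E}_{\underline{\tilde H}}\langle Q_1\rangle_{t,s;\epsilon,\delta})^2\big] = \mathbb{E}_{\mathrm{ex}}\big[\mathrm{Var}_{\underline{\tilde H}}(\langle Q_1\rangle_{t,s;\epsilon,\delta})\big],
\end{align*}
where $\mathbb{E}_{\mathrm{ex}}$ averages over all quenched variables except $\underline{\tilde H}$. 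Writing $p:=\delta/n^\theta$, the $\tilde H_i$ are i.i.d.\ Bernoulli with $\mathbb{P}[\tilde H_i=+\infty]=p$, and the Efron--Stein inequality yields
\begin{align*}
\mathrm{Var}_{\underline{\tilde H}|\mathrm{ex}}(\langle Q_1\rangle) \le p(1-p)\sum_{i=1}^{n} \mathbb{E}_{\tilde H_{-i}|\mathrm{ex}}\bigl[(\Delta_i\langle Q_1\rangle)^2\bigr],
\end{align*}
where $\Delta_i\langle Q_1\rangle:=\langle Q_1\rangle|_{\tilde H_i=+\infty}-\langle Q_1\rangle|_{\tilde H_i=0}$. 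By Lemma~\ref{thm:BEC:overlap} and GKS \eqref{eq:GKS2} (freezing $\sigma_i=+1$ can only increase every $\langle\sigma_j\rangle$), $\langle Q_1\rangle\in[0,1]$ and $\Delta_i\langle Q_1\rangle\in[0,1]$, so $(\Delta_i\langle Q_1\rangle)^2\le\Delta_i\langle Q_1\rangle$.

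Second, the Bernoulli analogue of \eqref{eq:dPoissonMean} gives $\sum_i\mathbb{E}_{\tilde H_{-i}}[\Delta_i\langle Q_1\rangle]=\tfrac{d}{dp}\mathbb{E}_{\underline{\tilde H}}\langle Q_1\rangle$, hence $\mathbb{E}_{\mathrm{ex}}[\mathrm{Var}_{\underline{\tilde H}}(\langle Q_1\rangle)]\le p\,\tfrac{d}{dp}\mathbb{E}\langle Q_1\rangle$. The crucial observation is that in \eqref{eq:Hamiltonian_tse} the fields $H_i,\tilde H_i$ enter only through $(H_i+\tilde H_i)(\sigma_i-1)$ with values in $\{0,+\infty\}$; therefore the Gibbs measure depends on them only via the indicator $I_i$ (equal to $1$ if $H_i+\tilde H_i=+\infty$ and $0$ otherwise), which is Bernoulli$(q)$ with $q=1-(1-\epsilon)(1-p)$. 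Consequently, at fixed interpolation path $\underline{\sfx}$, $\mathbb{E}\langle Q_1\rangle$ is a function of $q$ alone, and
$\tfrac{d}{dp}\mathbb{E}\langle Q_1\rangle = \tfrac{1-\epsilon}{1-p}\tfrac{\partial}{\partial\epsilon}\mathbb{E}\langle Q_1\rangle \le \tfrac{1}{1-p}\tfrac{\partial}{\partial\epsilon}\mathbb{E}\langle Q_1\rangle$,
where $\tfrac{\partial}{\partial\epsilon}$ denotes the partial derivative at frozen $\underline{\sfx}$. Since the path satisfies $dx^{(t)}/d\epsilon\ge 0$, the implicit $\underline{\sfx}$-contribution to the total $\epsilon$-derivative is also non-negative (another application of GKS), so $\tfrac{\partial}{\partial\epsilon}\mathbb{E}\langle Q_1\rangle\le\tfrac{d}{d\epsilon}\mathbb{E}\langle Q_1\rangle$ along the path.

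Integrating over $\epsilon$ and using that $\mathbb{E}\langle Q_1\rangle\in[0,1]$ is monotone in $\epsilon$,
\begin{align*}
\int_{\varepsilon_0}^{\varepsilon_1}d\epsilon\, \mathbb{E}_{\mathrm{ex}}[\mathrm{Var}_{\underline{\tilde H}}(\langle Q_1\rangle)] \le \frac{p}{1-p}\int_{\varepsilon_0}^{\varepsilon_1}\frac{d}{d\epsilon}\mathbb{E}\langle Q_1\rangle\, d\epsilon \le \frac{p}{1-p} \le \frac{3\delta}{n^{\theta}},
\end{align*}
the last inequality holding for $n$ large enough (the restriction $\theta\in(0,1]$ keeps $p=\delta/n^\theta<1$, a valid Bernoulli parameter). \textbf{Main obstacle.} The delicate point is the $\underline H/\underline{\tilde H}$ symmetry argument trading $\tfrac{d}{dp}$ for $\tfrac{\partial}{\partial\epsilon}$, together with the bookkeeping of implicit versus explicit $\epsilon$-dependence through the adaptive path $\underline{\sfx}(\epsilon)$; both are handled via GKS \eqref{eq:GKS1}--\eqref{eq:GKS2} and the BEC reduction of Lemma~\ref{thm:BEC:overlap} (without which $\langle Q_1\rangle$ would not be valued in $[0,1]$ with monotone Bernoulli increments, and the Efron--Stein bound would degrade by a factor $n^{1-\theta}$).
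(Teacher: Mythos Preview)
Your argument is correct and follows a genuinely different route from the paper's after the common Efron--Stein opening. Both proofs start identically: Efron--Stein on $\underline{\tilde H}$, then $(\Delta_i\langle Q_1\rangle)^2\le\Delta_i\langle Q_1\rangle$ via GKS. From there they diverge. The paper computes $\Delta_j\langle Q_1\rangle$ explicitly (rewriting $\langle\sigma_i\rangle_{\underline{\tilde H}^j}$ with the identity $e^{x\sigma}=\cosh x\,(1+\sigma\tanh x)$), bounds it by $\tfrac{1}{n}\sum_i(\langle\sigma_i\sigma_j\rangle-\langle\sigma_i\rangle\langle\sigma_j\rangle)$, recognizes the resulting double sum as $n\,\mathbb{E}\langle(Q_1-\langle Q_1\rangle)^2\rangle$, and closes by invoking the thermal concentration Lemma~\ref{thm:overlap:thermal:concentration}. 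You instead identify $\sum_i\mathbb{E}_{\tilde H_{-i}}[\Delta_i\langle Q_1\rangle]$ as a Russo-type derivative $\tfrac{d}{dp}\mathbb{E}\langle Q_1\rangle$, then exploit the fact that $(H_i,\tilde H_i)$ enter the Hamiltonian only through the combined Bernoulli$(q)$ indicator with $q=1-(1-\epsilon)(1-p)$ --- exactly the ``effective half-edge'' viewpoint of Section~\ref{sec:usefulderivatives} --- to trade $\tfrac{d}{dp}$ for $\tfrac{\partial}{\partial\epsilon}$, upgrade to the total $\epsilon$-derivative via GKS and the hypothesis $dx^{(t)}/d\epsilon\ge0$, and integrate using $\mathbb{E}\langle Q_1\rangle\in[0,1]$.

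What each buys: the paper's route makes the link to the thermal variance explicit and reuses Lemma~\ref{thm:overlap:thermal:concentration}, giving the bound $3\delta/n^\theta$ for all $n$. Your route is more self-contained (no appeal to Lemma~\ref{thm:overlap:thermal:concentration}) and conceptually cleaner, but your final step $\tfrac{p}{1-p}\le 3\delta/n^\theta$ requires $\delta/n^\theta\le 2/3$, i.e.\ $n$ large enough; this is harmless for the paper's asymptotic purposes. A minor remark: you do not actually need Lemma~\ref{thm:BEC:overlap} for $\langle Q_1\rangle\in[0,1]$ --- the first GKS inequality \eqref{eq:GKS1} already gives $\langle\sigma_j\rangle\ge 0$.
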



\begin{lemma}[Concentration of $\< Q_{1} \>_{t,s;\epsilon, \delta}$ on $\mathbb{E}\< Q_{1} \>_{t,s;\epsilon, \delta}$]\label{thm:overlap:graph:concentration}
For $\sfc \in \sB$ and any choice of interpolating path $\underline{\sfx}(\epsilon) \in \sB^{T}$, we have
\begin{align}
\int_{\delta_0}^{\delta_1} d\delta\, \mathbb{E} \big [ ( \mathbb{E}_{\underline{\tilde{H}}} \< Q_1 \>_{t,s; \epsilon, \delta} - \mathbb{E} \< Q_1 \>_{t,s; \epsilon, \delta} )^2 \big ]
	\leq \bigg ( \frac{15{C}(\delta_1-\delta_0)}{(\ln 2)^2} + 4 \bigg ) n^{-(1-2\theta)/3} \label{eq:overlap:graph:concentration}
\end{align}
for any $\theta\in (0, 1/2)$,
uniformly in $t, s, \epsilon$, with $C >0$ a constant (this constant is obtained from Lemma~\ref{thm:concentration:free_energy}).
\end{lemma}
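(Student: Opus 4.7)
My plan is to translate quenched fluctuations of $\mathbb{E}_{\underline{\tilde{H}}}\< Q_1 \>_{t,s;\epsilon,\delta}$ into quenched fluctuations of the free entropy $H_{t,s;\epsilon,\delta}$ via a derivative identity in $\delta$, and then combine Lemma \ref{thm:concentration:free_energy} with convexity of $\delta\mapsto H_{t,s;\epsilon,\delta}$. First I would differentiate $H_{t,s;\epsilon,\delta}$ in $\delta$: since each $\tilde{H}_i$ equals $+\infty$ (pinning $\sigma_i=1$) with probability $\delta/n^\theta$ and $0$ otherwise, a direct calculation combined with Lemma \ref{thm:BEC:overlap} (giving $\< \sigma_j \>_{t,s;\epsilon,\delta}\in\{0,1\}$ on the BEC, hence $\ln(1+\< \sigma_j \>_{t,s;\epsilon,\delta})=\< \sigma_j \>_{t,s;\epsilon,\delta}\ln 2$) yields
\begin{align*}
\partial_\delta H_{t,s;\epsilon,\delta} = \frac{\ln 2}{n^\theta(1-\delta/n^\theta)}\bigl(\mathbb{E}_{\underline{\tilde{H}}}\< Q_1 \>_{t,s;\epsilon,\delta}-1\bigr).
\end{align*}
Thus the left-hand side of \eqref{eq:overlap:graph:concentration} equals, up to the factor $n^{2\theta}(1-\delta/n^\theta)^2/(\ln 2)^2$, the integrated variance of $\partial_\delta H_{t,s;\epsilon,\delta}$ over the remaining quenched variables.

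Second, I would establish that $\delta\mapsto H_{t,s;\epsilon,\delta}$ is convex for every realization of the other quenched variables. Writing $\lambda=\delta/n^\theta$ and expanding $\mathbb{E}_{\underline{\tilde{H}}}\ln \mathcal{Z}$ as a polynomial in the Bernoulli$(\lambda)$ indicators $\mathds{1}[\tilde{H}_i=+\infty]$, one computes
\begin{align*}
\partial_\lambda^2\,\mathbb{E}_{\underline{\tilde{H}}}\ln \mathcal{Z} = \sum_{j\ne k}\mathbb{E}_{\underline{\tilde{H}}\setminus\{\tilde{H}_j,\tilde{H}_k\}}\ln\frac{\mathcal{Z}_{11}\mathcal{Z}_{00}}{\mathcal{Z}_{10}\mathcal{Z}_{01}},
\end{align*}
where $\mathcal{Z}_{ab}$ denotes the partition function with $\sigma_j,\sigma_k$ pinned to $1$ or unconstrained according to $a,b\in\{0,1\}$. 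This ratio equals $\mathbb{P}(\sigma_j=\sigma_k=1)/[\mathbb{P}(\sigma_j=1)\mathbb{P}(\sigma_k=1)]\geq 1$ by the GKS-2 inequality \eqref{eq:GKS2}, so $\partial_\delta^2 H_{t,s;\epsilon,\delta}\geq 0$ and $\mathbb{E} H_{t,s;\epsilon,\delta}$ inherits convexity. Combined with Lemma \ref{thm:concentration:free_energy}, which gives $\mathbb{E}(H_{t,s;\epsilon,\delta}-\mathbb{E} H_{t,s;\epsilon,\delta})^2\leq C/n$ uniformly in $(t,s,\epsilon,\delta)$, we then have two convex functions close in $L^2$.

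Finally I would apply the standard quantitative convexity-to-derivative estimate: for convex $f,g$ and any $u>0$,
\begin{align*}
|f'(\delta)-g'(\delta)| \leq \frac{2}{u}\max_{x\in\{\delta-u,\delta,\delta+u\}}|f(x)-g(x)| + \bigl(g'(\delta+u)-g'(\delta-u)\bigr).
\end{align*}
Squaring, taking expectation, and integrating in $\delta\in[\delta_0,\delta_1]$, the monotone term telescopes to a boundary contribution of order $u/n^{2\theta}$ (since $\partial_\delta \mathbb{E} H_{t,s;\epsilon,\delta}$ has total variation $\mathcal{O}(n^{-\theta})$), while the first term contributes $\mathcal{O}((\delta_1-\delta_0)C/(nu^2))$ via Lemma \ref{thm:concentration:free_energy}. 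Multiplying through by $n^{2\theta}/(\ln 2)^2$ to convert back to $\mathbb{E}_{\underline{\tilde{H}}}\< Q_1 \>_{t,s;\epsilon,\delta}$ and balancing $u=n^{-(1-2\theta)/3}$ produces the claimed rate $n^{-(1-2\theta)/3}$. The main obstacle is precisely this three-way balance between the free-entropy error $\sim n^{-1/2}$, the convexity window $u$, and the $n^{2\theta}$ amplification intrinsic to the derivative-to-overlap conversion; the $1/3$ exponent is intrinsic to this optimization, and $\theta<1/2$ is needed for $(1-2\theta)/3>0$.
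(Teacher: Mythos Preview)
Your proposal is correct and follows essentially the same route as the paper: convert $\mathbb{E}_{\underline{\tilde{H}}}\langle Q_1\rangle-\mathbb{E}\langle Q_1\rangle$ into the difference of $\partial_\delta H_{t,s;\epsilon,\delta}$ and $\partial_\delta h_{t,s;\epsilon,\delta}$ via the derivative identity, invoke convexity in $\delta$ (from GKS-2), apply the standard convexity-to-derivative inequality together with the free-entropy concentration Lemma~\ref{thm:concentration:free_energy}, and optimize the window $u=n^{-(1-2\theta)/3}$. The only cosmetic difference is that the paper pre-packages the derivative formula and the convexity of $H_{t,s;\epsilon,\delta}$ as identities \eqref{eq:perturbed_f:derivative4a}--\eqref{eq:perturbed_f:derivative4b} in Section~\ref{sec:usefulderivatives}, whereas you rederive them inline using the BEC-specific fact $\langle\sigma_j\rangle\in\{0,1\}$.
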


\begin{remark}
 We already saw that Lemma \ref{thm:BEC:overlap} implies for the BEC $\langle Q_1\rangle_{t,s;\epsilon,\delta} = \langle Q_p\rangle_{t,s;\epsilon,\delta}$ and 
 therefore the last two concentration lemmas are valid for all overlaps. 
\end{remark}

\subsection{Proof of lemma \ref{thm:concentration1}}

In lemma \ref{thm:concentration1} we take $\underline\sfx$ independent of $\epsilon$, thus $dx^{(t)}/d\epsilon = 0$. We have
\begin{align}
\mathbb{E}  \big\< \big| Q_{p}^{K} - \< Q_{p} \>_{t,s;\epsilon, \delta}^K \big| \big\>_{t,s;\epsilon, \delta} 
	& = \mathbb{E}\Big \< \Big | ( Q_{p} - \< Q_{p} \>_{t,s;\epsilon, \delta} )\sum_{k=0}^{K-1} Q_{p}^{K-k-1} \< Q_{p} \>_{t,s;\epsilon, \delta}^{k} \Big | \Big \>_{t,s;\epsilon, \delta}  \nonumber \\
	& \leq K\, \mathbb{E}\big \< \big| Q_{p} - \< Q_{p}\>_{t,s;\epsilon, \delta}\big| \big\>_{t,s;\epsilon, \delta} \, .\label{eq:overlap:thermal:Kexpansion}
\end{align}
We can apply the Cauchy-Schwarz inequality to get
\begin{align}
 \int_{\varepsilon_0}^{\varepsilon_1} d\epsilon \ \mathbb{E}  \big\< \big| Q_{p}^{K} - \< Q_{p} \>_{t,s;\epsilon, \delta}^K \big| \big\>_{t,s;\epsilon, \delta} 
	& \leq K \Big\{ (\varepsilon_1 - \varepsilon_0) \int_{\varepsilon_0}^{\varepsilon_1} d\epsilon \ \mathbb{E}  \big\< ( Q_{p} - \< Q_{p} \>_{t,s;\epsilon, \delta} )^2 \big\>_{t,s;\epsilon, \delta}  \Big\}^{1/2}.
	\label{eq:overlap:thermal:CSonK}
\end{align}
Thanks to \eqref{eq:overlap:thermal:concentration} we obtain
\begin{align*}
\int_{\varepsilon_0}^{\varepsilon_1} d\epsilon \ \mathbb{E} \big\< \big| Q_{p}^{K} - \< Q_{p} \>_{t,s;\epsilon, \delta}^K \big| \big\>_{t,s;\epsilon, \delta} 
	\leq K \Big( \frac{3p(\varepsilon_1-\varepsilon_0)}{n} \Big )^{1/2}.
\end{align*}
This proves Lemma~\ref{thm:concentration1}.

\subsection{Proof of lemma \ref{thm:concentration2}}

Similar to \eqref{eq:overlap:thermal:Kexpansion} and \eqref{eq:overlap:thermal:CSonK}, it is easy to show
\begin{align}
& \int_{\delta_0}^{\delta_1} d\delta \int_{\varepsilon_0}^{\varepsilon_1} d\epsilon \mathbb{E} \big[ \big| \< Q_{1} \>_{t,s;\epsilon, \delta}^K  - \mathbb{E} [\< Q_{1} \>_{t,s;\epsilon, \delta}]^K \big| \big ] 
\nonumber \\ &
\leq K \, \big \{ (\delta_1-\delta_0) (\varepsilon_1-\varepsilon_0) \int_{\delta_0}^{\delta_1} d\delta \int_{\varepsilon_0}^{\varepsilon_1} \mathbb{E} \big[ ( \<Q_1\>_{t,s;\epsilon, \delta} - \mathbb{E}\<Q_1\>_{t,s;\epsilon, \delta} )^2 \big] \big \}^{1/2}. \label{eq:overlap:full1}
\end{align}
We decompose $\mathbb{E} \big[ ( \<Q_1\>_{t,s;\epsilon, \delta} - \mathbb{E}\<Q_1\>_{t,s;\epsilon, \delta} )^2 \big]$ in three parts:
\begin{align*}
\mathbb{E} \big\< ( Q_{1} - \< Q_{1} \>_{t,s;\epsilon, \delta} )^2\big \>_{t,s;\epsilon, \delta}
	+ \mathbb{E} \big [( \< Q_{p} \>_{t,s;\epsilon, \delta} - \mathbb{E}_{\underline{\tilde{H}}} \< Q_{p} \>_{t,s;\epsilon, \delta} )^2 \big ]
	+ \mathbb{E} \big [ ( \mathbb{E}_{\underline{\tilde{H}}} \< Q_1 \>_{t,s; \epsilon, \delta} - \mathbb{E} \< Q_1 \>_{t,s; \epsilon, \delta} )^2 \big ].
\end{align*}
With Fubini's theorem we are free to switch the $\delta$ and $\epsilon$ integrals.  Lemmas~\ref{thm:overlap:thermal:concentration}, \ref{thm:overlap:pert:concentration} and \ref{thm:overlap:graph:concentration} then imply 
\begin{align}
\int_{\delta_0}^{\delta_1} d\delta \int_{\varepsilon_0}^{\varepsilon_1} d\epsilon \mathbb{E} \big[ ( \<Q_1\>_{t,s;\epsilon, \delta} - \mathbb{E}\<Q_1\>_{t,s;\epsilon, \delta} )^2 \big]
& \leq \frac{3(\delta_1-\delta_0)}{n} + \frac{3(\delta_1^2-\delta_0^2)}{n^{\theta}} + \big ( \frac{15{C} (\delta_1-\delta_0)}{(\ln 2)^2} + 4 \big ) \frac{\varepsilon_1-\varepsilon_0}{n^{(1-2\theta)/3}} \nonumber \\
& \leq \frac{3(\delta_1-\delta_0)(1+\delta_0+\delta_1)}{n^{\theta}} + \big ( \frac{15{C} (\delta_1-\delta_0)}{(\ln 2)^2} + 4 \big ) \frac{\varepsilon_1-\varepsilon_0}{n^{(1-2\theta)/3}}. 
\label{eq:overlap:full2}
\end{align}
The bound \eqref{eq:overlap:full2} is optimal for $\theta = (1-2\theta)/3$, i.e., $\theta=1/5$. But any $\theta\in (0, 1/2)$ will do. Lemma~\ref{thm:concentration2} is 
then obtained by substituting \eqref{eq:overlap:full2} into \eqref{eq:overlap:full1}:
\begin{align*}
& \int_{\delta_0}^{\delta_1} d\delta \int_{\varepsilon_0}^{\varepsilon_1} d\epsilon \mathbb{E} \big[ \big| \< Q_{1} \>_{t,s;\epsilon, \delta}^K  - \mathbb{E} [\< Q_{1} \>_{t,s;\epsilon, \delta}]^K \big| \big ] \nonumber \\
	& \leq K \bigg \{ (\delta_1-\delta_0) (\varepsilon_1-\varepsilon_0) \Big ( 3(\delta_1-\delta_0)(1+\delta_0+\delta_1) + (\varepsilon_1-\varepsilon_0) \big ( \frac{15{C} (\delta_1-\delta_0)}{(\ln 2)^2} + 4 \big ) \Big ) \bigg \}^{1/2} n^{-\theta/2} \\
	& = \mathcal{O} \bigg ( \big ( (\delta_1-\delta_0)(\varepsilon_1-\varepsilon_0) \frac{\delta_1-\delta_0 + \varepsilon_1-\varepsilon_0}{n^{\theta}} \big )^{1/2} \bigg ).
\end{align*}

\section{Concentration of overlaps II: Proof of lemmas~\ref{thm:overlap:thermal:concentration}, \ref{thm:overlap:pert:concentration}, \ref{thm:overlap:graph:concentration}}
\label{sec:thermal-and-quenched}

We start with useful preliminary results on the derivatives of the free entropy of the interpolated system, and then prove the three concentration lemmas.

\subsection{Useful derivative formulas}\label{sec:usefulderivatives}

We first remark that, according to \eqref{eq:BP-update}, the distribution $\tilde{\sfx}^{(t)}$ is a function of $\sfx^{(t)}$ and $\sfc$. 
Therefore $\tilde{\sfx}^{(t)} \in \sB$ when $\sfx^{(t)}, \sfc \in \sB$. 
Let $\sfx^{(t)} = x^{(t)} \Delta_\infty + (1-x^{(t)}) \Delta_0$ and $\tilde{\sfx}^{(t)} = \tilde{x}^{(t)} \Delta_\infty + (1-\tilde{x}^{(t)}) \Delta_0$ \footnote{We prefer to write $1-x^{(t)}$ and $1-\tilde{x}^{(t)}$ as the erasure probability in this interpolation to align with the way we define the distribution of $H$ and $\tilde{H}$.}.
From \eqref{eq:BP-update} we have the relation
\begin{align}
\tilde{x}^{(t)} = (1-q) x^{(t) K-1}. \label{eq:BP-update:BEC}
\end{align}

We now provide another 
view of the interpolating Hamiltonian \eqref{eq:Hamiltonian_ts}. Consider an ``effective half-edge'' fed into node $i$ with random half-log-likelihood variable
\begin{align*}
\bar{H}_i^{(t,s)} \equiv \sum_{t' =1}^{t-1} \sum_{B=1}^{e_{i}^{(t')}} U_{B \rightarrow i}^{(t')} + \sum_{C=1}^{e_{i,s}^{(t)}} U_{C \rightarrow i}^{(t)} + H_i + \tilde{H}_i,
\end{align*}
equal to $\infty$ with probability
\begin{align*}
\bar{\epsilon}_i^{(t,s)} 
	\equiv 1 - (1-\epsilon)(1-\frac{\delta}{n^{\theta}}) (1-\tilde{x}^{(t)})^{e_i^{(t,s)}} \prod_{t'=1}^{t-1}(1-\tilde{x}^{(t')})^{e_i^{(t')}}.
\end{align*}
and equal to $0$ complementary probability. Set $\bar{\mathcal{E}}=(\bar{\epsilon}_1^{(t,s)} ,\cdots, \bar{\epsilon}_n^{(t,s)})$. 
The Hamiltonian \eqref{eq:Hamiltonian_ts} is equal in distribution to 
\begin{align}
{\bar{\sH}}_{t,s; \bar{\mathcal{E}}}(\underline{\sigma}, \underline{\tilde{J}}, \underline{\bar{H}})
	\equiv - \sum_{i=1}^{n} \bar{H}_i^{(t,s)} ( \sigma_i - 1 )
	- \sum_{A=1}^{m_s^{(t)}} \tilde{J}_{A} ( \sigma_{A} - 1 ). \label{eq:Htse-abstract}
\end{align}
Let $n^{-1} \mathbb{E} \ln \bar{\mathcal{Z}}_{t,s; \bar{\mathcal{E}}}$ the associated averaged free entropy. Clearly this is a function 
of $(\mathbb{E}[\bar{\epsilon}_1^{(t,s)}] ,\cdots, \mathbb{E}[\bar{\epsilon}_n^{(t,s)}])$ where for all $i=1,\cdots,n$
\begin{align}
\mathbb{E}[\bar{\epsilon}_i^{(t,s)}] 
	& \equiv \mathbb{E}_{e_i^{(1)}, \dots, e_i^{(t-1)}, e_i^{(t,s)}}[\bar{\epsilon}_i^{(t,s)}] \nonumber \\
	& = 1 - (1-\epsilon)(1-\frac{\delta}{n^{\theta}}) e^{-\frac{K}{RT}(s \tilde{x}^{(t)} + \sum_{t'=1}^{t-1} \tilde{x}^{(t')})}. \label{eq:E-epsilon-bar}
\end{align}
Moreover it is clear that $h_{t,s;\epsilon,\delta} = n^{-1} \mathbb{E} \ln \bar{\mathcal{Z}}_{t,s;\bar{\mathcal{E}}}$. Therefore we see that the dependence in $\epsilon$ and $\delta$ effectively comes through the combination 
\eqref{eq:E-epsilon-bar}. Since this is independent of $i$ we denote it by $\mathbb{E}[\bar{\epsilon}^{(t,s)}]$. The reader should keep in mind that in this combination 
there is always an explicit $(\epsilon, \delta)$, and that there may also be an implicit 
one through the choice of the interpolating path $(x^{(t)}, \tilde{x}^{(t)})$. 

We are now ready to state 
 derivative formulas playing an important role. Their detailed derivation is provided in Appendix~\ref{app:entropy-derivatives}:
\begin{align}
\frac{d}{d \mathbb{E} [\bar{\epsilon}^{(t,s)}]} h_{t,s; \epsilon, \delta} 
	& = - \frac{\ln 2}{n} \sum_{i=1}^{n}  ( 1 - \mathbb{E} \< \sigma_i \>_{t, s;\epsilon,\delta;\sim \bar{H}_i^{(t,s)}} )
	\- = - \frac{\ln 2}{n(1-\mathbb{E}[\bar{\epsilon}^{(t,s)}])} \sum_{i=1}^n (1 - \mathbb{E}\<\sigma_i\>_{t,s;\epsilon,\delta}),
\label{eq:perturbed_f:derivative1a} \\
\frac{d^2}{d \mathbb{E} [\bar{\epsilon}^{(t,s)}]^2} h_{t,s; \epsilon,\delta} 
	& = \frac{\ln 2}{n(1-\mathbb{E} [\bar{\epsilon}^{(t,s)}])^2} \sum_{i \neq j} \mathbb{E} [\< \sigma_i \sigma_j \>_{t,s;\epsilon,\delta} 
- \< \sigma_i \>_{t,s;\epsilon,\delta} \< \sigma_j \>_{t, s;\epsilon,\delta}],
\label{eq:perturbed_f:derivative1b} \\
\frac{d}{d\delta} H_{t,s; \epsilon,\delta}
	& = - \frac{\ln 2}{n^{1+\theta}} \sum_{i=1}^{n}  ( 1 - \mathbb{E}_{\underline{\tilde{H}}} \< \sigma_i \>_{t, s;\epsilon,\delta;\sim \tilde{H}_i} )
	\- = - \frac{\ln 2}{n^{1+\theta}(1-\delta/ n^{\theta})} \sum_{i=1}^n (1 - \mathbb{E}_{\underline{\tilde{H}}}\<\sigma_i\>_{t,s;\epsilon,\delta}), \label{eq:perturbed_f:derivative4a} \\
\frac{d^2}{d\delta^2} H_{t,s; \epsilon,\delta}
	& = \frac{1}{n^{2\theta}(1-\delta/n^{\theta})^2} \sum_{i \neq j} \mathbb{E}_{\underline{\tilde{H}}} \ln \biggl\{\frac{1 + \< \sigma_i \>_{\sim \tilde{H}_i, \tilde{H}_j} + \< \sigma_j \>_{\sim \tilde{H}_i, \tilde{H}_j} + \< \sigma_i \sigma_j \>_{\sim \tilde{H}_i, \tilde{H}_j}}{ 1 + \< \sigma_i \>_{\sim \tilde{H}_i, \tilde{H}_j} + \< \sigma_j \>_{\sim \tilde{H}_i, \tilde{H}_j} + \< \sigma_i \>_{\sim \tilde{H}_i, \tilde{H}_j} \< \sigma_j \>_{\sim \tilde{H}_i, \tilde{H}_j}}\biggr\}, \label{eq:perturbed_f:derivative4b}
\end{align}
where $\< \sigma_i \>_{t, s;\epsilon,\epsilon;\sim \bar{H}_i^{(t,s)}}$ is the Gibbs expectation with fixed $\bar{H}_i^{(t,s)}=0$ 
and $\< \sigma_i \>_{\sim \tilde{H}_i, \tilde{H}_j} \equiv \< \sigma_i \>_{t,s;\epsilon, \delta; \sim \tilde{H}_i, \tilde{H}_j}$ is the Gibbs expecetaion with fixed $\tilde H_i=\tilde H_j=0$.
%
%
If we {\it choose} $\underline{\sfx} = \underline{\sfx}(\epsilon)$ {\it independent of} $\delta$ we have furthermore
\begin{align}
\frac{d}{d \delta} h_{t,s; \epsilon, \delta} 
	& = - \frac{\ln 2}{n^{1+\theta}} \sum_{i=1}^{n}  ( 1 - \mathbb{E} \< \sigma_i \>_{t, s;\epsilon,\delta;\sim \tilde{H}_i} ) 
	\- = - \frac{\ln 2}{n^{1+\theta}(1-\delta/ n^{\theta})} \sum_{i=1}^n (1 - \mathbb{E}\<\sigma_i\>_{t,s;\epsilon,\delta}), \label{eq:perturbed_f:derivative3a} \\
\frac{d^2}{d \delta^2} h_{t,s; \epsilon,\delta} 
	& = \frac{\ln 2}{n^{1+2\theta}(1-\delta/n^{\theta})^2} \sum_{i \neq j} \mathbb{E} [\< \sigma_i \sigma_j \>_{t,s;\epsilon,\delta} 
- \< \sigma_i \>_{t,s;\epsilon,\delta} \< \sigma_j \>_{t, s;\epsilon,\delta}], \label{eq:perturbed_f:derivative3b} \\
\frac{d}{d\delta} \big ( \frac{1}{n} \sum_{i=1}^{n} \mathbb{E}\<\sigma_i\>_{t,s;\epsilon,\delta} \big )
	& = \frac{1}{n^{1+\theta}(1-\delta/n^{\theta})} \sum_{i, j=1}^{n} \mathbb{E} [\< \sigma_i \sigma_j \>_{t,s;\epsilon,\delta} 
- \< \sigma_i \>_{t,s;\epsilon,\delta} \< \sigma_j \>_{t, s;\epsilon,\delta}]. \label{eq:Q1:delta-der}
\end{align}
The first equalities of \eqref{eq:perturbed_f:derivative1a}, \eqref{eq:perturbed_f:derivative4a}, \eqref{eq:perturbed_f:derivative3a}, together with \eqref{eq:GKS1}, tell us that
\begin{align}
\Big\vert \frac{d}{d\mathbb{E}[\bar{\epsilon}^{(t,s)}]} h_{t,s; \epsilon} \Big\vert \leq \ln 2, \quad \Big\vert \frac{d}{d\delta} H_{t,s; \epsilon,\delta} \Big\vert\leq \frac{\ln 2}{n^\theta}, \quad
\Big\vert \frac{d}{d\delta} h_{t,s; \epsilon,\delta} \Big\vert\leq \frac{\ln 2}{n^\theta}\,.
\label{littlebounds}
\end{align}
Moreover from \eqref{eq:perturbed_f:derivative4b}, \eqref{eq:perturbed_f:derivative3b} and the second GKS inequality \eqref{eq:GKS2}, we see that $h_{t,s; \epsilon, \delta}$ and $H_{t,s;\epsilon,\delta}$ are 
convex in $\delta$. 


\subsection{Proof of Lemma~\ref{thm:overlap:thermal:concentration}}
\label{sec:overlap:thermal}
From the definition of $Q_p$ we have
\begin{align}
\mathbb{E}\big\< ( Q_{p} -& \< Q_{p} \>_{t,s;\epsilon, \delta} )^2 \big\>_{t,s;\epsilon, \delta}  
	 = \frac{1}{n^2} \sum_{i,j=1}^{n} \mathbb{E} \big[ \< \sigma_i \sigma_j \>_{t,s;\epsilon, \delta}^p - \< \sigma_i \>_{t,s;\epsilon, \delta}^p \< \sigma_j \>_{t,s;\epsilon, \delta}^p \big] \nonumber \\
	& = \frac{1}{n^2} \sum_{i,j=1}^{n} \mathbb{E} \Big [ ( \< \sigma_i \sigma_j \>_{t,s;\epsilon, \delta} - \< \sigma_i \>_{t,s;\epsilon, \delta} \< \sigma_j \>_{t,s;\epsilon, \delta} ) 
	\sum_{l=0}^{p-1}  \< \sigma_i \sigma_j \>_{t,s;\epsilon, \delta}^{p-1-l} \< \sigma_i \>_{t,s;\epsilon, \delta}^l \< \sigma_j \>_{t,s;\epsilon, \delta}^l\Big ].
	\label{eq:overlap:thermal:expansion}
\end{align}
By \eqref{eq:GKS2} we have $0 \leq \< \sigma_i \sigma_j \>_{t,s;\epsilon, \delta} - \< \sigma_i \>_{t,s;\epsilon, \delta} \< \sigma_j \>_{t,s;\epsilon, \delta}$. This allows us to upper bound \eqref{eq:overlap:thermal:expansion} as
\begin{align}
\mathbb{E}\big\< ( Q_{p} - \< Q_{p} \>_{t,s;\epsilon, \delta} )^2 \big\>_{t,s;\epsilon, \delta}
	& \leq \frac{1}{n^2} \sum_{i,j=1}^{n} \mathbb{E} \Big [  (\< \sigma_i \sigma_j \>_{t,s;\epsilon, \delta} - \< \sigma_i \>_{t,s;\epsilon, \delta} \< \sigma_j \>_{t,s;\epsilon, \delta})
		\sum_{l=0}^{p-1}  \big | \< \sigma_i \sigma_j \>_{t,s;\epsilon, \delta}^{p-1-l} \< \sigma_i \>_{t,s;\epsilon, \delta}^l \< \sigma_j \>_{t,s;\epsilon, \delta}^l \big | \Big ] \nonumber \\
	& \leq \frac{p}{n^2} \sum_{i,j=1}^{n} \mathbb{E} \Big[ \< \sigma_i \sigma_j \>_{t,s;\epsilon, \delta} - \< \sigma_i \>_{t,s;\epsilon, \delta} \< \sigma_j \>_{t,s;\epsilon, \delta} \Big ]. \label{eq:overlap:thermal:expansion2}
\end{align} 
Hence integrating \eqref{eq:overlap:thermal:expansion2} over $\epsilon \in [\varepsilon_0, \varepsilon_1]$ and recalling the formula \eqref{eq:perturbed_f:derivative1b}, we obtain
\begin{align}
\int_{\varepsilon_0}^{\varepsilon_1} d\epsilon \, \mathbb{E} \big\< ( Q_{p} - \< Q_{p} \>_{t,s;\epsilon, \delta} )^2 \big\>_{t,s;\epsilon, \delta}
	& \leq p \int_{\varepsilon_0}^{\varepsilon_1} d\epsilon \, \frac{1}{n^{2}} \sum_{i,j=1}^{n} \mathbb{E} \big[ \< \sigma_i \sigma_j \>_{t,s;\epsilon, \delta} - \< \sigma_i \>_{t,s;\epsilon, \delta} \< \sigma_j \>_{t,s;\epsilon, \delta} \big ] \nonumber \\
	& \leq \frac{p}{n} + \frac{p}{n\ln 2} \int_{\varepsilon_0}^{\varepsilon_1} d\epsilon (1-\mathbb{E}[\bar{\epsilon}^{(t,s)}]) \frac{d^2}{d\mathbb{E}[\bar{\epsilon}^{(t,s)}]^2} h_{t,s; \epsilon,\delta}\,.
	\label{eq:overlap:thermal:bd1}
\end{align}
Recall \eqref{eq:E-epsilon-bar} for the expression of $\mathbb{E}[\bar{\epsilon}^{(t,s)}]$. Under the hypothesis $d x^{(t)} / d\epsilon \geq 0$ 
for all $t \in \{1,\dots, T\}$ and using \eqref{eq:BP-update:BEC}, we have $d \tilde{x}^{(t)} / d\epsilon \geq 0$. This gives
\begin{align*}
\frac{d\mathbb{E}[\bar{\epsilon}^{(t,s)}]}{d\epsilon}
	& = \frac{1-\mathbb{E}[\bar{\epsilon}^{(t,s)}]}{1-\epsilon} + (1-\mathbb{E}[\bar{\epsilon}^{(t,s)}]) \frac{\alpha K}{T} \big ( s \frac{d\tilde{x}^{(t)}}{d\epsilon} + \sum_{t'=1}^{t-1} \frac{d\tilde{x}^{(t')}}{d\epsilon} \big )
	\geq 1-\mathbb{E}[\bar{\epsilon}^{(t,s)}]
\end{align*}
and allows us to relax the second term of \eqref{eq:overlap:thermal:bd1}:
\begin{align*}
\int_{\varepsilon_0}^{\varepsilon_1} d\epsilon \, \mathbb{E} \big\< ( Q_{p} - \< Q_{p} \>_{t,s;\epsilon, \delta} )^2 \big\>_{t,s;\epsilon, \delta}
	& \leq \frac{p}{n} + \frac{p}{n\ln 2} \int_{\varepsilon_0}^{\varepsilon_1} d\epsilon \frac{d\mathbb{E}[\bar{\epsilon}^{(t,s)}]}{d\epsilon} \frac{d^2}{d\mathbb{E}[\bar{\epsilon}^{(t,s)}]^2} h_{t,s; \epsilon,\delta} \nonumber \\
	& = \frac{p}{n} + \frac{p}{n\ln 2} \int_{\varepsilon_0}^{\varepsilon_1} d\epsilon \frac{d}{d\epsilon} \Big ( \frac{d}{d\mathbb{E}[\bar{\epsilon}^{(t,s)}]} h_{t,s; \epsilon,\delta} \Big ) \nonumber \\
	& = \frac{p}{n} + \frac{p}{n\ln 2} \Big [ \frac{d}{d\mathbb{E}[\bar{\epsilon}^{(t,s)}]} h_{t,s; \epsilon,\delta} \Big ]_{\epsilon = \varepsilon_0}^{\epsilon = \varepsilon_1} \\
	&  \leq \frac{3p}{n} ,
\end{align*}
using the first bound in \eqref{littlebounds} for the last inequality.

\subsection{Proof of Lemma~\ref{thm:overlap:pert:concentration}}
\label{sec:overlap:pert}
Let $\tilde{H}'_j$ be an i.i.d. copy of $\tilde{H}_j$. Let also $\underline{\tilde{H}}^j$ be a vector same as $\underline{\tilde{H}}$ except the $j$-th component is replaced by $\tilde{H}'_j$.
By the Efron-Stein inequality we have
\begin{align}
\mathbb{E}_{\underline{\tilde{H}}} \big [ \big ( \<Q_1\>_{\underline{\tilde{H}}} - \mathbb{E}_{\underline{\tilde{H}}} \<Q_1\>_{\underline{\tilde{H}}} \big )^2 \big ]
	& \leq \frac{1}{2} \sum_{j=1}^{n} \mathbb{E}_{\underline{\tilde{H}}} \mathbb{E}_{\tilde{H}'_j} \big [ \big ( \<Q_1\>_{\underline{\tilde{H}}^j} - \<Q_1\>_{\underline{\tilde{H}}} \big )^2 \big ] 
	\nonumber \\
	& = \frac{1}{2} \sum_{j=1}^{n} \mathbb{E}_{\underline{\tilde{H}}} \mathbb{E}_{\tilde{H}'_j} \big [ \big ( \<Q_1\>_{\underline{\tilde{H}}^j} - \<Q_1\>_{\underline{\tilde{H}}} \big )^2 \mathbb{I}(\tilde{H}'_j 
	= \infty) \mathbb{I}(\tilde{H}_j = 0) \big ] \nonumber \\
	& \hspace{0.5cm} + \frac{1}{2} \sum_{j=1}^{n} \mathbb{E}_{\underline{\tilde{H}}} 
	\mathbb{E}_{\tilde{H}'_j} \big [ \big ( \<Q_1\>_{\underline{\tilde{H}}^j} - \<Q_1\>_{\underline{\tilde{H}}} \big )^2 \mathbb{I}(\tilde{H}'_j = 0) \mathbb{I}(\tilde{H}_j = \infty) \big ] 
	\nonumber \\
	& = \sum_{j=1}^{n} \mathbb{E}_{\underline{\tilde{H}}} \mathbb{E}_{\tilde{H}'_j} \big [ \big ( \<Q_1\>_{\underline{\tilde{H}}^j} - \<Q_1\>_{\underline{\tilde{H}}} \big )^2 \mathbb{I}(\tilde{H}'_j 
	= \infty) \mathbb{I}(\tilde{H}_j = 0) \big ]. \label{eq:overlap:pert:ES2}
\end{align}
To see the last equality we can exchange $\tilde{H}_j$ and $\tilde{H}_j'$ in the r.h.s of the second equality to see that that the two terms are equal. 
This symmetry allows us to simplify the expression to \eqref{eq:overlap:pert:ES2}.
The GKS inequalities \eqref{eq:GKS1}, \eqref{eq:GKS2} imply $0 \leq \<Q_1\>_{\underline{\tilde{H}}^j} - \<Q_1\>_{\underline{\tilde{H}}} \leq 1$. This allows us to relax \eqref{eq:overlap:pert:ES2} to 
\begin{align}
\mathbb{E}_{\underline{\tilde{H}}} \big [ \big ( \<Q_1\>_{\underline{\tilde{H}}} - \mathbb{E}_{\underline{\tilde{H}}} \<Q_1\>_{\underline{\tilde{H}}} \big )^2 \big ]
	& \leq \sum_{j=1}^{n} \mathbb{E}_{\underline{\tilde{H}}} \mathbb{E}_{\tilde{H}'_j} \big [ \big ( \<Q_1\>_{\underline{\tilde{H}}^j} - \<Q_1\>_{\underline{\tilde{H}}} \big ) \mathbb{I}(\tilde{H}'_j = \infty) \mathbb{I}(\tilde{H}_j = 0) \big ]. \label{eq:overlap:pert:ES3}
\end{align}

When $\tilde{H}'_j = \infty$ and $\tilde{H}_j = 0$, we have
\begin{align}
\< Q_1 \>_{\underline{\tilde{H}}^j} - \< Q_1 \>_{\underline{\tilde{H}}}
	& = \frac{1}{n} \sum_{i=1}^{n} \big ( \< \sigma_i \>_{\underline{\tilde{H}}^j} - \< \sigma_i \>_{\underline{\tilde{H}}} \big )
	\- = \frac{1}{n} \sum_{i=1}^{n} \big ( \frac{\< \sigma_i e^{\tilde{H}'_j (\sigma_j - 1)} \>_{\underline{\tilde{H}}}}{\< e^{\tilde{H}'_j (\sigma_j - 1)} \>_{\underline{\tilde{H}}}} - \< \sigma_i \>_{\underline{\tilde{H}}} \big ) \nonumber \\
	& = \frac{1}{n} \sum_{i=1}^{n} \big ( \frac{\< \sigma_i (1+\sigma_j) \>_{\underline{\tilde{H}}}}{1+\< \sigma_j \>_{\underline{\tilde{H}}}} - \< \sigma_i \>_{\underline{\tilde{H}}} \big ) \label{eq:overlap:pert:int1}
	\- = \frac{1}{n} \sum_{i=1}^{n} \int_0^1 d\tau \frac{d}{d\tau} \frac{ \< \sigma_i \>_{\underline{\tilde{H}}} + \tau \< \sigma_i \sigma_j \>_{\underline{\tilde{H}}} }{1+\tau \< \sigma_j \>_{\underline{\tilde{H}}}} \\
	& = \frac{1}{n} \sum_{i=1}^{n} \int_0^1 d\tau \frac{ \< \sigma_i \sigma_j \>_{\underline{\tilde{H}}} - \< \sigma_i \>_{\underline{\tilde{H}}} \< \sigma_j \>_{\underline{\tilde{H}}} }{(1+\tau \< \sigma_j \>_{\underline{\tilde{H}}})^2} 
	\- \leq \frac{1}{n} \sum_{i=1}^{n} (\< \sigma_i \sigma_j \>_{\underline{\tilde{H}}} - \< \sigma_i \>_{\underline{\tilde{H}}} \< \sigma_j \>_{\underline{\tilde{H}}}) \label{eq:overlap:pert:int2}
\end{align}
where the first equality of \eqref{eq:overlap:pert:int1} follows from the identity $e^{\tilde{H}'_j \sigma_j} \equiv \cosh \tilde{H}'_j (1+\sigma_j \tanh \tilde{H}'_j)$, and the last bound 
in \eqref{eq:overlap:pert:int2} uses the first GKS inequality \eqref{eq:GKS1}.
Substituting \eqref{eq:overlap:pert:int2} into \eqref{eq:overlap:pert:ES3}, we get
\begin{align*}
	& \int_{\varepsilon_0}^{\varepsilon_1} d\epsilon \mathbb{E} \big [ \big ( \<Q_1\>_{\underline{\tilde{H}}} - \mathbb{E}_{\underline{\tilde{H}}} \<Q_1\>_{\underline{\tilde{H}}} \big )^2 \big ] \nonumber \\
	& \leq \int_{\varepsilon_0}^{\varepsilon_1} d\epsilon \frac{1}{n} \sum_{i,j=1}^{n} \mathbb{E} \big [ (\< \sigma_i \sigma_j \>_{\underline{\tilde{H}}} - \< \sigma_i \>_{\underline{\tilde{H}}} \< \sigma_j \>_{\underline{\tilde{H}}}) \mathbb{I}(\tilde{H}'_j = \infty) \mathbb{I}(\tilde{H}_j = 0) \big ] \\
	& = \mathbb{P}(\tilde{H}'_j = \infty) \int_{\varepsilon_0}^{\varepsilon_1} d\epsilon \frac{1}{n} \sum_{i,j=1}^{n} \mathbb{E} \big [ (\< \sigma_i \sigma_j \>_{\underline{\tilde{H}}} - \< \sigma_i \>_{\underline{\tilde{H}}} \< \sigma_j \>_{\underline{\tilde{H}}}) \mathbb{I}(\tilde{H}_j = 0) \big ] \\
	& \leq \mathbb{P}(\tilde{H}'_j = \infty) \int_{\varepsilon_0}^{\varepsilon_1} d\epsilon \frac{1}{n} \sum_{i,j=1}^{n} \mathbb{E} \big [ (\< \sigma_i \sigma_j \>_{\underline{\tilde{H}}} - \< \sigma_i \>_{\underline{\tilde{H}}} \< \sigma_j \>_{\underline{\tilde{H}}}) \big ] \\
	& = \delta n^{1-\theta} \int_{\varepsilon_0}^{\varepsilon_1} d\epsilon \mathbb{E} \big [ \< (Q_1 - \< Q_1 \>_{\underline{\tilde{H}}} )^2 \>_{\underline{\tilde{H}}} \big ] \\
	& \leq \frac{3\delta}{n^{\theta}},
\end{align*}
using Lemma~\ref{thm:overlap:thermal:concentration} to yield the last inequality.
%
\subsection{Proof of Lemma~\ref{thm:overlap:graph:concentration}}
\label{sec:overlap:graph}
We write $H_{t,s; \epsilon}(\delta) \equiv H_{t,s; \epsilon,\delta}$ and $h_{t,s; \epsilon}(\delta) \equiv h_{t,s; \epsilon, \delta}$ to emphasize $\delta$ in this proof. For both quantities we 
have taken the expectation over $\underline{\tilde{H}}$ and therefore their derivatives w.r.t. $\delta$ are well-defined. \footnote{The proof 
here differs from the standard strategy in \cite{barbier2017phase,BarbierM17a} in the way that 
an extra parameter $\delta$ is required and $\sfx(\epsilon)$ has to be independent of $\delta$. This is because we need a 
well-defined derivative of free entropy such that we can obtain a controllable upper bound like in \eqref{eq:overlap:graph:rewrite-Q1}.}
From \eqref{eq:perturbed_f:derivative4a} and \eqref{eq:perturbed_f:derivative3a} we have
\begin{align}
| \mathbb{E}_{\underline{\tilde{H}}} \< Q_1 \>_{t,s; \epsilon, \delta} - \mathbb{E} \< Q_1 \>_{t,s; \epsilon, \delta} |
	& = \frac{n^{\theta}(1 - \frac{\delta}{n^{\theta}})}{\ln 2} \Big | \frac{d}{d\delta} H_{t,s; \epsilon}(\delta) - \frac{d}{d\delta} h_{t,s; \epsilon}(\delta) \Big | \\
	& \leq \frac{n^{\theta}}{\ln 2} \Big | \frac{d}{d\delta} H_{t,s; \epsilon}(\delta) - \frac{d}{d\delta} h_{t,s; \epsilon}(\delta) \Big |.
	\label{eq:overlap:graph:rewrite-Q1}
\end{align}
Recall that $H_{t,s; \epsilon}(\delta)$ and $h_{t,s; \epsilon}(\delta)$ are convex in $\delta$.
A standard lemma in Appendix~\ref{app:bound-by-convexity} then implies that for any $\xi>0$ we have
\begin{align}
\Big | \frac{d}{d\delta} H_{t,s; \epsilon}(\delta) - \frac{d}{d\delta} h_{t,s; \epsilon}(\delta) \Big |
\leq \xi^{-1} \sum_{u \in \{\delta - \xi, \xi, \delta + \xi\}} |H_{t,s;\epsilon}(u) - h_{t,s;\epsilon}(u)| + C_\xi^+(\delta) + C_\xi^-(\delta) \label{eq:overlap:graph:bd1}
\end{align}
where
\begin{align}
& C_\xi^+(\delta) \equiv \frac{d}{d \delta}h_{t,s; \epsilon}(\delta+\xi) - \frac{d}{d \delta}h_{t,s; \epsilon}(\delta) \geq 0,
&& C_\xi^-(\delta) \equiv \frac{d}{d \delta}h_{t,s; \epsilon}(\delta) - \frac{d}{d \delta}h_{t,s; \epsilon}(\delta-\xi) \geq 0.
\label{eq:C+-}
\end{align}
We substitute \eqref{eq:overlap:graph:bd1} into \eqref{eq:overlap:graph:rewrite-Q1}, then square both sides and apply $ ( \sum_{r=1}^{k} u_r )^2 \leq k \sum_{r=1}^{k} u_r^2$. The resulting inequality upon full expectation is written as
\begin{align}
\mathbb{E} \big [ ( \mathbb{E}_{\underline{\tilde{H}}} \< Q_1 \>_{t,s; \epsilon, \delta} - \mathbb{E} \< Q_1 \>_{t,s; \epsilon, \delta} )^2 \big ]
	& \leq \frac{5 n^{2\theta}}{( \xi \ln 2)^2} \sum_{u \in \{\delta - \xi, \xi, \delta + \xi\}} \mathbb{E} [ (H_{t,s;\epsilon}(u) - h_{t,s;\epsilon}(u))^2 ] \nonumber \\
	& \hspace{0.5cm} + \frac{5 n^{2\theta}}{(\ln 2)^2} \big ( (C_\xi^+(\delta))^2 + (C_\xi^-(\delta))^2 \big ). \label{eq:overlap:graph:bd2}
\end{align}

We now make use of a concentration result for the interpolated free entropy. 
In Appendix \ref{appendix:free-energy} we prove:
\begin{lemma}[Free entropy concentration]\label{thm:concentration:free_energy}
For any $s$ in $[0,1]$ and $t=1, \dots, T$ there is a constant $C > 0$ such that
\begin{align}
\mathbb{E} \big [ ({H}_{t,s;\epsilon,\delta} - {h}_{t,s;\epsilon,\delta} )^2 \big ] \leq \frac{{C}}{n}.
\label{eq:concentration:free_energy}
\end{align}
\end{lemma}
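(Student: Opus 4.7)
The plan is to reduce the claim to the bound $\mathrm{Var}(\ln\mathcal{Z}_{t,s;\epsilon,\delta}) \leq Cn$, after which conditional Jensen yields
\begin{align*}
\mathbb{E}\bigl[(H_{t,s;\epsilon,\delta} - h_{t,s;\epsilon,\delta})^2\bigr]
= \mathrm{Var}\bigl(\tfrac{1}{n}\mathbb{E}_{\underline{\tilde H}}\ln\mathcal{Z}_{t,s;\epsilon,\delta}\bigr)
\leq \tfrac{1}{n^2}\mathrm{Var}(\ln\mathcal{Z}_{t,s;\epsilon,\delta}) \leq \tfrac{C}{n}.
\end{align*}
I would view $\ln\mathcal{Z}_{t,s;\epsilon,\delta}$ as a function of mutually independent quenched sources: the Poisson counts $m_s^{(t)}, e_i^{(t')}, e_{i,s}^{(t)}$; conditionally on them, the half-log-likelihoods $\tilde J_A\in\{0,+\infty\}$ together with the uniformly chosen $K$-tuples $A$; the half-edge weights $U$; and the Bernoulli fields $H_i$.

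The key input is a bounded-differences estimate: adding, removing, or resampling any single datum of the above kind shifts $\ln\mathcal{Z}_{t,s;\epsilon,\delta}$ by at most $\ln 2$. For a factor node this follows from the identity
\begin{align*}
\ln\frac{\mathcal{Z}_{\tilde J_A = \infty}}{\mathcal{Z}_{\tilde J_A = 0}}
= \ln\frac{1+\langle\sigma_A\rangle_{\tilde J_A=0}}{2} \in [-\ln 2,\,0],
\end{align*}
where the lower bound comes from the first GKS inequality \eqref{eq:GKS1}; an analogous estimate controls the effect of toggling one $H_i$ or of inserting/removing one half-edge whose weight lies in $\{0,+\infty\}$. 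With this in hand I would apply Efron--Stein conditionally on the Poisson counts, and a Poisson Efron--Stein (or Bobkov--Ledoux) inequality for the counts themselves, using that each $\mathrm{Poi}(\nu)$ is realised as a sum of i.i.d.\ Bernoulli's after thinning, so that inserting or removing a single factor is again governed by the same $O(1)$ estimate. Since $\mathbb{E}[m_s^{(t)}] + \sum_{i,t'}\mathbb{E}[e_i^{(t')}] + n = O(n)$, one obtains $\mathrm{Var}(\ln\mathcal{Z}_{t,s;\epsilon,\delta}) \leq Cn$.

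The principal obstacle is the degeneracy of the BEC: because $\tilde J_A\in\{0,+\infty\}$ a single hard constraint could a priori collapse $\mathcal{Z}$ by an arbitrary amount, so the bounded-differences property is not automatic. What saves the argument is exactly the GKS lower bound $\langle\sigma_A\rangle \geq 0$, which forces the event $\sigma_A = 1$ to carry at least half of the Gibbs mass and hence bounds the log-ratio by $\ln 2$. This reliance on GKS is one of the BEC-specific ingredients of the analysis; for a general symmetric channel a different concentration tool (e.g.\ a Talagrand- or log-Sobolev-type inequality on the half-log-likelihood distribution) would be required.
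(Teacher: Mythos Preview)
Your proposal is correct and rests on the same core ingredient as the paper's proof: the bounded-differences estimate $|\Delta \ln\mathcal{Z}| \leq \ln 2$ (or $2\ln 2$) when a single factor, half-edge, or field is toggled, which for the BEC is enforced by the first GKS inequality exactly as you describe. The paper's argument is organised differently, however. It does not pass through a variance bound on $\ln\mathcal{Z}$ directly; instead it proves sub-Gaussian tail bounds for $H_{t,s;\epsilon,\delta}-h_{t,s;\epsilon,\delta}$ by splitting the randomness into three blocks ($\underline{H}$, the factor data $\mathcal{J}$, the half-edge data $\mathcal{U}$), applying McDiarmid's inequality to each block, and handling the random number of factors/half-edges by a Chernoff truncation to a deterministic $m_{\max}, w_{\max}$ of order $n$. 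The variance bound is then recovered by integrating the tail.

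Your Efron--Stein route is more economical for the statement as formulated, since only a second-moment bound is needed: the conditional-Jensen reduction to $\mathrm{Var}(\ln\mathcal{Z})$ is clean, and Efron--Stein together with a Poisson Poincar\'e inequality for the counts gives the $O(n)$ variance immediately from the $O(1)$ increment bounds, without the detour through exponential tails and truncation. The paper's approach buys more (sub-Gaussian concentration, not just variance), which is not used elsewhere in the argument but is a stronger statement. Both routes are equally BEC-specific through their reliance on GKS for the lower bound on $\langle\sigma_A\rangle$.
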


Using Lemma~\ref{thm:concentration:free_energy}, the first term on the r.h.s is found to be smaller than $15{C}/((\ln 2)^2 n^{1-2\theta}\xi^2)$. 
Next, using $\frac{\ln 2}{n^{\theta}} \leq \frac{dh_{t,s;\epsilon}(\delta)}{d\delta} \leq 0$
allows us to assert from \eqref{eq:C+-} that $|C_\xi^{\pm}(\delta)| \leq \frac{\ln 2}{n^{\theta}}$. Then using $C_\xi^{\pm}(\delta) \geq 0$
\begin{align}
\int_{\delta_0}^{\delta_1} d\delta \big ( C^{+}_\xi(\delta)^2 + C^{-}_\xi(\delta)^2 \big )
	& \leq \frac{\ln 2}{n^{\theta}} \int_{\delta_0}^{\delta_1} d\delta \big ( C^{+}_\xi(\delta) + C^{-}_\xi(\delta) \big ) 
	\label{nottight}
	\\
	& = \frac{\ln 2}{n^{\theta}} \big [ \big(h_{t,s;\epsilon}(\delta_1+\xi) - h_{t,s;\epsilon}(\delta_1-\xi) \big) \nonumber \\
	& \hspace{1cm} + \big(h_{t,s;\epsilon}(\delta_0-\xi) - h_{t,s;\epsilon}(\delta_0+\xi) \big) \big ] \\
	& \leq \frac{4 (\ln 2)^2  \xi}{n^{2\theta}}
\end{align}
where the mean value theorem has been used to get the last inequality. Thus when \eqref{eq:overlap:graph:bd2} is integrated over $\delta$ we obtain
\begin{align}
\int_{\delta_0}^{\delta_1} d\delta\, \mathbb{E} \big [ ( \mathbb{E}_{\underline{\tilde{H}}} \< Q_1 \>_{t,s; \epsilon, \delta} - \mathbb{E} \< Q_1 \>_{t,s; \epsilon, \delta} )^2 \big ]
	\leq \frac{15{C}(\delta_1-\delta_0)}{(\ln 2)^2 n^{1-2\theta}\xi^2} + 4 \xi.
\label{eq:overlap:graph:bd3}
\end{align}
The proof is ended by choosing $\xi$ such that $1 / (n^{1-2\theta}\xi^2) = \xi$, i.e., $\xi = n^{-(1-2\theta)/3}$ and $\theta\in (0, 1/2)$.

\section{Proofs of technical lemmas}
\label{sec:proof}

\subsection{Proof of Lemma \ref{pert-free}}
\label{sec:proofepsilon-free}

Here we consider the effect of removing the perturbation, so we shall assume $\frac{dx^{(t)}}{d\epsilon} = 0$. From formula \eqref{eq:perturbed_f:derivative1a} proved in section \ref{sec:usefulderivatives} we have
\begin{align}
\frac{d}{d\epsilon} h_{t,s;\epsilon,\delta} = \frac{d\mathbb{E}[\bar{\epsilon}^{(t,s)}]}{d\epsilon} \frac{d}{d\mathbb{E}[\bar{\epsilon}^{(t,s)}]} h_{t,s;\epsilon,\delta}
	= - \frac{\ln 2}{n(1-\epsilon)} \sum_{i=1}^n (1 - \mathbb{E}\<\sigma_i\>_{t,s;\epsilon,\delta})
	= - \frac{\ln 2}{n} \sum_{i=1}^{n}  ( 1 - \mathbb{E} \< \sigma_i \>_{t, s;\epsilon,\delta;\sim H_i} ).
\end{align}
where $\< \sigma_i \>_{t, s;\epsilon,\delta;\sim H_i}$ is the Gibbs expectation with fixed $H_i=0$.
Thus $\vert \frac{d}{d\epsilon} h_{t,s;\epsilon,\delta} \vert \leq \ln 2$.
We also remarked below equation \eqref{eq:perturbed_f:derivative3a} that $\vert \frac{d}{d\delta} h_{t,s;\epsilon,\delta}\vert \leq \ln (2) /n^{\theta}$. 
Thus 
by the mean value theorem
\begin{align}
 \vert h_{t,s;\epsilon,\delta} - h_{t,s;\epsilon=0,\delta}\vert & \leq \epsilon \ln 2\, , \\
 \vert h_{t,s;\epsilon,\delta} - h_{t,s;\epsilon, \delta=0}\vert & \leq \frac{\ln 2}{n^{\theta}}\,.
\end{align}
By triangle inequality we get \eqref{eq:pert-free:1}.
Note that $\tilde{h}_{\epsilon,\delta}(\underline{\sfx})$ is in the form $h_{T,1;\epsilon,\delta}(\underline{\sfx}) + g(\underline{\sfx})$, 
therefore 
$$
\tilde{h}_{\epsilon,\delta}(\underline{\sfx}) - \tilde{h}_{\epsilon=0,\delta=0}(\underline{\sfx}) = h_{T,1;\epsilon,\delta}(\underline{\sfx}) - h_{T,1;\epsilon=0,\delta=0}(\underline{\sfx})\,.
$$
Consquently \eqref{eq:pert-free:2} follows immediately from \eqref{eq:pert-free:1}.

\subsection{Proof of Lemma \ref{thm:BEC:overlap}}
\label{sec:proof:BEC:overlap}
The first GKS inequality \eqref{eq:GKS1} implies
\begin{align}
\< \sigma_A \>_{t,s; \epsilon,\delta} ( 1 - \< \sigma_A \>_{t,s; \epsilon;\delta} ) \geq 0. \label{eq:GKS:implication}
\end{align}
Moreover, Nishimori's identity \eqref{eq:Nishimori} implies
\begin{align}
\mathbb{E}  \< \sigma_A \>_{t,s; \epsilon,\delta}  = \mathbb{E} [ \< \sigma_A \>_{t,s; \epsilon,\delta}^2 ] \label{eq:Nishimori:case1}
\end{align}
which can be written as
\begin{align}
\mathbb{E} [ \< \sigma_A \>_{t,s; \epsilon,\delta} ( 1 - \< \sigma_A \>_{t,s; \epsilon,\delta} ) ] = 0. \label{eq:Nishimori:case1:2}
\end{align}
As a result of \eqref{eq:GKS:implication} and \eqref{eq:Nishimori:case1:2}, we have $\< \sigma_A \>_{t,s; \epsilon,\delta}$ equal to either $0$ or $1$.

\subsection{Proof of Lemma~\ref{thm:bd_fixed_t}}
\label{sec:proof:bd_fixed_t}
Using the fundamental theorem of calculus, the desired difference has an integral form 
\begin{align}
\mathbb{E} \< Q_1^k \>_{t,s'; \epsilon,\delta}  - &\mathbb{E} \< Q_1^k \>_{t,0; \epsilon,\delta} 
	 = \frac{1}{n^k} \int_{0}^{s'} ds \sum_{i_1,\dots,i_k=1}^{n} \frac{d}{ds}\mathbb{E} \< \sigma_{i_1} \cdots \sigma_{i_k} \>_{t,s;\epsilon_n}
	 \nonumber \\
	& = \frac{1}{n^k} \int_{0}^{s'} ds \sum_{i_1,\dots,i_k=1}^{n}\Big\{ \frac{\alpha K}{T} \sum_{j=1}^{n}\big( \mathbb{E}\< \sigma_{i_1} \cdots \sigma_{i_k} \>_{e_{j,s}^{(t)} + 1} - \mathbb{E}\< \sigma_{i_1} \cdots \sigma_{i_k} \>_{e_{j,s}^{(t)}} \big) \nonumber \\
	& - \frac{\alpha n}{T}\big( \mathbb{E}\< \sigma_{i_1} \cdots \sigma_{i_k} \>_{m_s^{(t)} + 1} - \mathbb{E}\< \sigma_{i_1} \cdots \sigma_{i_k} \>_{m_s^{(t)}}\big)\Big\} 
	\label{eq:bd_fixed_t:1}
\end{align}
where~\eqref{eq:bd_fixed_t:1} follows from the Poisson property \eqref{eq:dPoissonMean}.
Since $| \< \sigma_{i_1} \cdots \sigma_{i_k} \> |\le 1$ we see that the absolute value of \eqref{eq:bd_fixed_t:1} is bounded by $2 \alpha (K+1) n / T$.

\subsection{Proof of Lemma~\ref{thm:rmv-delta}}
\label{sec:proof:rmv-delta}
As indices $t,s,\epsilon$ are fixed in this proof, we omit them for concision.
Recall that~\eqref{eq:Q1:delta-der} and \eqref{eq:GKS2} imply $\mathbb{E}\<Q_1\>_{\delta}$ is increasing in $\delta$ and therefore $\mathbb{E}[\<Q_1\>_{\delta}] \geq \mathbb{E}[\<Q_1\>_{0}]$. Using also
$|\<Q_1\>| \leq 1$ we obtain the inequality
\begin{align*}
\mathbb{E}[\<Q_1\>_{\delta}]^K - \mathbb{E}[\<Q_1\>_{0}]^K 
	= (\mathbb{E}[\<Q_1\>_{\delta}] - \mathbb{E}[\<Q_1\>_{0}]) \sum_{k=0}^{K-1} \mathbb{E}[\<Q_1\>_{\delta}]^{K-k} \mathbb{E}[\<Q_1\>_{\delta}]^{k}
	\leq K(\mathbb{E}[\<Q_1\>_{\delta}] - \mathbb{E}[\<Q_1\>_{0}]).
\end{align*}
This inequality, together with $q_1^{(t)} \equiv \mathbb{E} \tanh V^{(t)} \in [0,1]$ and $\mathbb{E}[\<Q_1\>_{\delta}] \geq \mathbb{E}[\<Q_1\>_{0}]$, gives
\begin{align}
& \int_{\delta_0}^{\delta_1} d\delta \int_{\varepsilon_0}^{\varepsilon_1} d\epsilon \Big | \mathbb{E}[\<Q_1\>_{\delta}]^K - \mathbb{E}[\<Q_1\>_{0}]^K
- K (q_1^{(t)})^{K-1} ( \mathbb{E}[\<Q_1\>_{\delta}] - \mathbb{E}[\<Q_1\>_{0}] ) \Big | \nonumber \\
& \leq 2 K \int_{\delta_0}^{\delta_1} d\delta \int_{\varepsilon_0}^{\varepsilon_1} d\epsilon (\mathbb{E}[\<Q_1\>_{\delta}] - \mathbb{E}[\<Q_1\>_{0}]) \nonumber \\
& = 2 K \int_{\delta_0}^{\delta_1} d\delta \big ( \int_{\varepsilon_0}^{\varepsilon_1} d\epsilon \mathbb{E}[\<Q_1\>_{\delta}] - \int_{\varepsilon_0}^{\varepsilon_1} d\epsilon \mathbb{E}[\<Q_1\>_{0}] \big ). 
\label{93}
\end{align}
We use the mean value theorem to upper bound \eqref{93} as
\begin{align*}
2 K \int_{\delta_0}^{\delta_1} d\delta \ \delta \max_{\delta' \in [0,\delta]} \big ( \frac{d}{d\delta} \int_{\varepsilon_0}^{\varepsilon_1} d\epsilon \mathbb{E}[\<Q_1\>_{\delta}] \big ) \Big |_{\delta=\delta'}
& = 2 K \int_{\delta_0}^{\delta_1} d\delta \ \delta \max_{\delta' \in [0,\delta]} \big ( \int_{\varepsilon_0}^{\varepsilon_1} d\epsilon \frac{d}{d\delta} \mathbb{E}[\<Q_1\>_{\delta}] \big ) \Big |_{\delta=\delta'}
\end{align*}
where the equality follows from the fact that $\underline{\sfx}(\epsilon)$ is independent of $\delta$ and therefore we can exchange the order of derivative and integral. Using \eqref{eq:Q1:delta-der} the last equation equals
\begin{align}
& 2 K \int_{\delta_0}^{\delta_1} d\delta \ \delta \max_{\delta' \in [0,\delta]} \big ( \int_{\varepsilon_0}^{\varepsilon_1} d\epsilon \frac{1}{n^{1+\theta}(1-\delta/n^{\theta})} \sum_{i, j=1}^{n} \mathbb{E} [\< \sigma_i \sigma_j \>_{\delta} 
- \< \sigma_i \>_{\delta} \< \sigma_j \>_{\delta}] \big ) \Big |_{\delta=\delta'} \nonumber \\
& = 2 Kn^{1-\theta} \int_{\delta_0}^{\delta_1} d\delta \frac{\delta}{1-\delta/n^{\theta}} \max_{\delta' \in [0,\delta]} \big ( \int_{\varepsilon_0}^{\varepsilon_1} d\epsilon \mathbb{E}[\< ( Q_1-\<Q_1\>_{\delta} )^2 \>_{\delta}] \big ) \Big |_{\delta=\delta'} \nonumber \\
& \leq \frac{6K}{n^{\theta}} \int_{\delta_0}^{\delta_1} d\delta \frac{\delta}{1-\delta/n^{\theta}} \label{eq:rmv-delta:3} \\
& \leq \frac{3K(\delta_1^2-\delta_0^2)}{n^{\theta}(1-\delta_1/n^{\theta})} \nonumber
\end{align}
where \eqref{eq:rmv-delta:3} follows from Lemma~\ref{thm:overlap:thermal:concentration}.

\subsection{Proof of Lemma \ref{thm:momentMatching:solution}}
\label{sec:proof:momentMatching:solution}

For each $n$, we seek distributions $\sfx^{(t)}\in \mathcal{B}$ for $V^{(t)}$, $t=1,\dots, T$ which solve equation \eqref{eq:momentMatching}. 
By symmetry between vertices
$\mathbb{E}\langle Q_1 \rangle_{t,0; \epsilon,0} = \mathbb{E}\langle \sigma_1 \rangle_{t,0; \epsilon,0}$ so the equation becomes 
\begin{align}
 \mathbb{E}\tanh V^{(t)} = \mathbb{E}\langle \sigma_1 \rangle_{t,0; \epsilon,0}.
\end{align}
Recall that in our interpolation scheme the right hand side depends only on  $\{\sfx^{(t')}\}_{t' <t}$ and is thus independent of $\sfx^{(t)}$. Thus it suffices to choose 
$V^{(t)}$ for $t=1, \dots, T$ as follows: $V^{(t)} = +\infty$ with probability $\mathbb{E}\langle \sigma_1 \rangle_{t,0; \epsilon,0}$ and $V^{(t)}= 0$ with 
probability $1-\mathbb{E}\langle \sigma_1 \rangle_{t,0; \epsilon,0}$. These are the distributions $\hat \sfx^{(t)}_n \in \mathcal{B}$ of the Lemma. It is clear that this solution is unique and $\hat{x}_n^{(t)} = \mathbb{E} \tanh V^{(t)}$.

Finally, we verify $d \hat{x}_n^{(t)} / d\epsilon \geq 0$. To simplify the notation we use $\< - \> \equiv \< - \>_{t,0;\epsilon,0}$ and $\< - \>_{\sim \bar{H}_j^{(t,0)}}$ denotes $\< - \>_{t,0;\epsilon,0}$ with $\bar{H}_j^{(t,0)}$ set to 0. 
Recall the definition of $\mathbb{E}[\bar{\epsilon}^{(t,s)}]$ in section \ref{sec:usefulderivatives}. By the chain rule we have
\begin{align}
\frac{d\hat{x}^{(t)}}{d\epsilon}
	= \frac{d\mathbb{E}[\bar{\epsilon}^{(t,0)}]}{d\epsilon} \frac{d\mathbb{E}\<\sigma_1\>_{t,0;\epsilon,0}}{d \mathbb{E}[\bar{\epsilon}^{(t,0)}]}.
\end{align}
To compute the last derivative, first we use the identity $e^{\pm x} = (1 \pm \tanh x)\cosh x $ to write
\begin{align*}
\mathbb{E}\<\sigma_1\>_{t,0;\epsilon,0} 
	& = \mathbb{E} \Biggl[ \frac{ \<e^{\bar{H}^{(t,0)}_j (\sigma_j-1)} \sigma_1 \>_{\sim \bar{H}_j^{(t,0)}} }{ \< e^{\bar{H}^{(t,0)}_j (\sigma_j-1)} \>_{\sim \bar{H}_j^{(t,0)}} } \Biggr]
	 = \mathbb{E}\Biggl[ \frac{ \< \sigma_1 \>_{\sim \bar{H}_j^{(t,0)}} + \< \sigma_1\sigma_j \>_{\sim \bar{H}_j^{(t,0)}} \tanh \bar{H}_j^{(t,0)} }{ 1 + \< \sigma_j \>_{\sim \bar{H}_j^{(t,0)}} \tanh \bar{H}_j^{(t,0)}} \Biggr] \\
	& = \mathbb{E}[\bar{\epsilon}_j^{(t,0)}] \cdot \mathbb{E} 
	\Biggl[ \frac{ \< \sigma_1 \>_{\sim \bar{H}_j^{(t,0)}} + \< \sigma_1\sigma_j \>_{\sim \bar{H}_j^{(t,0)}} }{ 1 + \< \sigma_j \>_{\sim \bar{H}_j^{(t,0)}} } \Biggr]
	+ (1 - \mathbb{E}[\bar{\epsilon}_j^{(t,0)}]) \mathbb{E} \< \sigma_1 \>_{\sim \bar{H}_j^{(t,0)}}.
\end{align*}
Then it is straightforward to compute 
\begin{align}
\frac{d\mathbb{E}\<\sigma_1\>_{t,0;\epsilon,0}}{d \mathbb{E}[\bar{\epsilon}^{(t,0)}]}
	& = \sum_{j=1}^{n} \frac{d\mathbb{E}\<\sigma_1\>_{t,0;\epsilon,0}}{d \mathbb{E}[\bar{\epsilon}_j^{(t,0)}]}
	\- = \sum_{j=1}^{n} \mathbb{E} \Biggl[ \frac{ \< \sigma_1 \sigma_j \>_{\sim \bar{H}_j^{(t,0)}} - \< \sigma_1 \>_{\sim \bar{H}_j^{(t,0)}} \< \sigma_j \>_{\sim \bar{H}_j^{(t,0)}} }{ 1+\< \sigma_j \>_{\sim \bar{H}_j^{(t,0)}} } \Biggr]. \label{eq:der-Gibbs-barEp}
\end{align}
The second GKS inequality \eqref{eq:GKS2} ensures \eqref{eq:der-Gibbs-barEp} non-negative, leaving the sign of $dx^{(t)}/d\epsilon$ determined by $d\mathbb{E}[\bar{\epsilon}^{(t,0)}] / d\epsilon$. Using \eqref{eq:BP-update:BEC} and \eqref{eq:E-epsilon-bar}, 
\begin{align*}
& \frac{\mathbb{E}[\bar{\epsilon}^{(1,0)}]}{d\epsilon} = 1, 
&& \frac{\mathbb{E}[\bar{\epsilon}^{(t,0)}]}{d\epsilon} = \frac{1-\mathbb{E}[\bar{\epsilon}^{(t,0)}]}{1-\epsilon} + (1-\mathbb{E}[\bar{\epsilon}^{(t,0)}]) \frac{\alpha K(K-1)(1-q)}{T} \sum_{t'=1}^{t-1} \hat{x}^{(t') K-2} \frac{d\hat{x}^{(t')}}{d\epsilon}.
\end{align*}
This equation implies that the claim $d \hat{x}^{(t)} / d\epsilon \geq 0$ is true for $t=1$ by direct calculation.  Then we also get the claim for $t \geq 2$ by induction.
%
%

\subsection{Proof of Lemma \ref{equivalence}}
\label{proof:h_RS_coupled:equivalence}


We first note that the generalized entropy functionals can easily be shown to be upper bounded and are defined on a closed convex set of probability measures. Hence we can replace the supremum in 
the lemma by a maximum. For the BEC $\underline\sfx\in \mathcal{B}^T$ and $\tilde{h}_{\epsilon=0,\delta=0}(\underline{\sfx})$ becomes a function of $\underline x\in [0,1]^T$, therefore the proof of the lemma 
can be carried out directly by elementary real analysis calculations. 

Here we give an analysis that applies more generally to functionals over $\underline\sfx\in \mathcal{X}^T$ in the general case of symmetric channels.
Let us outline the strategy of the proof: (i) We first show 
that the stationarity condition for $\tilde{h}_{\epsilon=0,\delta=0}(\underline{\sfx})$ implies that all $\sfx^{(t)}$ are equal for $t=1, \dots, T$; (ii) 
We then show that a maximum of  $\tilde{h}_{\epsilon=0,\delta=0}(\underline{\sfx})$ 
is necessarily a stationary point.  


Before carrying out point (i) it is convenient to
 express $\tilde{h}_{\epsilon=0,\delta=0}(\underline \sfx)$ more explicitly in terms of the distribution $\underline{\sfx}$ thanks to a
formalism from coding theory (see e.g. \cite{KumYMP:2014,MCT:2008}). We define an entropy functional\footnote{The notation $H$ for the entropy should not be confused with the notation $H$ for the 
perturbation field in the model.} $H: \mathcal{X} \rightarrow \mathbb{R}$ as
\begin{align}
H(\sfx)\equiv \int \ln (1+e^{-2 a}) \sfx(da) = \ln 2 - \int \ln (1+ \tanh a) \sfx(da). \label{eq:entropy}
\end{align}
The argument $a$ is to be interpreted as a half-log-likelihood ratio. 
Two convolution operators $\circledast, \boxast: \sX \times \sX \rightarrow \sX$ are 
defined for $a_1 \sim \sfx_1, a_2 \sim \sfx_2$ such that $\sfx_1 \circledast \sfx_2$ is 
the distribution of $a_1+a_2$ and $\sfx_1 \boxast \sfx_2$ is the distribution of $\tanh^{-1} (\tanh a_1 \tanh a_2)$. 
Therefore, the entropies of convolutions are
\begin{align}
H(\circledast_{i=1}^k \sfx_i ) &  = \int \ln (1+e^{-2 \sum_{i=1}^k a_i}) \prod_{i=1}^k \sfx_i (da_i), \label{eq:entropy-conv1} \\
H(\boxast_{i=1}^k \sfx_i ) & = \ln 2 - \int \ln \Big( 1 + \prod_{i=1}^k \tanh a_i \Big ) \prod_{i=1}^k \sfx_i (da_i). \label{eq:entropy-conv2}
\end{align}
We define $\sfx^{\circledast 0} \equiv \Delta_0$, where $\Delta_0$ is the identity of $\circledast$ and it is a distribution with 
solely a point mass at 0. We also define $\Lambda^{\circledast}(\sfx) \equiv \sum_{l=0}^{\infty} \Lambda_l \sfx^{\circledast l}$, 
where $\Lambda_l = \frac{(\alpha K)^l}{l!} e^{-\alpha K}$ denotes the probability that a variable node has degree $l$, 
and $\lambda^{\circledast}(\sfx) \equiv  \sum_{l=1} \lambda_l \sfx^{\circledast (l-1)}$, 
where $\lambda_l = \frac{l \Lambda_l}{\Lambda'(1)} = \frac{(\alpha K)^{l-1}}{(l-1)!} e^{-\alpha K}$ denotes the probability that an edge is connected to a variable node of degree $l\geq 1$. 
One can check that (see Appendix~\ref{proof:RSFreeEntropy_coupled2})
\begin{align}
\tilde{h}_{\epsilon=0,\delta=0} \big ( \underline{\sfx} \big ) 
	& = - \alpha K H \Big( \frac{1}{T} \sum_{t=1}^{T} \sfc \boxast (\sfx^{(t)})^{\boxast (K-1)} \Big ) 
	+ H \Big ( \Lambda^{\circledast} \Big ( \frac{1}{T} \sum_{t=1}^{T} \sfc \boxast (\sfx^{(t)})^{\boxast (K-1)} \Big ) \Big ) \nonumber \\
	& \hspace{1cm} + \alpha H(\sfc) + \frac{\alpha(K-1)}{T} \sum_{t=1}^{T} H \big (\sfc \boxast (\sfx^{(t)})^{\boxast K} \big ).
\label{eq:RSFreeEntropy_coupled2}
\end{align}
We will need differentiation rules for functionals. The directional (or Gateaux) derivative of a functional\footnote{To have well defined directional derivatives it is 
understood that we extend the space $\sX$ to the Banach space of signed probability measures over $\bar{\mathbb{R}}$.}
$F: \sfx\in \mathcal{X}\to \mathbb{R}$ at point $\sfx\in \mathcal{X}$ in 
the direction $\eta = \sfx_2 -\sfx_1$ 
where $\sfx_1,\sfx_2\in \mathcal{X}$ is by definition the following linear functional of $\eta$:
\begin{align*}
dF(\sfx) [\eta] \equiv \lim_{\gamma \rightarrow 0} \frac{F(\sfx + \gamma \eta) - F(\sfx)}{\gamma}.
\end{align*}
We employ the following computational rules that are easily proved for linear functionals $F$:
\begin{lemma}[{\cite[Propositions 14 and 15]{KumYMP:2014}}]
Let $F: \sX \rightarrow \mathbb{R}$ be a linear functional, and $*$ be 
either $\circledast$ or $\boxast$. Then for $k\geq 1$ integer, $\sfx, \sfx_1, \sfx_2 \in \sX$, and setting $\eta = \sfx_2 - \sfx_1$, we have
\begin{align*}
d F(\sfx^{* k})[\eta] = k F(\sfx^{* (k-1)} * \eta).
\end{align*}
For any polynomials $p,q$, we have
\begin{align*}
d F \big (p^{\circledast}(q^{\boxast}(\sfx)) \big )[\eta] = F \big ( p'^{\circledast}(q^{\boxast}(\sfx) \circledast (q'^{\boxast}(\sfx) \boxast \eta )\big ).
\end{align*}
where $p'$ and $q'$ are the derivatives of the polynomials.
\label{thm:entropyFunctional:derivative}
\end{lemma}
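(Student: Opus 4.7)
The plan is to prove both identities directly from the limit definition of the Gateaux derivative, exploiting two structural properties: (a) both $\circledast$ and $\boxast$ are commutative, associative, and bilinear operations on the space of signed measures; (b) $F$ is linear. Under these properties, expanding a ``perturbed'' convolution in powers of $\gamma$ is a purely algebraic binomial-type operation and the derivative simply picks out the linear-in-$\gamma$ term.

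For the first identity, by the commutativity and bilinearity of $*$ (which is either $\circledast$ or $\boxast$), the binomial expansion gives
\begin{align*}
(\sfx+\gamma\eta)^{*k}\;=\;\sum_{j=0}^{k}\binom{k}{j}\gamma^{j}\,\sfx^{*(k-j)}*\eta^{*j}\;=\;\sfx^{*k}+k\gamma\,\sfx^{*(k-1)}*\eta+\gamma^{2}\,R_{k}(\sfx,\eta,\gamma),
\end{align*}
where $R_{k}$ is a polynomial in $\gamma$ whose coefficients are signed measures polynomial in $\sfx$ and $\eta$. Applying $F$, using its linearity, subtracting $F(\sfx^{*k})$, dividing by $\gamma$, and sending $\gamma\to 0$ isolates the linear term and yields the claimed formula, provided $F(R_{k}(\sfx,\eta,\gamma))$ remains bounded as $\gamma\to 0$. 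For the entropy-type functionals in \eqref{eq:entropy-conv1}--\eqref{eq:entropy-conv2} this is immediate, since they are bounded integrals against fixed kernels in the relevant half-log-likelihood variables.

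For the second identity, I would apply the first identity twice together with the chain rule for compositions of polynomial maps. Writing $q(y)=\sum_{j}q_{j}y^{j}$, term-by-term application of the first identity (with $*=\boxast$, invoked at the level of signed measures via bilinearity of $\boxast$) gives the inner variation
\begin{align*}
d\bigl(q^{\boxast}(\sfx)\bigr)[\eta]\;=\;\sum_{j}q_{j}\,j\,\sfx^{\boxast(j-1)}\boxast\eta\;=\;q'^{\,\boxast}(\sfx)\boxast\eta.
\end{align*}
Next, set $G(\sfy)\equiv F\bigl(p^{\circledast}(\sfy)\bigr)$ with $p(z)=\sum_{i}p_{i}z^{i}$ and apply the first identity a second time, now with $*=\circledast$, to each monomial $F(\sfy^{\circledast i})$; summing in $i$ and using linearity of $F$ and bilinearity of $\circledast$ yields $dG(\sfy)[\zeta]=F\bigl(p'^{\,\circledast}(\sfy)\circledast\zeta\bigr)$. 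Composing via the chain rule with $\sfy=q^{\boxast}(\sfx)$ and $\zeta=q'^{\,\boxast}(\sfx)\boxast\eta$ produces the stated formula.

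The main (and essentially only) obstacle is a technical one: one must work in a topology on the space of signed measures in which $\circledast$, $\boxast$, and $F$ are (jointly) continuous, so that the Gateaux derivatives are well defined and the $O(\gamma^{2})$ remainder terms vanish in the limit. For the entropy functionals appearing in \eqref{eq:RSFreeEntropy_coupled2}, integration against the bounded kernels $\ln(1+e^{-2a})$ and $\ln(1+\prod_{i}\tanh a_{i})$ makes this continuity transparent, and all the manipulations above then reduce to finite algebraic identities between polynomials in $\gamma$.
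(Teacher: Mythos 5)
The paper does not prove this lemma; it is quoted verbatim from \cite[Propositions 14 and 15]{KumYMP:2014}, and the footnote on the directional derivative simply declares the ambient Banach space of signed measures without spelling out the argument. Your blind reconstruction is correct and is the standard one: both $\circledast$ and $\boxast$ are commutative, associative, bounded bilinear operations on signed measures (pushforwards of product measures under $+$ and $\atanh(\tanh\cdot\tanh\cdot)$), so $\gamma\mapsto(\sfx+\gamma\eta)^{*k}$ is a genuine polynomial with measure-valued coefficients, $F$ applied to it is a real polynomial in $\gamma$, and the Gateaux derivative is read off as its linear coefficient with no limiting argument required; the second identity then follows by linearity in the monomials of $q$ and the chain rule for the polynomial outer map $G(\sfy)=F(p^{\circledast}(\sfy))$. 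Your cautionary remarks about a suitable topology are in fact unnecessary once you observe, as you do at the end, that the whole computation is a finite algebraic identity between polynomials in $\gamma$; boundedness of the entropy kernel only enters in guaranteeing that the coefficients $F(\sfx^{*(k-j)}*\eta^{*j})$ are finite, which holds on the paper's space of symmetric (signed) measures.
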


\begin{lemma}[{\cite[Theorem 4.41]{MCT:2008}}]
For any $\sfx_1, \sfx_2, \sfx_3 \in \sX$, we have
\begin{align*}
H((\sfx_1 - \sfx_2)\circledast \sfx_3) + H((\sfx_1 - \sfx_2) \boxast \sfx_3) = H(\sfx_1 - \sfx_2).
\end{align*}
\label{thm:entropyFunctional:duality}
\end{lemma}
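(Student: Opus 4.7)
The plan is to reduce the identity to a clean pointwise algebraic relation between $\tanh(a+b)$, $\tanh a\tanh b$, $\tanh a$ and $\tanh b$, after rewriting $H$ in its ``second form'' from \eqref{eq:entropy}. The key feature to exploit is that $\eta \equiv \sfx_1-\sfx_2$ is a \emph{signed} measure of total mass $0$, while $\sfx_3$ is a probability measure of total mass $1$; this asymmetry is what will turn a two-variable decomposition into the claimed one-variable right-hand side.

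First I would use the alternative expression $H(\sfx)=\ln 2-\int\ln(1+\tanh a)\,\sfx(da)$ and observe that, since $\int \eta = 0$, the constant $\ln 2$ drops out whenever we apply $H$ to a signed measure built from $\eta$. Writing out the three terms using \eqref{eq:entropy-conv1}--\eqref{eq:entropy-conv2}, the claim becomes
\begin{align*}
\int \ln\!\bigl(1+\tanh(a+b)\bigr)\,\eta(da)\,\sfx_3(db)
+\int \ln\!\bigl(1+\tanh a\tanh b\bigr)\,\eta(da)\,\sfx_3(db)
=\int \ln\!\bigl(1+\tanh a\bigr)\,\eta(da).
\end{align*}

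Next I would establish the pointwise identity that does all the work. From the addition formula $\tanh(a+b)=(\tanh a+\tanh b)/(1+\tanh a\tanh b)$ one gets
\begin{align*}
1+\tanh(a+b)=\frac{(1+\tanh a)(1+\tanh b)}{1+\tanh a\tanh b},
\end{align*}
and taking logarithms yields
\begin{align*}
\ln\!\bigl(1+\tanh(a+b)\bigr)+\ln\!\bigl(1+\tanh a\tanh b\bigr)=\ln\!\bigl(1+\tanh a\bigr)+\ln\!\bigl(1+\tanh b\bigr).
\end{align*}

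Finally I would integrate this identity against $\eta(da)\,\sfx_3(db)$ and use Fubini. The $\ln(1+\tanh a)$ piece factors as $\bigl(\int\ln(1+\tanh a)\,\eta(da)\bigr)\cdot\bigl(\int\sfx_3(db)\bigr)$ and reproduces the right-hand side since $\sfx_3$ has total mass $1$; the $\ln(1+\tanh b)$ piece factors as $\bigl(\int\eta(da)\bigr)\cdot\bigl(\int\ln(1+\tanh b)\,\sfx_3(db)\bigr)$ and vanishes because $\eta$ has total mass $0$. This gives exactly the desired equality. There is really no ``hard step''—the only subtlety is the bookkeeping of the constant $\ln 2$ and the total-mass cancellation, so symmetry of $\sfx_3$ (that is, membership of $\sfx_3$ in $\mathcal{X}$ as opposed to merely being a probability measure) is not actually used; what matters is the balance of masses $0$ versus $1$ in $\eta$ and $\sfx_3$ respectively, combined with the multiplicative structure of $1+\tanh(a+b)$.
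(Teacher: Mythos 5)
Your proof is correct, and your argument via the multiplicative identity
\begin{align*}
1+\tanh(a+b)=\frac{(1+\tanh a)(1+\tanh b)}{1+\tanh a\tanh b}
\end{align*}
combined with the observation that $\eta=\sfx_1-\sfx_2$ has total mass $0$ while $\sfx_3$ has total mass $1$ is exactly the right mechanism. Note, however, that the paper does not supply its own proof of this statement at all: Lemma~\ref{thm:entropyFunctional:duality} is cited verbatim as Theorem~4.41 of Richardson--Urbanke, so there is no in-paper argument to compare yours against. Your derivation is in fact the standard one behind the so-called ``duality rule for entropy,'' and it essentially reconstructs the argument in the referenced book.

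One small caveat regarding your closing remark that ``symmetry of $\sfx_3$ is not actually used.'' The algebraic identity and the mass-$0$/mass-$1$ factorization indeed hold formally for any measures, but the Fubini step requires that the integrals $\int\ln(1+\tanh b)\,\sfx_3(db)$ and $\int\ln(1+\tanh a)\,|\eta|(da)$ be finite, which fails if these measures can charge $a=-\infty$ (where $1+\tanh a=0$). Symmetry, i.e.\ membership in $\sX$, rules out mass at $-\infty$ and guarantees these integrals are finite (one has $H(\sfx)\in[0,\ln 2]$ for $\sfx\in\sX$), so it is not entirely a spectator even though the key pointwise identity does not see it. Your proof is otherwise complete and correct.
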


We can now proceed to prove (i). 
Fix $t\in \{1, \cdots, T\}$. Consider the functional $\tilde h_{\epsilon=0, \delta=0}(\underline \sfx)$ as functional w.r.t its $t$-th component only. We denote
$d_{t} \tilde h_{\epsilon=0, \delta=0}(\underline\sfx) [\eta^{(t)}]$, the directional derivative of this functional
 at the point $\sfx$ with respect to its $t$-th component (only) in the direction 
$\eta^{(t)}$. This a linear functional of $\eta^{(t)}$ and corresponds to a ``partial'' Gateaux derivative as indicated by the notation $d_t$.
Let $$
\sfT(\sfc, \underline{\sfx}) \equiv \lambda^{\circledast} \big( \frac{1}{T} \sum_{t=1}^{T} \sfc \boxast \sfx^{(t)\boxast (K-1)} \big)\,.
$$
Using Lemma~\ref{thm:entropyFunctional:derivative}, $d_{t} \tilde{h}_{\epsilon=0,\delta=0}(\underline{\sfx})[\eta^{(t)}]$ is the sum of the following three terms:
\begin{align}
- \alpha K d_{t}  H \Big ( \frac{1}{T} \sum_{t=1}^{T} \sfc \boxast \sfx^{(t)\boxast (K-1)} \Big) [\eta^{(t)}]&= 
- \frac{\alpha K(K-1)}{T} H \big ( \sfc \boxast \sfx^{(t)\boxast (K-2)} \boxast \eta^{(t)} \big ), \label{eq:dir_derivative:term1}\\
	d_{t} H \Big (\Lambda^{\circledast}\Big ( \frac{1}{T} \sum_{t=1}^{T} 
	\sfc \boxast \sfx^{(t)\boxast (K-1)} \Big) \Big ) [\eta^{(t)}] &= 
	\frac{\alpha K(K-1)}{T} H \big ( \sfT(\sfc, \underline{\sfx}) 
	\circledast \big ( \sfc \boxast \sfx^{(t)\boxast (K-2)} \boxast \eta^{(t)} \big ) \big ), \label{eq:dir_derivative:term2}\\
- \alpha(K-1)d_{t}   H \Big (\frac{1}{T}\sum_{t=1}^{T}\sfc \boxast (\sfx^{(t)\boxast K} \Big ) [\eta^{(t)}]
	& = - \frac{\alpha K(K-1)}{T} H \big( \sfc \boxast \sfx^{(t)\boxast (K-1)} \boxast \eta^{(t)} \big ). \label{eq:dir_derivative:term3}
\end{align}
In addition, we use Lemma~\ref{thm:entropyFunctional:duality} to rewrite \eqref{eq:dir_derivative:term2} as
\begin{align}
\frac{\alpha K(K-1)}{T} \left \{ H \left ( \sfc \boxast \sfx^{(t)\boxast (K-2)} \boxast \eta^{(t)} \right ) - 
H \left ( \sfT(\sfc, \underline{\sfx}) \boxast \big ( \sfc \boxast \sfx^{(t)\boxast (K-2)} \boxast \eta^{(t)} \big ) \right )  \right \}. \label{eq:dir_derivative:term2a}
\end{align}
Putting \eqref{eq:dir_derivative:term1}, \eqref{eq:dir_derivative:term3} and \eqref{eq:dir_derivative:term2a} together, we have
\begin{align}
d_{t} \tilde{h}_{\epsilon=0,\delta=0}(\underline{\sfx})[\eta^{(t)}]
	& = \frac{\alpha K(K-1)}{T} H \left ( \big(\sfx^{(t)} - \sfT(\sfc, \underline{\sfx}) \big) \boxast \big( \sfc \boxast (\sfx^{(t)\boxast (K-2)} \boxast \eta^{(t)} \big) \right ),
	\label{eq:dhcoupled_dxt}
\end{align}
which implies that $\underline \sfx$ is a stationnary point of $\tilde{h}_{\epsilon=0,\delta=0}(\underline{\sfx})[\eta^{(t)}]$ if and only if it satisfies the equation 
\begin{align}\label{stat-point-equ}
 \sfx^{(t)} = \sfT(\sfc, \underline{\sfx}), \qquad t=1, \dots, T.
\end{align}
In particular we have $\sfx^{(1)} = \dots = \sfx^{(T)}$ as claimed.

Now we prove (ii). We proceed by contradiction and show that: if $\underline \sfx$ is not a stationary point then it cannot be a maximum.
From the Taylor expansion of the logarithm and \eqref{eq:channelSymmetry} we find for any $\sfx\in \mathcal{X}$ 
\begin{align}
H(\sfx) 
	& = \ln 2 - \sum_{p=1}^{\infty} \frac{(-1)^{p+1}}{p} \int \sfx(da) (\tanh a)^p 
	\- = \ln 2 - \sum_{p=1}^{\infty} \frac{1}{2p(2p-1)} \int \sfx(da) (\tanh a)^{2p} \, .\label{eq:entropy:expansion}
\end{align}
Let $\sfx_1, \sfx_2, \sfx_3, \sfx_4 \in \sX$. From \eqref{eq:entropy-conv2} and \eqref{eq:entropy:expansion} 
\begin{align}
H((\sfx_1-\sfx_2) \boxast (\sfx_3-\sfx_4)) = - \sum_{p=1}^{\infty} \frac{1}{2p(2p-1)} \bigg\{ \int (\sfx_1-\sfx_2)(da) (\tanh a)^{2p} \bigg\} \bigg\{ \int (\sfx_3-\sfx_4)(da) (\tanh a)^{2p} \bigg\}
\end{align}
which implies that \eqref{eq:dhcoupled_dxt} can be written as
\begin{align}
d_{t} \tilde{h}_{\epsilon=0,\delta=0}(\underline{\sfx})[\eta^{(t)}]
	& = - \frac{\alpha K(K-1)}{T} \sum_{p=1}^{\infty} \frac{1}{2p(2p-1)} \bigg\{ \int ( \sfx^{(t)} - \sfT(\sfc, \underline{\sfx}) )(da) (\tanh a)^{2p} \bigg\} \nonumber \\
	& \hspace{1cm} \bigg\{ \int ( \sfc \boxast \sfx^{(t)\boxast (K-2)} )(da) (\tanh a)^{2p} \bigg\} \bigg\{ \int \eta^{(t)}(da) (\tanh a)^{2p} \bigg\}. \label{eq:dhcoupled_dxt:expansion}
\end{align}
Now, take an $\underline\sfx$ that is not a stationary point. Then there must exist an $t^*$ such that $\sfx^{(t^*)} \neq \sfT(\sfc, \underline{\sfx})$. Hence we can look at the directional derivative in the 
non-trivial direction
$\eta^{(t^*)} = \sfx^{(t^*)} - \sfT(\sfc, \underline{\sfx})$. From \eqref{eq:dhcoupled_dxt:expansion} we see that
\begin{align*}
d_{t^{*}} \tilde{h}_{\epsilon=0,\delta=0}(\underline{\sfx})[\eta^{(t^*)}]
	= &\frac{\alpha K(K-1)}{T} \sum_{p=1}^{\infty} \frac{1}{2p(2p-1)} \bigg\{ \int ( \sfc \boxast \sfx^{(t)\boxast (K-2)})(da) (\tanh a)^{2p} \bigg\} 
	\nonumber \\ &
	\times
	\bigg\{ \int ( \sfx^{(t^*)} - \sfT(\sfh, \sfc, \underline{\sfx}) )(da) (\tanh a)^{2p} \bigg\}^2
\end{align*}
so the directional derivative is strictly positive. Hence $\underline\sfx$ cannot be a maximum since there exists one direction in which the functional increases.

\appendix
%


\section{Direct proof of identity \eqref{eq:channelSymmetry} for symmetric distributions}
\label{proof:equi_channel_symmetry}
If $\sfx(-dh) = e^{-2h} \sfx(dh)$ holds, then we have
\begin{align*}
&\int_{-\infty}^{\infty} (\tanh h)^{2k-1} \, \sfx(dh) 
	 = \int_{0}^{\infty} (\tanh h)^{2k-1} \, \sfx(dh) - \int_{0}^{\infty} (\tanh h)^{2k-1} \, \sfx(-dh)  \\
	 = \,&\int_{0}^{\infty} (\tanh h)^{2k-1} (1-e^{-2h}) \, \sfx(dh)   = \int_{0}^{\infty} (\tanh h)^{2k} (1+e^{-2h}) \, \sfx(dh)  \\
	 = \,&\int_{0}^{\infty} (\tanh h)^{2k} \, \sfx(dh) + \int_{0}^{\infty} (\tanh h)^{2k} \, \sfx(-dh)  = \int_{-\infty}^{\infty} (\tanh h)^{2k} \, \sfx(dh) .
\end{align*}

%
%

\section{Rewriting the replica formula: Proof of \eqref{eq:RSFreeEntropy_coupled2}}
\label{proof:RSFreeEntropy_coupled2}
We copy again 
\begin{align*}
\tilde{h}_{\epsilon,\delta} \big ( \underline{\sfx} \big ) 
	 = \,&\mathbb{E} \Big [ \ln \Big( \prod_{t=1}^{T} \prod_{b=1}^{l} (1 + \tanh U_b^{(t)} ) + e^{-2(H+\tilde{H})} \prod_{t=1}^{T} \prod_{b=1}^{l} (1 - \tanh U_b^{(t)} ) \Big) \nonumber \\
	& \qquad- \frac{\alpha(K-1)}{T} \sum_{t=1}^{T} \ln \Big( 1 + \tanh \tilde{J} \prod_{i=1}^{K} \tanh V_i^{(t)} \Big ) 
	- \alpha  \ln ( 1 + \tanh \tilde{J}  )\Big] .
\end{align*}
The first term can be rewritten as
\begin{align*}
	& \mathbb{E} \ln \Big( \prod_{t=1}^{T} \prod_{b=1}^{l} (1 + \tanh U_b^{(t)} ) + e^{-2(H+\tilde{H})} \prod_{t=1}^{T} \prod_{b=1}^{l} (1 - \tanh U_b^{(t)} ) \Big) \\
	=\,& \mathbb{E} \ln \Big( \prod_{t=1}^{T} \prod_{b=1}^{l} (1 + \tanh U_b^{(t)} ) \Big) + \mathbb{E} \ln \Big( 1 + e^{-2(H+\tilde{H})} \prod_{t=1}^{T} \prod_{b=1}^{l} \frac{1 - \tanh U_b^{(t)} }{1 + \tanh U_b^{(t)} } \Big) \\
	=\,& \mathbb{E} \ln \Big( \prod_{t=1}^{T} \prod_{b=1}^{l} (1 + \tanh U_b^{(t)} ) \Big) + \mathbb{E} \ln \Big(1 + e^{-2  ( \sum_{t=1}^{T} \sum_{b=1}^{l} U_b^{(t)} + H + \tilde{H}) } \Big) \\
	=\,& - \alpha K H \Big( \frac{1}{T} \sum_{t=1}^{T} \sfc \boxast (\sfx^{(t)})^{\boxast (K-1)} \Big) + \alpha K \ln 2 + H \Big( \sfh \circledast \Lambda^{\circledast} \Big( \frac{1}{T} \sum_{t=1}^{T} \sfc \boxast (\sfx^{(t)})^{\boxast (K-1)} \Big ) \Big ).
\end{align*}
The second term can be easily seen to be equal to
\begin{align*}
- \frac{\alpha(K-1)}{T} \sum_{t=1}^{T} \ln ( 1 + \tanh \tilde{J} \prod_{i=1}^{K} \tanh V_i^{(t)} )
	 =&\, \frac{\alpha(K-1)}{T} \sum_{t=1}^{T} H \big (\sfc \boxast (\sfx^{(t)})^{\boxast K} \big ) - \alpha(K-1) \ln 2.
\end{align*}
The remaining term is
\begin{align*}
- \alpha \mathbb{E} \ln ( 1 + \tanh \tilde{J} )
	= \alpha (H(\sfc) -  \ln 2).
\end{align*}

%

\section{Derivatives of the conditional entropy: Proof of \eqref{eq:perturbed_f:derivative1a}--\eqref{eq:perturbed_f:derivative3b}}
\label{app:entropy-derivatives}
A large part of this appendix is an adaptation of \cite{Mac:2007,KudM:2009}. We recall that $h_{t,s;\epsilon,\delta} = n^{-1}\mathbb{E}\ln \bar{\mathcal{Z}}_{t, s; \bar{\mathcal{E}}}$ where 
$\bar{\mathcal{Z}}_{t, s; \bar{\mathcal{E}}}$ is the partition function associated to the hamiltonian 
 \eqref{eq:Htse-abstract}. Therefore, as explained in section \eqref{sec:thermal-and-quenched}, the free entropy only depends on $(\epsilon, \delta)$ through the combination \eqref{eq:E-epsilon-bar}, 
 with an explicit dependence as well as (possibly) an implicit one through the choice of $\underline \sfx$. 
 To alleviate the notations in this appendix we  drop the subscripts $t,s;\bar{\mathcal{E}}$ in the Gibbs brackets.

\subsection{Proof of \eqref{eq:perturbed_f:derivative1a}}
\label{app:entropy-derivatives:1a}
Let $\sH_{t,s; \bar{\mathcal{E}}}^{\sim i}(\underline{\sigma}, \underline{\tilde{J}}, \underline{\bar{H}})$ be the Hamiltonian
$\sH_{t,s; \bar{\mathcal{E}}}(\underline{\sigma}, \underline{\tilde{J}}, \underline{\bar{H}})$ with $\bar{H}_i^{(t,s)}=0$. 
Let ${\cal Z}_{t,s; \bar{\mathcal{E}}}^{\sim i}$ and $\< - \>_{\sim i}$ be the partition function and the Gibbs expectation 
associated with $\sH_{t,s; \bar{\mathcal{E}}}^{\sim i}(\underline{\sigma}, \underline{\tilde{J}}, \underline{\bar{H}})$.
The identities
\begin{align}
\ln \biggl\{\frac{{\cal Z}_{t,s; \bar{\mathcal{E}}}}{{\cal Z}_{t,s; \bar{\mathcal{E}}}^{\sim i}} \biggr\}&= \ln \< e^{\bar{H}_i^{(t,s)} (\sigma_i-1)} \>_{\sim i}, \nonumber\\
e^{\bar{H}_i^{(t,s)} (\sigma_i-1)} & = \frac{1 + \sigma_i\tanh \bar{H}_i^{(t,s)} }{1+\tanh \bar{H}_i^{(t,s)}},
\label{eq:entropy-derivatives1}
\end{align}
imply
\begin{align}
h_{t,s; \epsilon,\delta}
	& = \frac{1}{n} \mathbb{E} \ln {\cal Z}_{t,s; \bar{\mathcal{E}}}^{\sim i} 
	+ \frac{1}{n} \mathbb{E}  \ln \biggl\{\frac{1+\< \sigma_i \>_{\sim i}\tanh \bar{H}_i^{(t,s)}}{1+\tanh \bar{H}_i^{(t,s)}}\biggr\}.  \label{eq:entropy-derivatives2}
\end{align}
As $\tanh \bar{H}_i^{(t,s)}$ and $\< \sigma_i \>_{\sim i}$ equal either $0$ or $1$, \eqref{eq:entropy-derivatives2} simplifies to
\begin{align}
h_{t,s; \epsilon,\delta}
	& = \frac{1}{n} \mathbb{E} \ln {\cal Z}_{t,s; \bar{\mathcal{E}}}^{\sim i} - \frac{1}{n} \mathbb{E} [ \bar{\epsilon}_i^{(t,s)} ] \ln 2 \left ( 1 - \mathbb{E} \< \sigma_i \>_{\sim i} \right ).
\label{eq:entropy-derivatives2a}
\end{align}
Therefore, we have
\begin{align}
\frac{d}{d\mathbb{E}[\bar{\epsilon}^{(t,s)}]} h_{t,s; \epsilon, \delta} 
	& = \sum_{i=1}^{n} \frac{d}{d\mathbb{E}[\bar{\epsilon}_i^{(t,s)}]} h_{t,s; \bar{\mathcal{E}}}\biggr\vert_{ \mathbb{E}[\bar{\epsilon}_1^{(t,s)}]=\dots=\mathbb{E}[\bar{\epsilon}_n^{(t,s)}]= \mathbb{E}[\bar{\epsilon}^{(t,s)}]}
	= -\frac{\ln 2}{n} \sum_{i=1}^{n} \left ( 1 - \mathbb{E} \< \sigma_i \>_{\sim i} \right ), \label{1_25}
\end{align}
which is the first equality in \eqref{eq:perturbed_f:derivative1a}. 

To obtain the second equality, simply notice that as $1 - \< \sigma_i \>_{t,s;\epsilon, \delta} = 0$ when $\bar{H}_i^{(t,s)}=+\infty$ 
(which happens with probability $\bar{\epsilon}_i^{(t,s)}$). Performing the expectation over $\bar{H}_i^{(t,s)}$ in the following expression we get 
\begin{align}
1-\mathbb{E}\<\sigma_i\>_{t,s;\epsilon,\delta} 
	& = \mathbb{E} [ 1 - \<\sigma_i\>_{t,s;\epsilon,\delta} ] \nonumber \\
	& = \mathbb{E} [ (1-\bar{\epsilon}_i^{(t,s)})(1-\mathbb{E}\<\sigma_i\>_{\sim i}) + \bar{\epsilon}_i^{(t,s)} (1 - \mathbb{E}\<\sigma_i\>_{\bar{H}_i^{(t,s)}=\infty}) ] \nonumber \\
	& = (1-\mathbb{E}[\bar{\epsilon}^{(t,s)}])(1-\mathbb{E}\<\sigma_i\>_{\sim i}). \label{eq:recover-original-gibbs}
\end{align}
Replacing in \eqref{1_25} yields the second equality in \eqref{eq:perturbed_f:derivative1a}.

\subsection{Proof of \eqref{eq:perturbed_f:derivative1b}}
\label{app:entropy-derivatives:1b}
Let $\sH_{t,s; \bar{\mathcal{E}}}^{\sim i,j}(\underline{\sigma}, \underline{\tilde{J}}, \underline{\bar{H}})$ be the Hamiltonian
$\sH_{t,s; \bar{\mathcal{E}}}(\underline{\sigma}, \underline{\tilde{J}}, \underline{\bar{H}})$ with $\bar{H}_i^{(t,s)}=\bar{H}_j^{(t,s)}=0$. Let ${\cal Z}_{t,s; \bar{\mathcal{E}}}^{\sim i,j}$ and $\< - \>_{\sim i,j}$ be the partition function and the Gibbs expectation associated with $\sH_{t,s; \bar{\mathcal{E}}}^{\sim i,j}(\underline{\sigma}, \underline{\tilde{J}}, \underline{\bar{H}})$.
Using again \eqref{eq:entropy-derivatives1} on the identity
\begin{align*}
\ln \biggl\{\frac{{\cal Z}_{t,s;\epsilon,\delta}}{{\cal Z}_{t,s; \bar{\mathcal{E}}}^{\sim i,j}} \biggr\}
	& = \ln \<e^{\bar{H}_i^{(t,s)}(\sigma_i-1) + \bar{H}_j^{(t,s)}(\sigma_j-1)} \>_{\sim i, j},
\end{align*}
we have
\begin{align}
h_{t,s; \bar{\mathcal{E}}} 
	& = \frac{1}{n} \mathbb{E} \ln {\cal Z}_{t,s; \bar{\mathcal{E}}}^{\sim i,j} 
\nonumber\\
&\qquad+ \frac{1}{n} \mathbb{E}  \ln \biggl\{\frac{1 + \< \sigma_i \>_{\sim i,j}\tanh \bar{H}_i^{(t,s)}  + \< \sigma_j \>_{\sim i,j}\tanh \bar{H}_j^{(t,s)}  
+ \< \sigma_i \sigma_j \>_{\sim i,j}\tanh \bar{H}_i^{(t,s)} \tanh \bar{H}_j^{(t,s)} }{1 + \tanh \bar{H}_i^{(t,s)} + \tanh \bar{H}_j^{(t,s)} + \tanh \bar{H}_i^{(t,s)} \tanh \bar{H}_j^{(t,s)}}\biggr\} \nonumber \\
	& = \frac{1}{n} \mathbb{E} \ln {\cal Z}_{t,s; \underline{\epsilon}}^{\sim i,j} 
+ \frac{\mathbb{E}[\bar{\epsilon}_i^{(t,s)} ] \mathbb{E}[\bar{\epsilon}_j^{(t,s)} ]}{n} \mathbb{E} \ln \biggl\{\frac{1 + \< \sigma_i \>_{\sim i,j} + \< \sigma_j \>_{\sim i,j} + \< \sigma_i \sigma_j \>_{\sim i,j}}{4}\biggr\}
\nonumber \\
	& \qquad + \frac{\mathbb{E}[\bar{\epsilon}_i^{(t,s)} ] (1-\mathbb{E}[\bar{\epsilon}_j^{(t,s)} ])}{n} \mathbb{E} \ln \biggl\{\frac{1 + \< \sigma_i \>_{\sim i,j}}{2}\biggr\}
	+ \frac{(1-\mathbb{E}[\bar{\epsilon}_i^{(t,s)} ]) \mathbb{E}[\bar{\epsilon}_j^{(t,s)} ]}{n} \mathbb{E} \ln \biggl\{\frac{1 + \< \sigma_j \>_{\sim i,j}}{2}\biggr\},
\label{eq:entropy-derivatives3}
\end{align}
where \eqref{eq:entropy-derivatives3} follows from taking the expectation over $\bar{H}_i^{(t,s)}$ and $\bar{H}_j^{(t,s)}$. From \eqref{eq:entropy-derivatives2a} 
one can deduce that $\frac{d^2}{d \mathbb{E}[\bar{\epsilon}_i^{(t,s)}]^2} h_{t,s; \epsilon,\delta} = 0$. Therefore
\begin{align*}
  \frac{d^2}{d \mathbb{E}[\bar{\epsilon}^{(t,s)}]^2} h_{t,s; \epsilon, \delta}
	& =  \sum_{i,j=1}^{n} \frac{d^2}{d \mathbb{E}[\bar{\epsilon}_j^{(t,s)}] d \mathbb{E}[\bar{\epsilon}_i^{(t,s)}]} h_{t,s; \bar{\mathcal{E}}}\biggr\vert_{\epsilon_1=\dots=\epsilon_n=\epsilon} 
	\nonumber \\ &
	=  \sum_{i \neq j} \frac{d^2}{d\mathbb{E}[\bar{\epsilon}_j^{(t,s)}] d\mathbb{E}[\bar{\epsilon}_i^{(t,s)}]} 
	h_{t,s; \bar{\mathcal{E}}}\biggr\vert_{\mathbb{E}[\bar{\epsilon}_i^{(t,s)}]=\dots=\mathbb{E}[\bar{\epsilon}_n^{(t,s)}] = \mathbb{E}[\bar{\epsilon}^{(t,s)}]}.
\end{align*}
The derivatives $\frac{d^2}{d \mathbb{E}[\bar{\epsilon}_j^{(t,s)}] d \mathbb{E}[\bar{\epsilon}_i^{(t,s)}]} h_{t,s; \epsilon,\delta}$ 
can be readily obtained from \eqref{eq:entropy-derivatives3}. This provides
\begin{align}
\frac{d^2}{d \mathbb{E}[\bar{\epsilon}^{(t,s)}]^2} h_{t,s; \epsilon, \delta}
	& = \frac{1}{n} \sum_{i \neq j} 
	\mathbb{E}\ln \biggl\{\frac{1 + \< \sigma_i \>_{\sim i,j} + \< \sigma_j \>_{\sim i,j} + \< \sigma_i \sigma_j \>_{\sim i,j}}{1 + \< \sigma_i \>_{\sim i,j} 
	+ \< \sigma_j \>_{\sim i,j} + \< \sigma_i \>_{\sim i,j} \< \sigma_j \>_{\sim i,j}}\biggr\} . \label{eq:entropy-derivatives6a}
\end{align} 
We now simplify each term in the sum \eqref{eq:entropy-derivatives6a}.
Given that $\< \sigma_S \>_{\sim i,j}$ equals either $0$ or $1$ for any subsets $S \subset \{ 1 \dots n\}$, one can verify that the numerator 
and denominator of \eqref{eq:entropy-derivatives6a} can be written as
\begin{align}
	& \hspace{-0.5cm} \ln \bigl( 1 + \< \sigma_i \>_{\sim i,j} + \< \sigma_j \>_{\sim i,j} + \< \sigma_i \sigma_j \>_{\sim i,j} \bigr)
	=  \bigl(\< \sigma_i \>_{\sim i,j} + \< \sigma_j \>_{\sim i,j} + \< \sigma_i \sigma_j \>_{\sim i,j}\bigr) \ln 2
	\nonumber \\
	& +\bigl(\< \sigma_i \>_{\sim i,j}\< \sigma_j \>_{\sim i,j} + \< \sigma_i \>_{\sim i,j} \< \sigma_i \sigma_j \>_{\sim i,j} + \< \sigma_j \>_{\sim i,j} \< \sigma_i \sigma_j \>_{\sim i,j}\bigr) 
	\bigl(\ln 3 - 2\ln 2\bigr)
	\nonumber \\
	& 
	+  \< \sigma_i \>_{\sim i,j} \< \sigma_j \>_{\sim i,j} \< \sigma_i \sigma_j \>_{\sim i,j} \bigl(5\ln2 - 3\ln 3\bigr),
	\label{eq:entropy-derivatives7}
\end{align}
and
\begin{align*}
\ln \bigl( 1 + \< \sigma_i \>_{\sim i,j} + \< \sigma_j \>_{\sim i,j} + \< \sigma_i \>_{\sim i,j} \< \sigma_j \>_{\sim i,j} \bigr)
	& =  \bigl(\< \sigma_i \>_{\sim i,j} + \< \sigma_j \>_{\sim i,j}\bigr) \ln 2.
\end{align*}
Special cases of the Nishimori identities \eqref{eq:Nishimori}, 
\begin{align*}
\mathbb{E}[\< \sigma_i \>_{\sim i,j} \< \sigma_j \>_{\sim i,j}] 
	& = \mathbb{E}[\< \sigma_i \>_{\sim i,j} \< \sigma_j \>_{\sim i,j} \< \sigma_i \sigma_j \>_{\sim i,j}], \\
\mathbb{E}[\< \sigma_i \>_{\sim i,j} \< \sigma_i \sigma_j \>_{\sim i,j}]
	& = \mathbb{E}[\< \sigma_i \>_{\sim i,j} \< \sigma_j \>_{\sim i,j} \< \sigma_i \sigma_j \>_{\sim i,j}], \\
\mathbb{E}[\< \sigma_j \>_{\sim i,j} \< \sigma_i \sigma_j \>_{\sim i,j}]
	& = \mathbb{E}[\< \sigma_i \>_{\sim i,j} \< \sigma_j \>_{\sim i,j} \< \sigma_i \sigma_j \>_{\sim i,j}],
\end{align*}
can now be used to simplify \eqref{eq:entropy-derivatives7} so that each term in the sum \eqref{eq:entropy-derivatives6a} becomes
\begin{align}
\ln (2) \, \mathbb{E} [\< \sigma_i \sigma_j \>_{\sim i,j} - \< \sigma_i \>_{\sim i,j} \< \sigma_j \>_{\sim i,j} ].
\label{eq:entropy-derivatives8}
\end{align}
Moreover, as $\< \sigma_i \sigma_j \> - \< \sigma_i \> \< \sigma_j \> = 0$ when $\bar{H}_i^{(t,s)}$ and/or $\bar{H}_j^{(t,s)}$ equal $+\infty$, we obtain
\begin{align}
\mathbb{E} [\< \sigma_i \sigma_j \> - \< \sigma_i \> \< \sigma_j \> ]
	& = (1- \mathbb{E}[\bar{\epsilon}_i^{(t,s)}]) (1-\mathbb{E}[\bar{\epsilon}_j^{(t,s)}]) \mathbb{E} [\< \sigma_i \sigma_j \>_{\sim i,j} - \< \sigma_i \>_{\sim i,j} \< \sigma_j \>_{\sim i,j} ].
	\label{eq:entropy-derivatives9}
\end{align}
Finally, from \eqref{eq:entropy-derivatives6a}, \eqref{eq:entropy-derivatives8} and \eqref{eq:entropy-derivatives9} we obtain \eqref{eq:perturbed_f:derivative1b}.

\subsection{Derivation of \eqref{eq:perturbed_f:derivative4a} and \eqref{eq:perturbed_f:derivative4b}}
The derivation of \eqref{eq:perturbed_f:derivative4a} is the same as Sec.~\ref{app:entropy-derivatives:1a} except that the steps should be done on $\tilde{H}_i$ instead of $\bar{H}_i^{(t,s)}$. The derivation of \eqref{eq:perturbed_f:derivative4a} is the same as Sec.~\ref{app:entropy-derivatives:1b} except that the steps should be done on $\tilde{H}_i, \tilde{H}_j$ instead of $\bar{H}_i^{(t,s)}, \bar{H}_j^{(t,s)}$.

\subsection{Proof of \eqref{eq:perturbed_f:derivative3a} and \eqref{eq:perturbed_f:derivative3b}}
For $\underline{\sfx}(\epsilon)$ independent of $\delta$, from \eqref{eq:E-epsilon-bar} we have
\begin{align*}
\frac{d \mathbb{E} [\bar{\epsilon}^{(t,s)}]}{d\delta} = \frac{1}{n^{\theta}}(1-\epsilon) e^{-\frac{K}{RT}(s \tilde{x}^{t} 
+ \sum_{t'=1}^{t-1} \tilde{x}^{t'})} = \frac{1-\mathbb{E} [\bar{\epsilon}^{(t,s)}]}{n^{\theta}-\delta}
\quad\quad{\rm and}\quad\quad\frac{d^2 \mathbb{E} [\bar{\epsilon}^{(t,s)}]}{d\delta^2} = 0. 
\end{align*}
Together with \eqref{eq:perturbed_f:derivative1a} and \eqref{eq:perturbed_f:derivative1b} we can immediately derive
\begin{align*}
\frac{d}{d \delta} h_{t,s; \epsilon, \delta} 
	& = \frac{d \mathbb{E} [\bar{\epsilon}^{(t,s)}]}{d\delta} \frac{d h_{t,s; \epsilon, \delta} }{d \mathbb{E} [\bar{\epsilon}^{(t,s)}]} 
	\- = - \frac{\ln 2}{n^{1+\theta}(1-\delta/n^{\theta})} \sum_{i=1}^n (1 - \mathbb{E}\<\sigma_i\>_{t,s;\epsilon,\delta}) \\
\frac{d^2}{d \delta^2} h_{t,s; \epsilon} 
	& = \frac{d}{d \delta} \bigg( \frac{d \mathbb{E} [\bar{\epsilon}^{(t,s)}]}{d\delta} \frac{d h_{t,s; \epsilon, \delta} }{d \mathbb{E} [\bar{\epsilon}^{(t,s)}]} \bigg)
	\- = \bigg( \frac{d \mathbb{E} [\bar{\epsilon}^{(t,s)}]}{d\delta} \bigg)^2 \frac{d^2 h_{t,s; \epsilon,\delta}}{d \mathbb{E} [\bar{\epsilon}^{(t,s)}]^2} + \frac{d^2 \mathbb{E} [\bar{\epsilon}^{(t,s)}]}{d\delta^2} \frac{d h_{t,s; \epsilon, \delta}}{d \mathbb{E} [\bar{\epsilon}^{(t,s)}]} \\
	& = \frac{\ln 2}{n^{1+2\theta}(1-\delta/n^{\theta})^2} \sum_{i \neq j} \mathbb{E} [\< \sigma_i \sigma_j \>_{t,s;\epsilon,\delta} 
- \< \sigma_i \>_{t,s;\epsilon,\delta} \< \sigma_j \>_{t, s;\epsilon,\delta}].
\end{align*}
The first equality of \eqref{eq:perturbed_f:derivative3a} follows from applying the same argument in \eqref{eq:recover-original-gibbs} to $\tilde{H}_i$.

\subsection{Proof of \eqref{eq:Q1:delta-der}}
We rearrange \eqref{eq:perturbed_f:derivative3a} to obtain
\begin{align}
\frac{1}{n} \sum_{i=1}^{n} \mathbb{E}\<\sigma_i\>_{t,s;\epsilon,\delta} = \frac{n^{\theta}(1-\delta/n^{\theta})}{\ln 2} \frac{d}{d\delta} h_{t,s;\epsilon,\delta}.
\end{align}
Then using \eqref{eq:perturbed_f:derivative3a} and \eqref{eq:perturbed_f:derivative3b} we have
\begin{align*}
	& \frac{d}{d\delta} \big ( \frac{1}{n} \sum_{i=1}^{n} \mathbb{E}\<\sigma_i\>_{t,s;\epsilon,\delta} \big )
	\- = \frac{n^{\theta}(1-\delta/n^{\theta})}{\ln 2} \frac{d^2}{d\delta^2} h_{t,s;\epsilon,\delta} - \frac{1}{\ln 2} \frac{d}{d\delta} h_{t,s;\epsilon,\delta} \\
	& = \frac{1}{n^{1+\theta}(1-\delta/n^{\theta})} \big ( \sum_{i \neq j} \mathbb{E} [\< \sigma_i \sigma_j \>_{t,s;\epsilon,\delta} 
- \< \sigma_i \>_{t,s;\epsilon,\delta} \< \sigma_j \>_{t, s;\epsilon,\delta}] - \sum_{i=1}^{n} (1 - \mathbb{E}\<\sigma_i\>_{t,s;\epsilon,\delta}) \big ) \\
	& = \frac{1}{n^{1+\theta}(1-\delta/n^{\theta})} \sum_{i, j=1}^{n} \mathbb{E} [\< \sigma_i \sigma_j \>_{t,s;\epsilon,\delta} 
- \< \sigma_i \>_{t,s;\epsilon,\delta} \< \sigma_j \>_{t, s;\epsilon,\delta}]
\end{align*}
where the last equality uses one of the Nishimori identities \eqref{eq:Nishimori}, namely $\mathbb{E} \< \sigma_i \> = \mathbb{E} [ \< \sigma_i \>^2 ]$.

%

\section{A bound on differences of derivatives due to convexity}
\label{app:bound-by-convexity}

Let $G(x)$ and $g(x)$ be two functions convex in $x$. Convexity implies that for any $\xi>0$ we have
\begin{align*}
G'(x) - g'(x)
 	& \leq \frac{G(x+\xi) - G(x)}{\xi} - g'(x) \nonumber \\
	& \leq \frac{G(x+\xi) - G(x)}{\xi} - g'(x) + g'(x+\xi) - \frac{g(x+\xi) - g(x)}{\xi} \nonumber \\
	& = \frac{G(x+\xi) - g(x+\xi)}{\xi} - \frac{G(x) - g(x)}{\xi} + g'(x+\xi) - C_\xi^+(x) \,,  \\
G'(x) - g'(x)
	& \geq \frac{G(x) - G(x-\xi)}{\xi} - g'(x) + g'(x-\xi) - \frac{g(x) - g(x-\xi)}{\xi} \nonumber \\
	& = \frac{G(x) - g(x)}{\xi} - \frac{G(x-\xi) - g(x-\xi)}{\xi} - C_\xi^-(x) \,.
\end{align*}
where $C_\xi^+(x) \equiv g'(x+\xi) - g'(x) \geq 0$ and $C_\xi^-(x) \equiv g'(x) - g'(x-\xi) \geq 0$.

The combined result of the above two inequalities is
\begin{align}
|G'(x) - g'(x)| \leq \xi^{-1} \sum_{u \in \{x-\xi, x, x+\xi\}} |G(u)-g(u)| + C_\xi^+(x) + C_\xi^-(x).
\end{align}

%

\section{Concentration of free entropy}
\label{appendix:free-energy}

%

Let $\mathcal{J}$ collect both the realization of $\underline{\tilde{J}}$ and the graph realization of all the factor nodes carrying elements in $\underline{\tilde{J}}$. Let $\mathcal{U}$ collect both the realization of $\underline{U}$ and the graph realization of all the half edges carrying elements in $\underline{U}$. The proof of Lemma~\ref{thm:concentration:free_energy} can be decomposed into the following three lemmas. 
We stress that the three Lemmas \ref{thm:free-energy:concentration0}, \ref{thm:free-energy:concentration1} and \ref{thm:free-energy:concentration2} are valid under the condition that $\underline{\tilde{J}}, \underline{U}$ are non-negative such that we can make use of the consequence $\< \sigma_S \>_{t,s; \epsilon, \delta} \geq 0$ where $S$ is any subset of $\{1, \dots, n\}$. Finally recall definitions \eqref{eq:htse} and \eqref{eq:Htse}. 

\begin{lemma}[Concentration w.r.t. $\underline{H}$]
For any $s, \epsilon, \delta$ all in $[0,1]$, $t=1, \dots, T$, $\nu>0$ and any realization $\underline{H}$ we have 
\begin{align}
\mathbb{P}(| {H}_{t,s; \epsilon,\delta} - \mathbb{E}_{\underline{H}}{H}_{t,s; \epsilon,\delta} | \geq \nu/3)
	& \leq 2\exp \Big( - \frac{2 n \nu^2}{(3\ln 2)^2} \Big).
	\label{eq:free-energy:concentration0}
\end{align}
\label{thm:free-energy:concentration0}
\end{lemma}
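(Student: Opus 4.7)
The plan is to apply McDiarmid's bounded-differences inequality, viewing $H_{t,s;\epsilon,\delta}$ as a function of the $n$ independent random variables $H_1,\ldots,H_n$ (with all other quenched data held fixed; the $\underline{\tilde H}$-average is already absorbed into the definition of $H_{t,s;\epsilon,\delta}$). The whole proof reduces to establishing the one-variable bounded-differences estimate
\begin{align*}
\big|H_{t,s;\epsilon,\delta}(\ldots,H_i,\ldots) - H_{t,s;\epsilon,\delta}(\ldots,H_i',\ldots)\big| \leq \frac{\ln 2}{n}
\end{align*}
for every $i$ and every pair $H_i,H_i' \in \{0,+\infty\}$.

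Since each $H_i$ takes only two values, only the swap $0 \leftrightarrow +\infty$ needs attention. Fix $\underline{\tilde H}$ and decompose $\mathcal{Z}_{t,s;\epsilon,\delta}$ according to $\sigma_i = \pm 1$; writing $\mathcal{Z}^{(i)}_\pm$ for the partial sum over configurations with $\sigma_i = \pm 1$ in which the factor $e^{(H_i+\tilde H_i)(\sigma_i-1)}$ has been stripped out, one has $\mathcal{Z}|_{H_i=\infty} = \mathcal{Z}^{(i)}_+$ (the factor kills the $\sigma_i=-1$ contributions) while $\mathcal{Z}|_{H_i=0} = \mathcal{Z}^{(i)}_+ + e^{-2\tilde H_i}\mathcal{Z}^{(i)}_-$. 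Hence
\begin{align*}
0 \leq \ln \mathcal{Z}|_{H_i=0} - \ln \mathcal{Z}|_{H_i=\infty} = \ln\Big(1 + e^{-2\tilde H_i}\frac{\mathcal{Z}^{(i)}_-}{\mathcal{Z}^{(i)}_+}\Big).
\end{align*}
The crucial bound $e^{-2\tilde H_i}\mathcal{Z}^{(i)}_-/\mathcal{Z}^{(i)}_+ \leq 1$ follows from the first GKS inequality \eqref{eq:GKS1} applied at $H_i=0$: with this convention $\langle\sigma_i\rangle_{t,s;\epsilon,\delta} = (\mathcal{Z}^{(i)}_+ - e^{-2\tilde H_i}\mathcal{Z}^{(i)}_-)/(\mathcal{Z}^{(i)}_+ + e^{-2\tilde H_i}\mathcal{Z}^{(i)}_-) \geq 0$, which rearranges to the claim. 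The log-ratio is therefore pinned between $0$ and $\ln 2$, uniformly in $\underline{\tilde H}$ and in the remaining quenched data; taking the $\underline{\tilde H}$-expectation preserves this bound (it is a contraction), and dividing by $n$ furnishes the Lipschitz constant $c_i = (\ln 2)/n$.

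McDiarmid's inequality with these $c_i$ then gives
\begin{align*}
\mathbb{P}\big(|H_{t,s;\epsilon,\delta} - \mathbb{E}_{\underline{H}}H_{t,s;\epsilon,\delta}| \geq \nu/3\big) \leq 2\exp\!\Big(-\frac{2(\nu/3)^2}{n\cdot((\ln 2)/n)^2}\Big) = 2\exp\!\Big(-\frac{2n\nu^2}{(3\ln 2)^2}\Big),
\end{align*}
which is exactly the stated bound. The only subtlety I foresee is the careful handling of the point mass at $+\infty$: formally one must interpret $e^{-H_i(\sigma_i-1)}|_{H_i=\infty}$ as the indicator $\mathbb{1}(\sigma_i=1)$ so that $\mathcal{Z}|_{H_i=\infty}$ is well-defined and GKS applies to the effective non-negatively coupled Hamiltonian; once this is done, the remaining work is a routine application of the bounded-differences inequality.
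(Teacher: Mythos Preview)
Your proof is correct and follows essentially the same approach as the paper: both establish the bounded-differences estimate $c_i = (\ln 2)/n$ via the first GKS inequality and then apply McDiarmid's inequality. The paper's write-up is marginally more compact, using the identity $\ln\langle e^{H_i(\sigma_i-1)}\rangle = \ln(1+\langle\sigma_i\rangle\tanh H_i) - \ln(1+\tanh H_i)$ to bound the difference in $[-\ln 2,0]$, but this is equivalent to your partition-function split.
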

\begin{lemma}[Concentration w.r.t. $\mathcal{J}$]
For any $s, \epsilon, \delta$ all in $[0,1]$, $t=1, \dots, T$, $\nu>0$ and any realization $\mathcal{J}$ there exists a constant $C_1>0$ such that  
\begin{align}
\mathbb{P}( | \mathbb{E}_{\underline{H}}{H}_{t,s;\epsilon,\delta} - \mathbb{E}_{\underline{H}, \mathcal{J}}{H}_{t,s;\epsilon,\delta} | \geq \nu/3 ) 
	& \leq 3 \exp(- n \nu^2 C_1).
	\label{eq:free-energy:concentration1}
\end{align}
\label{thm:free-energy:concentration1}
\end{lemma}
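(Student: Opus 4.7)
The target random variable is $\Phi(\mathcal{J}) := n\,\mathbb{E}_{\underline{H}} H_{t,s;\epsilon,\delta} = \mathbb{E}_{\underline{H},\underline{\tilde{H}}}\ln \mathcal{Z}_{t,s;\epsilon,\delta}$, viewed as a function of the quenched data $\mathcal{J}$ with $\mathcal{U}$ frozen. My plan is to first condition on $m_s^{(t)}=m$, in which case $\mathcal{J}$ reduces to $m$ i.i.d.\ triples $(\tilde{J}_A, S_A)_{A=1}^m$ (a weight and an incident $K$-subset), apply McDiarmid's inequality conditionally, and then control the Poisson fluctuations of $m_s^{(t)}$ separately via a Chernoff bound, combining through the union bound.

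The cornerstone of the argument is a pointwise bounded-difference estimate: replacing one triple by an independent copy changes $\ln \mathcal{Z}_{t,s;\epsilon,\delta}$ by at most $\ln 2$. To see this, let $\mathcal{Z}_0$ denote the partition function with factor $A$ deleted. If $\tilde{J}_A=0$ the factor is trivial and $\mathcal{Z}_{\mathrm{old}}=\mathcal{Z}_0$, while if $\tilde{J}_A=+\infty$ then $\mathcal{Z}_{\mathrm{old}}=\mathcal{Z}_0\cdot(1+\langle\sigma_{S_A}\rangle_0)/2$. The first GKS inequality \eqref{eq:GKS1}, applicable because the BEC yields only non-negative couplings, gives $\langle\sigma_{S_A}\rangle_0\in[0,1]$, so $\mathcal{Z}_{\mathrm{old}}\in[\mathcal{Z}_0/2,\mathcal{Z}_0]$. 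The same bracket holds for $\mathcal{Z}_{\mathrm{new}}$, hence $\mathcal{Z}_{\mathrm{new}}/\mathcal{Z}_{\mathrm{old}}\in[1/2,2]$ and $|\ln \mathcal{Z}_{\mathrm{new}}-\ln \mathcal{Z}_{\mathrm{old}}|\leq \ln 2$. Since this bound is pointwise in $\underline{H},\underline{\tilde{H}}$, it transfers to $\Phi$ after averaging.

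McDiarmid's inequality conditional on $m_s^{(t)}=m$ then yields
\begin{align*}
\mathbb{P}\bigl(|\Phi-\mathbb{E}[\Phi\mid m]|\geq n\nu/6 \,\big|\, m\bigr)
\leq 2\exp\!\left(-\frac{n^2\nu^2}{18\,m\,(\ln 2)^2}\right).
\end{align*}
Setting $\mu:=\alpha n(T-t+1-s)/T\leq \alpha n$, a Poisson Chernoff bound gives $\mathbb{P}(m_s^{(t)}>2\mu)\leq e^{-c_0 n}$, so restricting to $\{m_s^{(t)}\leq 2\mu\}$ produces a sub-Gaussian tail $\exp(-c_1 n\nu^2)$. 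For the residual piece $\mathbb{E}[\Phi\mid m_s^{(t)}]-\mathbb{E}\Phi$, define $G(m):=\mathbb{E}[\Phi\mid m_s^{(t)}=m]$; the same GKS-based argument shows that adding one extra factor changes $\ln \mathcal{Z}$ by at most $\ln 2$, so $G$ is $(\ln 2)$-Lipschitz on $\mathbb{N}$. Combining sub-Gaussian Poisson concentration $\mathbb{P}(|m_s^{(t)}-\mu|\geq t)\leq 2\exp(-ct^2/\mu)$ for $t\lesssim \mu$ with the Lipschitz property gives $\mathbb{P}(|G(m_s^{(t)})-\mathbb{E} G|\geq n\nu/6)\leq 2\exp(-c_2 n\nu^2)$; the $\mathcal{O}(\sqrt{n})$ gap between $G(\mu)$ and $\mathbb{E} G$ is absorbed whenever $\nu\geq C/\sqrt{n}$, and the target inequality is vacuous for smaller $\nu$. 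A union bound over the two pieces produces the claimed bound $3\exp(-n\nu^2 C_1)$ (after mildly re-tuning the constants).

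The only delicate step is the bounded-difference claim; without the GKS lower bound $\langle\mathbb{I}(\sigma_{S_A}=1)\rangle\geq 1/2$, toggling a BEC factor between strict and trivial would have an effect on $\ln \mathcal{Z}$ that is a priori of order $n$, and McDiarmid would be useless. Everything downstream—McDiarmid, the Poisson Chernoff bound for $m_s^{(t)}$, and the union bound—is standard.
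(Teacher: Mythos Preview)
Your proof is correct and follows essentially the same strategy as the paper: a GKS-based bounded-difference estimate for swapping one factor, McDiarmid's inequality, and a Poisson Chernoff bound on the number of factors. The only difference is bookkeeping: rather than conditioning on $m_s^{(t)}=m$ and then separately controlling the Lipschitz fluctuation $G(m)-\mathbb{E} G$, the paper pads $\mathcal{J}$ to a fixed length $m_{\max}=(1+\gamma)\alpha n$ with dummy entries $(\emptyset,0)$ and applies McDiarmid once on all $m_{\max}$ coordinates (conditioning only on $\{m\leq m_{\max}\}$), which collapses your two-step decomposition into one at the cost of a slightly looser constant ($2\ln 2$ per coordinate, since the paper routes through the intermediate ``factor deleted'' state and back, versus your direct $\ln 2$).
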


\begin{lemma}[Concentration w.r.t. $\mathcal{U}$]
For any $s, \epsilon, \delta$ all in $[0,1]$, $t=1, \dots, T$, $\nu>0$ and any realization $\mathcal{U}$ there exists a constant $C_2>0$ such that  
\begin{align}
\mathbb{P}( | \mathbb{E}_{\underline{H}, \mathcal{J}} {H}_{t,s;\epsilon,\delta}-{h}_{t,s;\epsilon,\delta} | \geq \nu/3 ) 
	& \leq 3 \exp(- n \nu^2 C_2).
	\label{eq:free-energy:concentration2}
\end{align}
\label{thm:free-energy:concentration2}
\end{lemma}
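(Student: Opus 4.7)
The plan is to show concentration of $f(\mathcal{U}) := \mathbb{E}_{\underline{H}, \mathcal{J}} H_{t,s;\epsilon,\delta}$ around its $\mathcal{U}$-expectation $\mathbb{E}\, f(\mathcal{U}) = h_{t,s;\epsilon,\delta}$, following the same spirit as the preceding Lemmas \ref{thm:free-energy:concentration0} and \ref{thm:free-energy:concentration1}. The randomness encoded in $\mathcal{U}$ consists of (i) the mutually independent Poisson counts $\{e_i^{(t')}\}_{i,\,t'<t}$ and $\{e_{i,s}^{(t)}\}_{i}$ of half-edges incident to each variable node, with respective means $\alpha K/T$ and $\alpha K s/T$, and (ii) the associated i.i.d. half-edge weights $U_{B\to i}^{(t')} \sim \tilde{\sfx}^{(t')}$ and $U_{C\to i}^{(t)} \sim \tilde{\sfx}^{(t)}$. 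These are independent of all quenched variables already integrated out in $f(\mathcal{U})$.

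The first step is a bounded-differences property: adding, removing, or changing the weight of a single half-edge alters $\ln {\cal Z}_{t,s;\epsilon,\delta}$ by at most $\ln 2$. Indeed, since $\sfc \in \mathcal{B}$ and each $\sfx^{(t')} \in \mathcal{B}$, every weight $U$ lies in $\{0, +\infty\}$: a zero weight contributes nothing to the Hamiltonian, while an infinite weight enforces $\sigma_i = 1$ on the incident variable node. Toggling between these two situations multiplies ${\cal Z}_{t,s;\epsilon,\delta}$ by $\langle \mathds{1}(\sigma_i = 1) \rangle_{t,s;\epsilon,\delta} = (1+\langle\sigma_i\rangle_{t,s;\epsilon,\delta})/2$, and the first GKS inequality \eqref{eq:GKS1} guarantees $\langle\sigma_i\rangle_{t,s;\epsilon,\delta} \geq 0$, so this factor lies in $[1/2, 1]$. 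Dividing by $n$ transfers the $\ln 2$ bound to $f(\mathcal{U})$, yielding a bounded-differences constant of $\ln 2 / n$ per half-edge.

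To combine this Lipschitz bound with the unbounded Poisson counts, I would use a truncation argument. Let $N = \sum_{i,\,t'<t} e_i^{(t')} + \sum_i e_{i,s}^{(t)}$, which is Poisson with mean $\mu_n = O(n)$; a standard Chernoff bound gives $\mathbb{P}(N > 2\mu_n) \leq e^{-c_0 n}$ for some $c_0 > 0$. Conditioned on the event $\{N \leq 2\mu_n\}$, $f(\mathcal{U})$ is a function of at most $O(n)$ independent bounded random variables (the Bernoulli-type weights $U \in \{0, +\infty\}$ together with the uniform choice of incident variable node for each half-edge), to which McDiarmid's inequality applies with constants $\ln 2/n$, producing
\begin{align*}
\mathbb{P}\bigl(|f(\mathcal{U}) - h_{t,s;\epsilon,\delta}| \geq \nu/3 \,\bigm|\, N \leq 2\mu_n\bigr) \leq 2\exp(-c_1 n \nu^2).
\end{align*}
Taking a union bound against $\{N > 2\mu_n\}$ and absorbing the $O(e^{-c_0 n})$ bias incurred by conditioning then delivers \eqref{eq:free-energy:concentration2} with $C_2 = \min(c_0, c_1)/2$.

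The main obstacle is the clean handling of the Poisson-count randomness, since McDiarmid requires a fixed finite number of independent inputs. An alternative avoiding explicit truncation would be to construct a Doob martingale revealing the half-edges incident to one variable node $i = 1, \dots, n$ at a time: since the expected number of half-edges per node is $O(1)$ and each contributes at most $\ln 2/n$ to $\ln {\cal Z}/n$, the martingale increments have uniformly bounded sub-exponential norm, and an Azuma-type inequality (applied exactly as in the proof of Lemma \ref{thm:free-energy:concentration1}) yields the desired $\exp(-n\nu^2 C_2)$ tail with an explicit constant $C_2 > 0$ depending only on $\alpha, K, T$.
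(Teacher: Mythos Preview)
Your approach is essentially the same as the paper's: a Chernoff bound to truncate the total Poisson number of half-edges, followed by McDiarmid's inequality on the conditioned system, and a union bound. The paper makes one step more explicit that you leave implicit: it reparametrises the per-node Poisson counts $\{e_i^{(t')}\}$ as a single global Poisson count $w$ together with i.i.d.\ profiles $u_k=(i_k,U_k)$ where each $i_k$ is uniform on $\{1,\dots,n\}$ (Poisson colouring), then pads with empty profiles $(\emptyset,0)$ up to $w_{\max}$. This reparametrisation is what makes the coordinates genuinely independent after conditioning on $w\le w_{\max}$; in your original parametrisation, conditioning on the total count $N$ would correlate the per-node counts. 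Your parenthetical (``the uniform choice of incident variable node for each half-edge'') shows you have the right picture, but the switch of parametrisation deserves an explicit sentence. Also note the paper's bounded-difference constant is $2\ln 2/n$ rather than $\ln 2/n$, since changing one profile $u_k\to u_k'$ amounts to removing one half-edge and adding another; this only affects the value of $C_2$.
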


Lemmas~\ref{thm:free-energy:concentration0} to \ref{thm:free-energy:concentration2} are consequences of McDiarmid's inequality, which states that if $X_1, \dots, X_N$ are independent variables and $g$ is a function satisfying the bounded difference property
\begin{align*}
|g(x_1, \dots, x_i, \dots, x_N) - g(x_1, \dots, x'_i, \dots, x_N)| \leq d_i \qquad \forall \ i = 1, \dots, N
\end{align*}
then for any $\nu>0$ we have
\begin{align*}
\mathbb{P}( |g(\underline{X})-\mathbb{E}_{\underline{X}}g(\underline{X})| \geq \nu ) \leq 2 \exp \Big ( - \frac{2\nu^2}{\sum_{i=1}^{N} d_i^2} \Big ).
\end{align*}
We provide the proof of those three lemmas at the end of this section. 

From the triangle inequality and the union bound we have
\begin{align}
\mathbb{P}(| {H}_{t,s; \epsilon,\delta} - {h}_{t,s; \epsilon,\delta} | \geq \nu)
	& \leq \mathbb{P}(| {H}_{t,s; \epsilon,\delta} - \mathbb{E}_{\underline{H}}{H}_{t,s; \epsilon,\delta} | \geq \nu/3) \nonumber\\
	&\qquad + \mathbb{P}( |\mathbb{E}_{\underline{H}}{H}_{t,s;\epsilon,\delta} - \mathbb{E}_{\underline{H}, \mathcal{J}}{H}_{t,s;\epsilon,\delta} | \geq \nu/3 ) \nonumber \\
	& \qquad + \mathbb{P}( |\mathbb{E}_{\underline{H}, \mathcal{J}} {H}_{t,s;\epsilon,\delta} - {h}_{t,s;\epsilon,\delta}] | \geq \nu/3 ).
	\label{eq:free-energy:ms1a}
\end{align}
From \eqref{eq:free-energy:concentration0}, \eqref{eq:free-energy:concentration1}, \eqref{eq:free-energy:concentration2}
\begin{align}
\mathbb{P}(| {H}_{t,s; \epsilon,\delta} - {h}_{t,s; \epsilon,\delta} | \geq \nu)
	& \leq 8 \exp(- n \nu^2 C_0).
	\label{eq:free-energy:ms1b}
\end{align}
where $C_0 \equiv \min \{ \frac{2}{(3\ln 2)^2}, C_1, C_2 \}$.
Let $D \equiv | {H}_{t,s; \epsilon,\delta} - {h}_{t,s; \epsilon,\delta} |$. We have
\begin{align}
\int_0^{\infty} d\nu \, \nu \mathbb{P}(D \geq \nu)
	& = \int_0^{\infty} d\nu \, \nu\, \mathbb{E}_{D} \mathbb{I}(D \geq \nu)  = \mathbb{E}_{D} \int_{0}^{\infty} d\nu \, \nu\, \mathbb{I}(D \geq \nu)  \nonumber\\
	 &= \mathbb{E}_{D}  \int_{0}^{D} d\nu \, \nu  = \frac{1}{2} \mathbb{E}_{D} [D^2]. \label{eq:reverse-markov}
\end{align}
Substituting \eqref{eq:free-energy:ms1b} into \eqref{eq:reverse-markov}, we have the required bound for Lemma~\ref{thm:concentration:free_energy} with $C = 8/C_0$:
\begin{align*}
\mathbb{E} \big [ ( {H}_{t,s;\epsilon,\delta} - {h}_{t,s;\epsilon,\delta})^2 \big ] = 2 \int_0^{\infty} d\nu \, \nu P(| {H}_{t,s; \epsilon,\delta} - {h}_{t,s; \epsilon,\delta} | \geq \nu)  \leq 16 \int_0^{\infty} d\nu \, \nu\, e^{- n \nu^2 C_0} = \frac{C}{n}.
\end{align*}
\subsection{Proof of Lemma~\ref{thm:free-energy:concentration0}}
Consider $g(H_1, \dots, H_n) \equiv {H}_{t,s; \epsilon,\delta}$ with $H_i \in \{0, \infty \}$ (note that ${H}_{t,s; \epsilon,\delta}$ given by \eqref{eq:Htse} 
is already averaged over $\underline{\widetilde H}$, but not over $\underline H$). As for all $i = 1, \dots, n$ the function $g$ satisfies 
\begin{align*}
| g(H_1, \dots, H_i, \dots, H_n) - &g(H_1, \dots, H'_i, \dots, H_n) |
	 = \Big | \frac{1}{n} \mathbb{E}_{\underline{\tilde{H}}} \ln \< e^{H_i (\sigma_i - 1)}\>_{t,s; \epsilon, \delta} \Big | \\
	& = \Big| \frac{1}{n} \mathbb{E}_{\underline{\tilde{H}}}  \ln (1 + \< \sigma_i \>_{t,s; \epsilon,\delta}\tanh H_i  )  - \frac{1}{n} \mathbb{E}_{\underline{\tilde{H}}}  \ln (1 + \tanh H_i)  \Big | \\
	& \leq \frac{\ln 2}{n}.
\end{align*}
McDiarmid's inequality immediately gives the lemma.
\subsection{Proof of Lemma~\ref{thm:free-energy:concentration1}}
Let $|\underline{\tilde{J}}|$ be the number of components of the vector $\underline{\tilde{J}}$. From the construction
of $\mathcal{G}_{t,s}$ in Sec. \ref{subsec:tsmodel}, we have $\mathbb{E}[|\underline{\tilde{J}}|] = \frac{\alpha n}{T}(T-t+1-s) \leq \alpha n$. 
Set $m_{\mathrm{max}} = (1+\gamma) \alpha n$ for $\gamma > 0$. 
The probability of the event $|\underline{\tilde{J}}| > m_{\mathrm{max}}$ can be bounded by a relaxed form of the Chernoff bound as follows.
\begin{lemma}[Chernoff bound, {\cite[Theorem 4.4]{Mitzenmacher:2005}}]
Let $X = \sum_{i=1}^{N} X_i$ where $X_i=1$ with probability $p_i$ and $X_i=0$ with probability $1-p_i$, and all $X_i$ 
are independent. Let $\mu = \mathbb{E}[X] = \sum_{i=1}^{N} p_i$. Then for all $\gamma > 0$ 
\begin{align*}
\mathbb{P}(X > (1+\gamma) \mu) \leq \exp\Big(- \frac\mu3\min \{\gamma, \gamma^2 \} \Big).
\end{align*}
\label{thm:Chernoff}
\end{lemma}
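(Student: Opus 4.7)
The plan is to apply the classical Chernoff method, which combines Markov's inequality applied to the moment generating function with the independence assumption on the $X_i$'s. Concretely, for any $t>0$ I would write
\begin{align*}
\mathbb{P}(X > (1+\gamma)\mu) = \mathbb{P}(e^{tX} > e^{t(1+\gamma)\mu}) \leq e^{-t(1+\gamma)\mu}\, \mathbb{E}[e^{tX}],
\end{align*}
by Markov's inequality, and then control the moment generating function on the right.

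Next I would exploit independence to factor $\mathbb{E}[e^{tX}] = \prod_{i=1}^N \mathbb{E}[e^{tX_i}]$. Since each $X_i$ is Bernoulli with parameter $p_i$, one has $\mathbb{E}[e^{tX_i}] = 1+p_i(e^t-1)$, and the elementary inequality $1+x \leq e^x$ upgrades this to $\mathbb{E}[e^{tX_i}] \leq \exp(p_i(e^t-1))$. Taking the product yields $\mathbb{E}[e^{tX}] \leq \exp(\mu(e^t-1))$, hence
\begin{align*}
\mathbb{P}(X > (1+\gamma)\mu) \leq \exp\bigl(\mu(e^t-1) - t(1+\gamma)\mu\bigr) \qquad \forall\, t>0.
\end{align*}
Optimising the right-hand side over $t>0$ gives $t = \ln(1+\gamma)$, and substituting produces the well-known bound
\begin{align*}
\mathbb{P}(X > (1+\gamma)\mu) \leq \left(\frac{e^{\gamma}}{(1+\gamma)^{1+\gamma}}\right)^{\mu} = \exp\bigl(-\mu\bigl((1+\gamma)\ln(1+\gamma) - \gamma\bigr)\bigr).
\end{align*}

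The remaining step, and the only place where real work is needed, is to replace the somewhat opaque exponent $(1+\gamma)\ln(1+\gamma) - \gamma$ by the cleaner expression $\tfrac{1}{3}\min\{\gamma,\gamma^2\}$ claimed in the statement. I would do this by treating the two regimes separately. For $0<\gamma\leq 1$, the bound reduces to $(1+\gamma)\ln(1+\gamma) - \gamma \geq \gamma^2/3$, which follows from the Taylor expansion $(1+\gamma)\ln(1+\gamma) - \gamma = \gamma^2/2 - \gamma^3/6 + \cdots$ together with the observation that the difference of the two sides, seen as a function of $\gamma$, vanishes at $\gamma=0$ and has non-negative derivative on $[0,1]$. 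For $\gamma>1$, the bound reduces to $(1+\gamma)\ln(1+\gamma) - \gamma \geq \gamma/3$, which is essentially an asymptotic statement since the left-hand side grows as $\gamma\ln\gamma$; a quick monotonicity argument, together with checking equality (or strict inequality) at $\gamma=1$, closes the case.

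The main obstacle, such as it is, lies entirely in the final elementary inequality: the Chernoff method itself is automatic once independence is available, but matching the clean form $\tfrac{1}{3}\min\{\gamma,\gamma^2\}$ requires a careful case split and a little calculus, rather than simply reading off the exponent produced by the optimisation.
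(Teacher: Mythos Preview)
Your proof is correct and follows the standard textbook argument. Note that the paper does not actually prove this lemma; it is stated with a direct citation to Mitzenmacher--Upfal's textbook and used as a black box, so there is no ``paper's own proof'' to compare against. Your argument is precisely the one found in that reference.
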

By the Chernoff bound we have
\begin{align}
\mathbb{P}( |\underline{\tilde{J}}| > m_{\mathrm{max}} ) \leq \exp \Big ( -\frac{\alpha n}{3} \min \{\gamma, \gamma^2 \} \Big ).
\label{eq:Chernoff1}
\end{align}
Conditioned on $|\underline{\tilde{J}}| \leq m_{\mathrm{max}}$, we can have the representation $\mathcal{J} = (c_1, \dots, c_{m_{\mathrm{max}}})$ where for $a = 1, \dots, m_{\mathrm{max}}$ 
the profile $c_a \equiv (A_a, \tilde{J}_a)$ encodes that a factor node with weight $\tilde{J}_a$ is connected to a $K$-tuple identified by $A_a$. For $m < a \leq m_{\mathrm{max}}$ we denote $c_a = (\emptyset,0)$. 

Now consider $g(c_1, \dots, c_{m_{\mathrm{max}}}) \equiv \mathbb{E}_{\underline{H}}  {H}_{t,s;\epsilon,\delta} $ and pick a $c_a$ for a given $a$. 
Let $c'_a \equiv \big ( A'_a, \tilde{J}'_{a} \big )$ be a new profile with either $A_a \neq A'_a$ or $\tilde{J}_a \neq \tilde{J}'_{a}$. Also let $c''_a \equiv (A_a, 0)$ and $c'''_a \equiv (A'_a, 0)$. Note that $g(c_1, \dots, c''_a, \dots, c_{m_{\mathrm{max}}}) = g(c_1, \dots, c'''_a, \dots, c_{m_{\mathrm{max}}})$. We then have
\begin{align*}
	& |g(c_1, \dots, c_a, \dots, c_{m_{\mathrm{max}}}) - g(c_1, \dots, c'_a, \dots, c_{m_{\mathrm{max}}}) | \\
	=\ & |g(c_1, \dots, c_a, \dots, c_{m_{\mathrm{max}}}) - g(c_1, \dots, c''_a, \dots, c_{m_{\mathrm{max}}}) \nonumber\\
	&\qquad+ g(c_1, \dots, c'''_a, \dots, c_{m_{\mathrm{max}}}) - g(c_1, \dots, c'_a, \dots, c_{m_{\mathrm{max}}}) | \\
	\leq\ & |g(c_1, \dots, c_a, \dots, c_{m_{\mathrm{max}}}) - g(c_1, \dots, c''_a, \dots, c_{m_{\mathrm{max}}}) | \nonumber\\
	&\qquad+ | g(c_1, \dots, c'''_a, \dots, c_{m_{\mathrm{max}}}) - g(c_1, \dots, c'_a, \dots, c_{m_{\mathrm{max}}}) | \\
	=\ & \Big | \frac{1}{n} \mathbb{E}_{\underline{\tilde{H}}, \underline{H}}  \ln \< e^{\tilde{J}_a (\sigma_{A_a} - 1)}\>_{t,s; \epsilon, \delta}  \Big | + \Big| \frac{1}{n} \mathbb{E}_{\underline{\tilde{H}}, \underline{H}}  \ln \< e^{\tilde{J}'_{a} (\sigma_{A'_a} - 1)}\>_{t,s; \epsilon, \delta}  \Big | \\
	=\ & \Big | \frac{1}{n} \mathbb{E}_{\underline{\tilde{H}}, \underline{H}}  \ln (1 + \< \sigma_{A_a} \>_{t,s; \epsilon, \delta}\tanh \tilde{J}_a  ) - \frac{1}{n} \mathbb{E}_{\underline{\tilde{H}}, \underline{H}}  \ln (1 + \tanh \tilde{J}_{a})  \Big | \\
	& \qquad + \Big| \frac{1}{n} \mathbb{E}_{\underline{\tilde{H}}, \underline{H}}  \ln (1 +  \< \sigma_{A'_a} \>_{t,s; \epsilon, \delta} \tanh \tilde{J}'_{a}) - \frac{1}{n} \mathbb{E}_{\underline{\tilde{H}}, \underline{H}}  \ln (1 + \tanh \tilde{J}'_{a})  \Big | \\
	\leq\ & \frac{2\ln 2}{n}
\end{align*}
This allows the use of McDiarmid's inequality to obtain
\begin{align}
\mathbb{P}( |\mathbb{E}_{\underline{H}}{H}_{t,s;\epsilon,\delta} - \mathbb{E}_{\underline{H}, \mathcal{J}}{H}_{t,s;\epsilon,\delta} | \geq \nu/3 \  | \ |\underline{\tilde{J}}| \leq m_{\mathrm{max}}) \leq 2 \exp \Big ( - \frac{n \nu^2 }{18\alpha (\ln 2)^2} \Big ).
\label{eq:McDiarmid2}
\end{align}
Finally, we take the union bound based on \eqref{eq:Chernoff1} and \eqref{eq:McDiarmid2}:
\begin{align*}
\mathbb{P}( |\mathbb{E}_{\underline{H}}{H}_{t,s;\epsilon,\delta} - \mathbb{E}_{\underline{H}, \mathcal{J}}{H}_{t,s;\epsilon,\delta} | \geq \nu/3 ) 
	& \leq 2 \exp \Big( - \frac{n \nu^2 }{18 \alpha (\ln 2)^2} \Big ) + \exp \Big ( -\frac{\alpha n}{3} \min \{\gamma, \gamma^2 \} \Big ).
\end{align*}
Choosing $\nu^2 = \min \{\gamma, \gamma^2 \}$ and $C_1 = \min \{ \frac{1}{18 \alpha (\ln 2)^2}, \frac{\alpha}{3} \}$, we obtain the lemma.
\subsection{Proof of Lemma~\ref{thm:free-energy:concentration2}}
This proof can adopt the same presentation as in the proof of Lemma~\ref{thm:free-energy:concentration1} by noting that in the construction of $\sG_{t,s}$ the Poisson process of adding half edges with weight $U^{(t')}_{a \rightarrow i}$ can be rephrased as follows:
\begin{enumerate}
\item ({\it Create all the messages without specifying their location}): We draw the random numbers $e_{i}^{(t')}$, $e_{i,s}^{(t)}$ and create the associated number of copies of $U^{(t')}$ for $t'=1,\dots,t$. We collect all $U^{(t')}$ to form a set $\{ U_k \}_{k=1}^w$, where $w$ follows a Poisson distribution with mean $\frac{n \alpha K}{T}(t-1+s) \leq n \alpha K$.
\item ({\it Specify the location of the messages}): Given the number $w$ and the set $\{ U_k \}$, we attach each $U_{k}$ to variable node $i$ chosen randomly and uniformly. 
\end{enumerate}  

Let $w_{\mathrm{max}} = (1+\gamma) n \alpha K$. The Chernoff bound (Lemma~\ref{thm:Chernoff}) provides that
\begin{align}
\mathbb{P}(w > w_{\mathrm{max}}) \leq \exp \Big ( - \frac{n \alpha K}{3} \min \{ \gamma, \gamma^2 \} \Big ).
\label{eq:Chernoff2}
\end{align}
Conditioned on $w \leq w_{\mathrm{max}}$, we have the representation $\mathcal{U} = (u_1, \dots, u_{w_\mathrm{max}})$ where for $k = 1, \dots, w_{\mathrm{max}}$ the profile $u_k = (i_k, U_k)$ represents that a half edge with weight $U_k$ is connected to variable node $i_k$. For $w < k \leq w_{\mathrm{\max}}$ we denote $u_k = (\emptyset,0)$.

Now consider $g(u_1, \dots, u_{m_{\mathrm{max}}}) \equiv \mathbb{E}_{\underline{H}, \mathcal{J}}  {h}_{t,s;\epsilon,\delta}$ and pick any $u_k$. Let $u'_k \equiv (i'_k, U_k)$ be a new profile with either $i_k \neq i'_k$ or $U_k \neq U'_k$. Also let $u''_k = (i_k, 0)$ and $u''_k = (i'_k, 0)$. Note that $g(u_1, \dots, u''_k, \dots, u_{m_{\mathrm{max}}}) = g(u_1, \dots, u'''_k, \dots, u_{w_{\mathrm{max}}})$. We then have 
\begin{align*}
	& |g(u_1, \dots, u_k, \dots, u_{w_{\mathrm{max}}}) - g(u_1, \dots, u'_k, \dots, u_{w_{\mathrm{max}}}) | \\
	=\ & |g(u_1, \dots, u_k, \dots, u_{w_{\mathrm{max}}}) - g(u_1, \dots, u''_k, \dots, u_{w_{\mathrm{max}}}) \nonumber\\
	&\qquad+ g(u_1, \dots, u'''_k, \dots, u_{w_{\mathrm{max}}}) - g(u_1, \dots, u'_k, \dots, u_{w_{\mathrm{max}}}) | \\
	\leq\ & |g(u_1, \dots, u_k, \dots, u_{w_{\mathrm{max}}}) - g(u_1, \dots, u''_k, \dots, u_{w_{\mathrm{max}}})| \nonumber\\
	&\qquad+ |g(u_1, \dots, u'''_k, \dots, u_{w_{\mathrm{max}}}) - g(u_1, \dots, u'_k, \dots, u_{w_{\mathrm{max}}}) | \\
	=\ & \Big | \frac{1}{n} \mathbb{E}_{\underline{\tilde{H}}, \underline{H}, \mathcal{J}}  \ln \< e^{U_k (\sigma_{i_k} - 1)}\>_{t,s; \epsilon, \delta}  \Big | + \Big | \frac{1}{n} \mathbb{E}_{\underline{\tilde{H}}, \underline{H}, \mathcal{J}}  \ln \< e^{U'_k (\sigma_{i'_k} - 1)}\>_{t,s; \epsilon, \delta}  \Big | \\
	=\ & \Big | \frac{1}{n} \mathbb{E}_{\underline{\tilde{H}}, \underline{H}, \mathcal{J}}  \ln (1 + \< \sigma_{i_k} \>_{t,s; \epsilon, \delta}\tanh U_k  )  - \frac{1}{n} \mathbb{E}_{\underline{\tilde{H}}, \underline{H}, \mathcal{J}}  \ln (1 + \tanh U_k)  \Big| \\
	& \qquad + \Big | \frac{1}{n} \mathbb{E}_{\underline{\tilde{H}}, \underline{H}, \mathcal{J}}  \ln (1 + \< \sigma_{i'_k} \>_{t,s; \epsilon, \delta} \tanh U'_k  )  - \frac{1}{n} \mathbb{E}_{\underline{\tilde{H}}, \underline{H}, \mathcal{J}} \ln (1 + \tanh U'_k)  \Big | \\
	\leq\ & \frac{2\ln 2}{n}.
\end{align*}
McDiarmid's inequality is then used to obtain
\begin{align}
\mathbb{P}( |\mathbb{E}_{\underline{H}, \mathcal{J}}{H}_{t,s;\epsilon,\delta} - {h}_{t,s;\epsilon,\delta} | \geq \nu/3 \ \big | \ w \leq w_{\mathrm{max}}) \leq 2 \exp \Big ( - \frac{n \nu^2}{18(\ln 2)^2 \alpha K} \Big).
\label{eq:McDiarmid3}
\end{align}
Finally, we take the union bound based on \eqref{eq:Chernoff2} and \eqref{eq:McDiarmid3}:
\begin{align*}
\mathbb{P}( |\mathbb{E}_{\underline{H}, \mathcal{J}} {H}_{t,s;\epsilon,\delta} - {h}_{t,s;\epsilon,\delta}] | \geq \nu/3 ) 
	& \leq 2 \exp \Big ( - \frac{n \nu^2}{18(\ln 2)^2 \alpha K} \Big ) + \exp \Big ( -\frac{n \alpha K}{3} \min \{\gamma, \gamma^2 \} \Big ).
\end{align*}
Choosing $\nu^2 = \min \{\gamma, \gamma^2 \}$ and $C_2 = \min \{ \frac{R}{18(\ln 2)^2 K}, \frac{K}{3R} \}$, we obtain the lemma.
%

\section{Illustration of the replica formula}\label{app:plot-BEC}
Recall that the distribution of $V$ is denoted by $\sfx = x \Delta_0 + (1-x)\Delta_{\infty}, x \in [0,1]$. From \eqref{BPfp} the distribution of U is $\tilde{\sfx} = \tilde{x} \Delta_0 + (1-\tilde{x}) \Delta_{\infty}$ where $\tilde{x} = 1-(1-q)(1-x)^{K-1} \in [0,1]$. The first term of \eqref{eq:RSFreeEntropy1} can be simplified as
\begin{align*}
	& \mathbb{E} \ln \Big ( \prod_{B=1}^{l} (1 + \tanh U_B ) + \prod_{B=1}^{l} (1 - \tanh U_B ) \Big ) \nonumber \\
	=\ & \mathbb{E}_{l} \mathbb{E}_{U} \Big [ l \ln (1 + \tanh U) \Big ] + \mathbb{E}_{l} \mathbb{E}_{\underline{U}} \ln \Big ( 1 +  \prod_{B=1}^{l} \frac{1-\tanh U_B}{1+\tanh U_B} \Big ) \\
	=\ & \mathbb{E}_{l} \big [ l \big ] (1-\tilde{x}) \ln 2 + \mathbb{E}_{l} \big [ \tilde{x}^l \big ] \ln 2 \\
	=\ & \alpha K (1-q)(1-x)^{K-1} \ln 2 + e^{- \alpha K (1-q)(1-x)^{K-1}} \ln 2
\end{align*}
The remaining terms of \eqref{eq:RSFreeEntropy1} can also be simplified straightforwardly. Eventually, for BEC we can write \eqref{eq:RSFreeEntropy1} with a scalar expression:
\begin{align*}
{h}_{\mathrm{RS}}(x) = (\ln 2) \big [ e^{- \alpha K (1-q)(1-x)^{K-1}} + \alpha K (1-q) (1-x)^{K-1} - \alpha(K-1)(1-q)(1-x)^{K} - \alpha (1-q) \big ]
\end{align*}
We illustrate ${h}_{\mathrm{RS}}(x)$ with $K=3$ and $\alpha = 1/5$ in Fig.~\ref{fig:plot-BEC}.

\begin{figure}
    \centering
    \begin{subfigure}[b]{0.4\textwidth}
        \includegraphics[width=6cm]{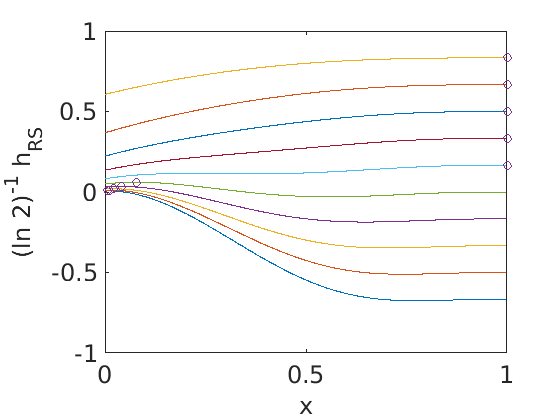}
        \label{fig:plot-BEC:h_RS}
    \end{subfigure}
    \begin{subfigure}[b]{0.4\textwidth}
        \includegraphics[width=6cm]{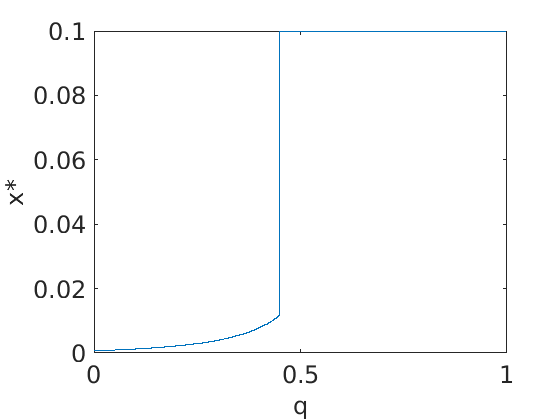}
        \label{fig:plot-BEC:x_star}
    \end{subfigure}
    \caption{Illustration of ${h}_{\mathrm{RS}}(x)$ with $K=3$ and $\alpha = 1/5$. (Left) ${h}_{\mathrm{RS}}(x)$ as a function of $x$ for $q=0, 0.1, 0.2, \dots, 0.9$. ${h}_{\mathrm{RS}}(x)$ increases with $q$ when $x$ is fixed. Circles locate the maximum of ${h}_{\mathrm{RS}}(x)$ for every $q$. (Right) The first order phase transition for $x^*(q) = \arg \max_{x} {h}_{\mathrm{RS}} ( x )$ as a function of $q$}
    \label{fig:plot-BEC}
\end{figure}

\section*{Acknowledgments}
Jean Barbier and Chun Lam Chan acknowledge the SNSF grant no. 200021-156672.
\bibliographystyle{unsrt_abbvr}

%
\end{document}